\newcommand{\commentout}[1]{}
\newcommand{\alert}[1]{\textbf{\color{red}
		[[[#1]]]}\marginpar{\textbf{\color{red}**}}\typeout{ALERT:
		\the\inputlineno: #1}}
\def\MathF{\hbox{\rm I\kern-2pt F}}
\def\MathP{\hbox{\rm I\kern-2pt P}}
\def\MathR{\hbox{\rm I\kern-2pt R}}
\def\MathZ{\hbox{\sf Z\kern-4pt Z}}
\def\MathN{\hbox{\rm I\kern-2pt I\kern-3.1pt N}}
\def\MathC{\hbox{\rm \kern0.7pt\raise0.8pt\hbox{\footnotesize I}
		\kern-4.2pt C}}
\def\MathQ{\hbox{\rm I\kern-6pt Q}}
\newcommand{\N}{\mathbb{N}}
\newcommand{\R}{\mathbb{R}}
\newcommand{\E}{{\mathbb{E}}}
\newcommand{\mommit}[1]{}
\newcommand{\namedref}[2]{\hyperref[#2]{#1~\ref*{#2}}}
\newcommand{\sectionref}[1]{\namedref{Section}{#1}}
\newtheorem{theorem}{Theorem}
\newtheorem{lemma}{Lemma}
\newtheorem{corollary}[lemma]{Corollary}
\newtheorem{remark}{Remark}
\newtheorem{fact}[theorem]{Fact}
\newtheorem{definition}{Definition}
\def\tO{\tilde{O}}
\begin{document}
\title{A Unified Framework for Hopsets and Spanners\footnote{A preliminary version of this paper was published in ESA'22 \cite{NS22}.}}
\author{Ofer Neiman\footnote{Ben-Gurion University of the Negev, Israel. E-mail: \texttt{neimano@cs.bgu.ac.il}}\qquad  Idan Shabat\footnote{Ben-Gurion University of the Negev, Israel. Email:\texttt{idansha6@gmail.com}}}
\date{}
\maketitle

\begin{abstract}
Given an undirected graph $G=(V,E)$, an {\em $(\alpha,\beta)$-spanner} $H=(V,E')$ is a sub-graph that approximately preserves distances; for every $u,v\in V$, $d_H(u,v)\le \alpha\cdot d_G(u,v)+\beta$. An {\em $(\alpha,\beta)$-hopset} is a graph $H=(V,E'')$, so that adding its edges to $G$ guarantees every pair has an $\alpha$-approximate shortest path that has at most $\beta$ edges (hops), that is, $d_G(u,v)\le d_{G\cup H}^{(\beta)}(u,v)\le \alpha\cdot d_G(u,v)$. Given the usefulness of spanners and hopsets for fundamental algorithmic tasks, several different algorithms and techniques were developed for their construction, for various regimes of the stretch parameter $\alpha$.

In this work we develop a single algorithm that can attain all state-of-the-art spanners and hopsets for general graphs, by choosing the appropriate input parameters. In fact, in some cases it also improves upon the previous best results. We also show a lower bound on our algorithm.

In \cite{BP20}, given a parameter $k$, a $(O(k^{\epsilon}),O(k^{1-\epsilon}))$-hopset of size $\tO(n^{1+1/k})$ was shown for any $n$-vertex graph and parameter $0<\epsilon<1$, and they asked whether this result is best possible. We resolve this open problem, showing that any $(\alpha,\beta)$-hopset of size $O(n^{1+1/k})$ must have $\alpha\cdot \beta\ge\Omega(k)$.

\end{abstract}

\newpage
\tableofcontents
\newpage

\section{Introduction}

Hopsets and spanners are fundamental graph theoretic structures, that have gained much attention recently \cite{KS97,C00,EP04,E04,TZ06,P10,BKMP10,B09,MPVX15,HKN16,FL16,EN16,ABP18,EN19,HP17,EGN19,BP20}. They play a pivotal role in central algorithmic applications such as approximating shortest paths, distributed computing tasks, geometric algorithms, and many more.

Given a graph $G=(V,E)$, possibly with non-negative weights on the edges $w:E\to\R$, an {\em $(\alpha,\beta)$- hopset} is a graph $H=(V,E')$ such that every pair in $V$ has an $\alpha$-approximate shortest path in $G\cup H$ with at most $\beta$ hops. That is, for all $u,v\in V$,
\[
d_G(u,v)\le d_{G\cup H}^{(\beta)}(u,v)\le \alpha\cdot d_G(u,v)~,
\]
where $d_G(u,v)$ is the distance between $u,v$ in $G$, and $d_{G\cup H}^{(\beta)}(u,v)$ stands for the length of the shortest path in $G\cup H$ between $u,v$ that has at most $\beta$ edges. The weight of an edge $(x,y)\in E'$ of $H$ is defined as $d_G(x,y)$.

An {\em $(\alpha,\beta)$-spanner} of $G$ is a sub-graph $H=(V,E')$ such that for all $u,v\in V$,
\[
d_G(u,v)\le d_H(u,v)\le \alpha\cdot d_G(u,v)+\beta~.
\]
In the case $\beta = 0$ the spanner is called {\em multiplicative}, and when $\alpha=1+\epsilon$ for some small $\epsilon>0$ it is called {\em nearly additive}. If $H$ is not a sub-graph of $G$, but rather a graph on the same vertices set $V$, then $H$ is called an {\em $(\alpha,\beta)$-emulator} of $G$.

There is seemingly a close connection between spanners, emulators and hopsets; many techniques that were developed for the construction of $(\alpha,\beta)$-spanners can produce $(\alpha,\beta)$-hopsets and emulators, and vice versa, often without any change in the algorithm (see, e.g., \cite{HP17,EN19}).

\paragraph{Multiplicative Stretch.} These spanners were introduced by \cite{PU89}, and are well-understood. For any integer parameter $k\ge 1$, any weighted graph with $n$ vertices has a $(2k-1,0)$-spanner with $O(n^{1+1/k})$ edges \cite{ADDJS93}, which is asymptotically optimal assuming Erdos' girth conjecture.  The celebrated distance oracles of \cite{TZ01} can also be viewed as a $(2k-1,0)$-spanner, and as a $(2k-1,2)$-hopset (see \cite{BP20}).

\paragraph{Nearly Additive Stretch.} For $0<\epsilon<1$, near additive $(1+\epsilon,\beta)$-spanners for unweighted graphs were introduced by \cite{EP04}, who, for any parameter $k$, devised a spanner of size $O(\beta\cdot n^{1+1/k})$ with $\beta = O\left(\frac{\log k}{\epsilon}\right)^{\log k}$. A similar result that works simultaneously for every $\epsilon$ was achieved by \cite{TZ06}, using a different parametrization of their \cite{TZ01} algorithm. The factor of $\beta$ in the size was recently improved by \cite{P08,BP20,EGN19}.
By a result of \cite{ABP18}, any such spanner of size $O(n^{1+1/k})$ must have $\beta=\Omega\left(\frac{1}{\epsilon\cdot\log k}\right)^{\log k}$. So whenever the stretch parameter $\epsilon$ is sufficiently small with respect to $1/\log k$, the result of \cite{EP04} cannot be substantially improved. There are also numerous results on purely-additive spanners, with $\alpha=1$, which will not be discussed here.

Hopsets were introduced by \cite{C00}, where $(1+\epsilon,\beta)$-hopsets of size $O(n^{1+1/k}\cdot\log n)$ and $\beta = O\left(\frac{\log n}{\epsilon}\right)^{\log k}$ were designed. This was recently improved by \cite{EN16,HP17,EN19}, adapting  the techniques of \cite{EP04,TZ06} for $(1+\epsilon,\beta)$-spanners, yielding $\beta=O\left(\frac{\log k}{\epsilon}\right)^{\log k}$ and size $O(n^{1+1/k})$.

\paragraph{Hybrid Stretch.} The lower bound of \cite{ABP18} is meaningful only when the multiplicative stretch is very close to 1. This motivates the natural question: what $\beta$ (hopbound or additive stretch) can be obtained with some large multiplicative stretch? This question was studied by \cite{EGN19,BP20}, who showed $(3+\epsilon,\beta)$-spanners and hopsets of size $O(k\cdot n^{1+1/k}\cdot\log\Lambda)$ with improved $\beta = k^{\log 3 + O(1/\epsilon)}$, where $\Lambda$ is the aspect ratio of the graph\footnote{The aspect ratio is the ratio between the largest distance to the smallest distance in the graph.} (in fact, \cite{EGN19} did not have the $\log\Lambda$ factor in the size, albeit their $\beta$ had a somewhat worse exponent).

More generally, for any $0<\epsilon<1$, \cite{BP20} devised a $(k^\epsilon,O_\epsilon(k))$-spanner of size $O_\epsilon(k\cdot n^{1+1/k})$, and a $(O(k^\epsilon),O_\epsilon(k^{1-\epsilon}))$-hopset of size $O(k^\epsilon\cdot n^{1+1/k}\cdot\log\Lambda)$.
The latter algorithm of \cite{BP20} is rather complicated, and contains a three-stage construction involving a truncated application of the \cite{TZ06} algorithm, a superclustering phase (based on the constructions of \cite{EP04}), and a multiplicative spanner built on some cluster graph.
The tightness of this $(O(k^\epsilon),O_\epsilon(k^{1-\epsilon}))$-hopset was asked as open question in \cite{BP20}.

\subsection{Our Results}

In this paper we develop a generalization of the TZ-algorithms \cite{TZ01,TZ06}, that achieves (and sometimes even improves on) all the above results on hopsets, spanners and emulators. This unifies all previous algorithms in a single  framework, and greatly simplifies the constructions for hybrid stretch spanners, emulators and hopsets. We also remove the $\log\Lambda$ factor from the size.

In addition, we affirmatively resolve the open problem of \cite{BP20} mentioned above, by proving that an $(\alpha,\beta)$-hopset of size $O(n^{1+1/k})$ must have $\alpha\cdot\beta\ge \Omega(k)$. This lower bound asymptotically matches the upper bound of $(k^\epsilon,O_\epsilon(k^{1-\epsilon}))$-hopset by \cite{BP20} for every $0<\epsilon<1$.
We remark that for $(\alpha,\beta)$-spanners, there is a better lower bound of $\alpha+\beta\ge\Omega(k)$ (since this spanner is in particular a $(\alpha+\beta,0)$-spanner). However, this lower bound cannot hold for hopsets, as indicated by the existence of $(O(k^\epsilon),O(k^{1-\epsilon}))$-hopsets for $\epsilon=1/2$, say.

In addition, we show that whenever our algorithm produces a hopset of size $O(n^{1+1/k})$ with stretch $\alpha$, it must have a hopbound of $\beta=\Omega(\frac{1}{\alpha^2}k^{1+1/2\log\alpha})$. As our algorithm generalizes all previous constructions, we conclude that resolving the question whether there exists an $(O(1), O(k))$-hopset of size $O(n^{1+1/k})$ - will have to rely on novel techniques.

Our results arising from the unified framework for hopsets and emulators are summarized in Table~\ref{table:hopsets}, and for spanners in Table~\ref{table:spanners}.

\begin{table}[ht]
\begin{tabular}{|l|l|l|}
\hline
Stretch &
Hopbound / Additive stretch &
Size  \\ \hline
\begin{tabular}[c]{@{}l@{}}
$1+\epsilon$ \end{tabular} &
$O(\frac{\log k}{\epsilon})^{\log k}$ &
$O(n^{1+\frac{1}{k}})$ \\ \hline
\begin{tabular}[c]{@{}l@{}}
$3+\epsilon$ \end{tabular} &
$k^{\log (3 + 16/\epsilon)}$ &
$O(n\log k+n^{1+\frac{1}{k}})$\\ \hline
\begin{tabular}[c]{@{}l@{}}
$O(c)$ \end{tabular} &
$k^{1 + 2/\ln c}$ &
$O(n\log k+n^{1+\frac{1}{k}})$\\ \hline
\begin{tabular}[c]{@{}l@{}}
$O(k^\epsilon)$\end{tabular} &
\begin{tabular}[c]{@{}l@{}}
$O_\epsilon(k^{1-\epsilon})~~$ / $~~O_\epsilon(k)$ \end{tabular} &
$O_\epsilon(k^\epsilon\cdot n^{1+\frac{1}{k}})$ \\ \hline
$2k-1$ &
$2~~$ / $~~0$&
$O(k\cdot n^{1+\frac{1}{k}})$  \\ \hline

\end{tabular}
\caption{Our results on $(\alpha,\beta)$-hopsets and $(\alpha,\beta)$-emulators for $n$-vertex weighted graphs, given $0<\epsilon<1$ and integer $k\ge 1$, for various stretch regimes.} \label{table:hopsets}
\end{table}

\begin{table}[ht]
\begin{tabular}{|l|l|l|}
\hline
Stretch &
Additive stretch &
Size  \\ \hline
\begin{tabular}[c]{@{}l@{}}
$3+\epsilon$ \end{tabular} &
$k^{\log (3 + 8/\epsilon)}$ &
$O(k^{\log (3 + 8/\epsilon)}\cdot n^{1+\frac{1}{k}})$\\ \hline
\begin{tabular}[c]{@{}l@{}}
$3+\epsilon$ \end{tabular} &
$k^{\log_{4/3} (5 + 16/\epsilon)}$ &
$O(n\log k+n^{1+\frac{1}{k}})$\\ \hline
\begin{tabular}[c]{@{}l@{}}
$O(c)$ \end{tabular} &
$k^{1 + 2/\ln c}$ &
$O(k^{1 + 2/\ln c}\cdot n^{1+\frac{1}{k}})$\\ \hline
\begin{tabular}[c]{@{}l@{}}
$O(k^\epsilon)$\end{tabular} &
\begin{tabular}[c]{@{}l@{}}
$O_\epsilon(k)$\end{tabular} &
$O_\epsilon(k^\epsilon\cdot n^{1+\frac{1}{k}})$ \\ \hline

\end{tabular}
\caption{Our results on $(\alpha,\beta)$-spanners for $n$-vertex weighted graphs, given $0<\epsilon<1$ and integer $k\ge 1$, for various stretch regimes.} \label{table:spanners}
\end{table}

\subsection{Our Techniques}

The lower bound on the triple tradeoff between stretch, hopbound and size of $(\alpha,\beta)$-hopsets, showing that $\alpha\cdot\beta=\Omega(k)$ whenever the size is $O(n^{1+1/k})$, uses the existence of $n^{1/g}$-regular graphs with girth $g$.  The basic idea is simple: locally (within distance less than $g/2$) the graph looks like a tree, so when considering short enough paths, of length less than $g/\alpha$, there are no alternative paths with stretch at most $\alpha$. This means that any hopset edge $(u,v)$ can only be useful to pairs whose shortest path is ``nearby" to $u,v$. Making this intuition precise, and defining what exactly is ``nearby", requires some careful counting arguments.

Before discussing our general algorithm for hopsets, emulators and spanners, let us review the previous TZ algorithms:
Let $G=(V,E)$ be a (possibly weighted) graph with $n$ vertices, and fix an integer parameter $k$. The algorithms of \cite{TZ01,TZ06} randomly sample a sequence of sets $V=A_0\supseteq A_1\supseteq ...\supseteq  A_F$, for some $F$, where each $A_{i+1}$, $0\le i<F$, is sampled by including each vertex from $A_i$ independently with some predefined probability. Then they define for each $v\in V$ its $i$-th {\em pivot} $p_i(v)$ as the closest vertex in $A_i$ to $v$, and the $i$-th  {\em bunch} as $B_i(v)=\{u\in A_i~:~ d(u,v)<d(v,p_{i+1}(v))\}$. The hopset or spanner consists of all edges (or shortest paths) between each $v$ and some of its bunches.

\paragraph{Linear-TZ for multiplicative spanners, emulators and hopsets.}
In \cite{TZ01}, the sampling probabilities of each $A_{i+1}$ from $A_i$ were uniform $n^{-1/k}$. Thus, the expected size of each set $A_i$ in the hierarchy was $n^{1-\frac{i}{k}}$. This means that the exponent of $n$ in $|A_i|$ is {\em linearly} decreasing in $i$, hence we call this construction a {\em linear-TZ}. In this version, each vertex $v\in V$ connects to vertices in $B_i(v)$ for all $0\le i\le F$. The analysis gives a spanner or emulator (or a hopset with $\beta=2$) with multiplicative stretch $2k-1$.

\paragraph{Exponential-TZ for near-additive spanners, emulators and hopsets.}
In \cite{TZ06}, the sampling probabilities of each $A_{i+1}$ from $A_i$ were roughly $n^{-2^i/k}$. This means that the expected size of the set $A_i$ is $n^{1-\frac{2^i-1}{k}}$, so we naturally call this an {\em exponential-TZ}.
As the probabilities are much lower here, the bunches are larger, so vertices in $A_i\setminus A_{i+1}$ can only connect to their $i$-th bunch (in order to keep the size under control).
This version can provide spanners/emulators/hopsets with a multiplicative stretch that is arbitrarily close to $1$ (namely, $1+\epsilon$ for any small $\epsilon>0$).

\paragraph{Our algorithm.}
In this work, we devise the following generalization of both of these algorithms. Our algorithm expects as a parameter a function $f:\N\to\N$ that determines, for each level $i$, the highest bunch-level that vertices in $A_i\setminus A_{i+1}$ will connect to (in the linear-TZ we have $f(i)=F$, while in the exponential-TZ, $f(i)=i$ for all $i$). This function $f$ implies what should the sampling probability be for each level $i$, in order to keep the total size of the spanner/emulator/hopset roughly $O(n^{1+1/k})$. We denote these probabilities by $n^{-\lambda_i/k}$, for parameters $\lambda_0,\lambda_1,...,\lambda_{F-1}$. The number of sets $F$ is in turn determined by these $\lambda_i$ (roughly speaking, it is when we expect $A_F$ to be empty). 

As this is a generalization of the algorithms of \cite{TZ01,TZ06}, clearly it achieves their results. One of our main technical contributions is showing that an {\em interleaving} of the linear-TZ and exponential-TZ probabilities, yields a spanner/emulator/hopset with hybrid stretch. This means that we divide the integers in $[F]=\{1,2,...,F\}$ to $F/c$ intervals, so that the $\lambda_i$'s are the same within each interval, and decays exponentially between intervals. The parameter $c$ controls the multiplicative stretch.

Our analysis combines ideas from previous works \cite{TZ06,EGN19,BP20}, with some novel insights that simplify some of the previously used arguments. In particular, our analysis of the hopset is not scaled-based, as it was in \cite{BP20}, which enables us to remove the $\log\Lambda$ factors from the size. In addition, our $(3+\epsilon,\beta)$-hopsets combine the best attributes of the hopsets of \cite{EGN19} and \cite{BP20}: they have no dependence on $\log\Lambda$ and works for all $\epsilon$ simultaneously like \cite{EGN19}, and have the superior $\beta$ like \cite{BP20}.

\subsection{Organization}

In \sectionref{General Lower Bound for Hopsets} we show our lower bound for hopsets. In \sectionref{Generalized TZ Construction} we describe our general algorithm for hopsets and spanners, and provide an analysis of its stretch and hopbound in \sectionref{Stretch and Hopbound Analysis Method}. In \sectionref{TZ Spanners} we show that our algorithm can also yield the state-of-the-art spanners and emulators with hybrid stretch. Finally, in \sectionref{A Lower Bound for the TZ Hopset} we present a lower bound on our algorithm.

\section{Lower Bound for Hopsets} \label{General Lower Bound for Hopsets}
In this section we build a graph $G$, such that every hopset for $G$ with stretch $\alpha$ and size $O(n^{1+\frac{1}{k}})$ must have a hopbound of at least $\approx\frac{k}{\alpha}$. $G$ has high girth (the size of the smallest simple cycle), so that any short enough path in $G$ is a unique shortest path between its ends, and every other path between these ends is much longer. Then, if a hopset has a small enough hopbound, every such path needs to ``use" a hopset edge in its hopset-path. Since there are many such paths, and each hopset edge can't be used by many paths, the hopset has to contain many edges.

For the construction, we use the following result, 
by Lubotzky, Phillips and Sarnak \cite{LPS88}:

\begin{theorem}[\cite{LPS88}] \label{thm:regGraph}
Given an integer $\gamma\geq1$, there are infinitely many integers $n\in\mathbb{N}$ such that there exists a $(p+1)$-regular graph $G=(V,E)$ with $|V|=n$ and girth $\geq\frac{4}{3}\gamma(1-o(1))$, where $p=D\cdot n^\frac{1}{\gamma}$, for some universal constant $D$.
\end{theorem}

Now, fix $\alpha,\gamma\geq1$ and a large enough $n$ as above, and let $G=(V,E)$ be the matching $(p+1)$-regular graph from the theorem. We know that the girth of $G$ is $\geq\frac{4}{3}\gamma(1-o(1))$, so we can assume that this girth is at least $\gamma$. We look at paths in $G$ of distance $\delta\coloneqq\lfloor\frac{\gamma-1}{\alpha+1}\rfloor$. For a path $P$, denote by $|P|$ its length.

\begin{lemma}
For every path $P$ between some $u,v\in V$, such that $|P|=\delta$, $P$ is the unique shortest path between $u,v$.
Moreover, for any other path $P'$ between $u$ and $v$, $|P'|>\alpha|P|=\alpha\delta$
\end{lemma}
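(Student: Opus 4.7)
The plan is to argue by contradiction / via a simple cycle-extraction from the pair of paths. Suppose $P'$ is a path between $u$ and $v$ distinct from $P$. Walking from $u$ along $P$ and $P'$ simultaneously, let $u'$ be the last vertex at which the two paths still agree, and let $v'$ be the next vertex (after $u'$) at which they meet again. By construction, the subpath $P[u',v']$ of $P$ and the subpath $P'[u',v']$ of $P'$ share only their endpoints, so their concatenation is a simple closed walk, i.e.\ a simple cycle $C$ in $G$.

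Since $G$ has girth at least $\gamma$, we get
\[
|P[u',v']| + |P'[u',v']| \;=\; |C| \;\ge\; \gamma.
\]
Because $P[u',v']$ is a subpath of $P$, we have $|P[u',v']|\le |P|=\delta$, so
\[
|P'| \;\ge\; |P'[u',v']| \;\ge\; \gamma - |P[u',v']| \;\ge\; \gamma - \delta.
\]
Now I would plug in $\delta=\lfloor(\gamma-1)/(\alpha+1)\rfloor$: since $\delta \le (\gamma-1)/(\alpha+1) < \gamma/(\alpha+1)$, we get $(\alpha+1)\delta < \gamma$, i.e.\ $\gamma-\delta > \alpha\delta$. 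Combining with the previous display yields $|P'| > \alpha\delta = \alpha|P|$, which is exactly the second assertion of the lemma.

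Uniqueness of $P$ as a shortest path is then immediate: any other $u$--$v$ path $P'$ satisfies $|P'|>\alpha\delta\ge\delta=|P|$ (using $\alpha\ge 1$), so no competing path can even tie $P$ in length.

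The main thing to be careful about is the cycle-extraction step, i.e.\ verifying that $u'$ and $v'$ are well-defined and that the two subpaths between them are internally vertex-disjoint, so that a \emph{simple} cycle of length at least $\gamma$ really does arise (otherwise the girth bound cannot be applied). Everything after that is an arithmetic manipulation of the definition of $\delta$, so I expect no further obstacle.
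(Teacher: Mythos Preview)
Your proof is correct and follows essentially the same approach as the paper: both extract a simple cycle from $P\cup P'$ and invoke the girth bound. The paper phrases it as a contradiction (assume $|P'|\le\alpha\delta$, then the cycle has length $\le(\alpha+1)\delta\le\gamma-1$), while you run the same inequality directly; the content is identical.
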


\begin{proof}
It is enough to prove that for any $P'\neq P$ between $u$ and $v$, $|P'|>\alpha\delta$. If it is not the case, then there must be a simple cycle in $P\cup P'$, and its length is bounded by $|P|+|P'|\leq\delta+\alpha\delta=(\alpha+1)\delta\leq\gamma-1$, in contradiction to the girth of $G$ being $\geq\gamma$.
\end{proof}

Denote the set of all $\delta$-paths in $G$ by $Q_\delta=\{P_{u,v}\;|\;d(u,v)=\delta\}$ (In general, we denote by $P_{u,v}$ the shortest path between $u$ and $v$).

\begin{lemma} \label{lemma:ManyPaths}
$|Q_\delta|\geq\frac{1}{2}np^\delta$.
\end{lemma}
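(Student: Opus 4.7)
The plan is to count ordered non-backtracking walks of length $\delta$ in $G$ and then use the previous lemma to convert each such walk into a distinct element of $Q_\delta$. Since the girth of $G$ is at least $\gamma$, any non-backtracking walk of length less than $\gamma$ must be a simple path (a repeated vertex along a non-backtracking walk of length $\ell < \gamma$ would yield a closed non-backtracking walk, and therefore a cycle, of length less than $\gamma$, contradicting the girth bound). Since $\delta \le \tfrac{\gamma-1}{\alpha+1} < \gamma$, every non-backtracking walk of length $\delta$ in $G$ is a simple path, and by the previous lemma it is the unique shortest path between its endpoints, hence a member of $Q_\delta$.

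Next I would count these walks. For each starting vertex $u \in V$, the first step of the walk may be taken along any of the $p+1$ edges incident to $u$, and each of the remaining $\delta-1$ steps has exactly $p$ choices (all neighbors except the one we just came from, using $(p+1)$-regularity). Hence the number of non-backtracking walks of length $\delta$ starting at $u$ is
\[
(p+1) \cdot p^{\delta-1} \;\ge\; p^{\delta}.
\]
Summing over all $n$ choices of starting vertex yields at least $n p^{\delta}$ ordered non-backtracking walks of length $\delta$ in $G$.

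Finally, each path in $Q_\delta$ corresponds to exactly two such ordered walks, obtained by traversing it in either direction. Therefore
\[
|Q_\delta| \;\ge\; \frac{n p^{\delta}}{2},
\]
which is the desired bound. I don't expect a serious obstacle here; the only subtle point is justifying that non-backtracking walks of length $\delta$ are automatically simple, which follows cleanly from the girth hypothesis and the choice of $\delta$.
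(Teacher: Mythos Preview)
Your proposal is correct and follows essentially the same approach as the paper: both arguments count, for each vertex $u$, at least $p^\delta$ length-$\delta$ paths emanating from $u$ (you via non-backtracking walks, the paper via the BFS tree of depth $\delta$, which is a genuine tree by the girth bound), and then divide by two for the double-count. Your justification that non-backtracking walks of length $\delta<\gamma$ are simple is the exact analogue of the paper's observation that the depth-$\delta$ BFS tree has no non-tree edges.
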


\begin{proof}
Given a vertex $u\in V$, denote its BFS tree, up to the $\delta$'th level, by $T$. Since $G$'s girth is $>(\alpha+1)\delta$, there are no edges between the vertices of $T$, apart from the edges of $T$ itself. That means that each vertex of $T$ has at least $p$ children at the next level, so we have at least $p^\delta$ leaves in $T$. Each of these leaves is a vertex of distance $\delta$ from $u$, and is connected to $u$ with a $\delta$-path. When summing this quantity over all the vertices $u\in V$, we count every path twice, so we get at least $\frac{1}{2}\sum_{u\in V}p^\delta=\frac{1}{2}np^\delta$ paths of length $\delta$.
\end{proof}

We are now ready to prove the main theorem:

\begin{theorem}
For every positive integer $k$, a real number $\alpha>0$, a constant $C>0$ and for infinitely many integers $n$, there exists a graph $G$ with $n$ vertices such that every hopset $H$ for $G$ with size $\leq Cn^{1+\frac{1}{k}}$ and stretch $\leq\alpha$, $H$ has a hopbound $\beta\geq\lfloor\frac{k-2}{\alpha+1}\rfloor$.
\end{theorem}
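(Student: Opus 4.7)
The plan is to apply \theoremref{theorem:regGraph} with girth parameter $\gamma = k-1$, producing infinitely many $n$ for which a $(p+1)$-regular graph $G$ on $n$ vertices exists with girth $\ge k-1$ and $p = Dn^{1/(k-1)}$; note that $p$ is slightly larger than $n^{1/k}$. I will set $\delta := \lfloor (k-2)/(\alpha+1)\rfloor$, which lands exactly on the exponent stated in the theorem. The unnamed lemma preceding \lemmaref{lemma:ManyPaths} then guarantees every $\delta$-path is the unique shortest path between its endpoints and any other $u$--$v$ path has length strictly greater than $\alpha\delta$, while \lemmaref{lemma:ManyPaths} supplies at least $\tfrac12 np^\delta$ such paths. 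For contradiction, I will assume a hopset $H$ with $|H|\le Cn^{1+1/k}$, stretch at most $\alpha$, and hopbound $\beta \le \delta-1$.

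The key structural step will be to prove that for every pair $(u,v)$ with $d_G(u,v)=\delta$, any valid hopset path $\pi$ from $u$ to $v$ must use at least one hopset edge and that, once each hopset edge on $\pi$ is expanded into its underlying $G$-shortest path, the resulting walk $W$ in $G$ equals $P_{u,v}$. The first half is immediate, since $\pi$ has at most $\beta<\delta$ edges of weight $\ge 1$ but total weight at least $\delta$, so some edge has weight $\ge 2$ and hence must lie in $H$. The second half uses that $W$ has length at most $\alpha\delta$, which is strictly less than $\gamma$: any repeated vertex in $W$ would create a cycle of length $\le\alpha\delta<\gamma$, contradicting the girth, so $W$ is a simple $u$--$v$ path, and the uniqueness lemma then forces $W=P_{u,v}$. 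Consequently, every hopset edge $(x,y)$ used by $\pi$ has both endpoints on $P_{u,v}$, at distance $\ell := d_G(x,y)$ apart along this path.

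The final step is a double-counting argument. I will charge each $\delta$-pair to a single hopset edge on its hopset path. For a fixed hopset edge $(x,y)$ with $\ell = d_G(x,y)$, a charged pair $(u,v)$ is specified by an offset $a=d_G(u,x)\in\{0,1,\ldots,\delta-\ell\}$ (setting $b := \delta-\ell-a$), together with a choice of $u$ at distance $a$ from $x$ in the BFS branch not containing $y$ and a choice of $v$ at distance $b$ from $y$ in the branch not containing $x$. Since $2\delta<\gamma$, these BFS trees are honest trees up to depth $\delta$, contributing at most $p^a$ and $p^b$ choices respectively, giving at most $(\delta-\ell+1)p^{\delta-\ell} \le \delta p^{\delta-1}$ pairs per hopset edge. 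Combining with the pair count yields $|H|\cdot \delta p^{\delta-1} \ge \tfrac12 n p^\delta$, i.e., $|H| = \Omega(np/\delta) = \Omega(n^{1+1/(k-1)}/\delta)$, which exceeds $Cn^{1+1/k}$ for all sufficiently large $n$ among those supplied by \theoremref{theorem:regGraph}. The main obstacle is the structural claim that the expanded walk coincides with $P_{u,v}$; it is precisely what couples girth, stretch and hop-distance tightly, and it is the reason $\gamma$ must be taken to be $k-1$ rather than $k$ in order to land on $\lfloor(k-2)/(\alpha+1)\rfloor$.
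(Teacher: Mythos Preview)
Your overall strategy matches the paper's: take the high–girth regular graph with $\gamma=k-1$, set $\delta=\lfloor(k-2)/(\alpha+1)\rfloor$, invoke \lemmaref{lemma:ManyPaths} for a supply of $\tfrac12 np^{\delta}$ many $\delta$-paths, and double-count against hopset edges. The gap is in your structural claim that the expanded walk $W$ coincides with $P_{u,v}$. The justification ``any repeated vertex in $W$ would create a cycle of length $\le\alpha\delta<\gamma$'' is not valid, because a repeated vertex can come from pure backtracking, which produces no cycle at all. Concretely: pick an internal vertex $a$ of $P_{u,v}$ and a neighbour $x\notin P_{u,v}$ (one exists since $G$ is $(p+1)$-regular and locally a tree), and suppose the hopset contains the two edges $(u,x)$ and $(x,v)$. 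Then $\pi=u\to x\to v$ is a $2$-hop path of weight $d(u,a)+1+1+d(a,v)=\delta+2\le\alpha\delta$ whenever $\delta\ge 2/(\alpha-1)$, yet its expansion is
\[
W:\quad u\to\cdots\to a\to x\to a\to\cdots\to v,
\]
in which $a$ repeats but the ``cycle'' is just the edge $(a,x)$ traversed twice. Here \emph{neither} hopset edge has both endpoints on $P_{u,v}$, so your charging scheme and the per-edge bound $\delta p^{\delta-1}$ are unfounded.

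The paper avoids this by proving the weaker (and correct) claim that some hopset edge $(x,y)$ used by $\pi$ has $P_{x,y}$ sharing an \emph{edge} with $P_{u,v}$: since $\beta<\delta$, some edge $(a,b)\in P_{u,v}$ is absent from $\pi$; if it were also absent from every detour $P_{x,y}$, then $(a,b)$ together with the walk $a\to u\xrightarrow{\hat P} v\to b$ would contain a simple cycle of length at most $(\alpha+1)\delta\le\gamma-1$, contradicting the girth. With this weaker incidence, the per-edge bound becomes $\alpha\delta^2 p^{\delta-1}$ (at most $\alpha\delta$ edges on $P_{x,y}$, each lying on at most $\delta p^{\delta-1}$ many $\delta$-paths), which still yields $|H|\ge \frac{D}{2\alpha\delta^2}\,n^{1+1/(k-1)}>Cn^{1+1/k}$ for large $n$ and completes the contradiction.
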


\begin{proof}
For $\alpha,n$ and a fixed $\gamma\geq1$ that will be chosen later, let $G=(V,E)$ be the $(p+1)$-regular graph from Theorem~\ref{thm:regGraph} ($|V|=n$, girth $\geq\gamma$ and $p=D\cdot n^\frac{1}{\gamma}$). Define $\delta,Q_\delta$ the same way as above.

Let $H$ be an $(\alpha,\beta)$-hopset for $G$ with size $\leq Cn^{1+\frac{1}{k}}$, where $\beta<\delta$. For $e=(x,y)\in H$, we denote the weight of $e$, which is defined to be the distance $d(x,y)$, by $w(e)$ ($d(x,y)$ is the distance in the graph $G$. We omit the subscript from $d_G(u,v)$ for brevity). To formalize our next arguments, we think of a bipartite graph $(A,B,\hat{E})$, where $A=Q_\delta$, $B=\{e\in H\;|\;w(e)\leq \alpha\delta\}$ and $\hat{E}=\{(P,(x,y))\in A\times B\;|\;P\cap P_{x,y}\neq\emptyset\}$. We prove the following two properties of this graph ($deg_{\hat{E}}$ denotes the degree of a vertex in this graph):
\begin{enumerate}
    \item $\forall_{P\in A}\;deg_{\hat{E}}(P)\geq1$,
    \item $\forall_{e\in B}\;deg_{\hat{E}}(e)\leq \alpha\delta^2n^\frac{\delta-1}{\gamma}$.
\end{enumerate}

For (1), we need to show that for every $u,v\in V$ such that $d(u,v)=\delta$, there is a $(x,y)\in H$ such that $P_{u,v}\cap P_{x,y}\neq\emptyset$ and $w(x,y)\leq \alpha\delta$. Let $P\subseteq G\cup H$ be the shortest path from $u$ to $v$, that has at most $\beta$ edges, and let $\hat{P}$ be the ``translation" of $P$ to a path in $G$. That is, $\hat{P}$ is the same path as $P$, with every $H$-edge replaced by the shortest path in $G$ that connects its ends - we call such a shortest path a \textit{detour} of $\hat{P}$. By the hopset property: $|\hat{P}|=w(P)=d_{G\cup H}^{(\beta)}(u,v)\leq\alpha\cdot d(u,v)=\alpha\delta$.

Notice that since $\beta<\delta$, and $P$ has at most $\beta$ edges, there must be some edge $(a,b)\in P_{u,v}$ such that $(a,b)\notin P$. Seeking contradiction, assume that for every $H$-edge $(x,y)$ that $P$ uses, $P_{u,v}\cap P_{x,y}=\emptyset$. Then, since $(a,b)\in P_{u,v}$, we get that $(a,b)\notin P_{x,y}$ for every $(x,y)\in H$ that $P$ uses, i.e. $(a,b)$ is not on any detour of $\hat{P}$. So we now know that $(a,b)$ cannot be in $\hat{P}$; not on its detours and not on its $G$-edges (which are the edges of $P$). See figure \ref{fig:detours} for visualization.

But then, there are two edges-disjoint paths that connects $a,b$ in $P_{u,v}\cup\hat{P}$: One is the edge $(a,b)$ itself, and the other is
\[a\xrightarrow{P_{a,u}}u\xrightarrow{\hat{P}}v\xrightarrow{P_{v,b}}b\]
(the paths $P_{a,u},P_{v,b}$ are subpaths of $P_{u,v}$ and do not contain $(a,b)$). The union of these two paths contains a simple cycle of size at most $|P_{u,v}|+|\hat{P}|\leq\delta+\alpha\delta=(\alpha+1)\delta\leq\gamma-1$, in contradiction to the girth of $G$.

Therefore, there must be an edge $(x,y)\in H$ that $P$ uses such that $P_{u,v}\cap P_{x,y}\neq\emptyset$. Since $w(P)\leq\alpha\delta$ and $(x,y)\in P$, we also have that $w(x,y)\leq\alpha\delta$.

\begin{center}
\begin{figure}[ht]
    \centering
    \includegraphics[width=12cm, height=3cm]{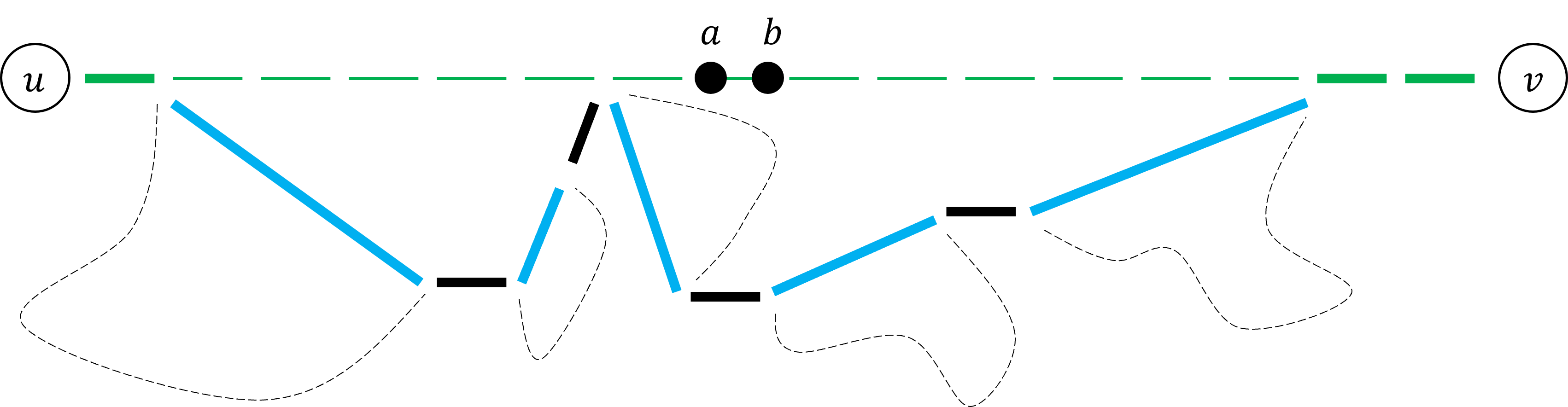}
    \caption{$P_{u,v}$ is marked by green edges. Every edge of the hopset $H$ is colored in blue, and the shortest path of $G$ connecting its ends is marked by a dashed line. The edges of the path $P\subseteq G\cup H$ are marked by a thick line, and we assume by negation that its detours are disjoint from $P_{u,v}$. Then, since also $P$ doesn't use the edge $(a,b)$ of $P_{u,v}$, there must be a simple cycle with size $\leq|P_{u,v}|+|\hat{P}|$.}
    \label{fig:detours}
\end{figure}
\end{center}

For (2), given $(x,y)\in H$ such that $w(x,y)\leq \alpha\delta$, we need to bound the number of pairs $u,v\in V$ such that $d(u,v)=\delta$ and $P_{u,v}\cap P_{x,y}\neq\emptyset$. Let $(a,b)\in P_{x,y}$. Every path of length $\delta$ that passes through $(a,b)$ is a concatenation of a path of length $i$ that ends in $a$, the edge $(a,b)$ and a path of length $\delta-1-i$ from $b$, for some $i\in[0,\delta-1]$. Fixing $i$, we consider the BFS trees $T_a, T_b$ of $a,b$ respectively, up to the $i$'th and $(\delta-1-i)$'th level respectively. Since the degree of any vertex in $G$ is $p+1$, $T_a$ contains at most $p^i$ leaves, and $T_b$ contains at most $p^{\delta-1-i}$ leaves. Therefore, the number of concatenations of paths as above is bounded by:

\[\sum_{i=0}^{\delta-1}p^i\cdot p^{\delta-1-i}=\sum_{i=0}^{\delta-1}p^{\delta-1}=\delta p^{\delta-1}~.\]

Since $P_{x,y}$ contains at most $\alpha\delta$ edges, we get that the number of paths $P_{u,v}$ such that $P_{u,v}\cap P_{x,y}\neq\emptyset$ and $|P_{u,v}|=\delta$, is bounded by $\alpha\delta\cdot\delta p^{\delta-1}=\alpha\delta^2p^{\delta-1}$.

Finally, using the two properties of the bipartite graph, we count its edges:
\[\frac{1}{2}np^\delta\stackrel{\text{Lemma \ref{lemma:ManyPaths}}}{\leq}|Q_\delta|=|A|\stackrel{\text{(1)}}{\leq}\sum_{P\in A}deg_{\hat{E}}(P)=|\hat{E}|=\sum_{e\in B}deg_{\hat{E}}(e)\stackrel{\text{(2)}}{\leq}|B|\alpha\delta^2p^{\delta-1}\leq|H|\alpha\delta^2p^{\delta-1}~.\]

Rearranging this inequality, we get
\[|H|\geq\frac{1}{2\alpha\delta^2}np=\frac{D}{2\alpha\delta^2}n\cdot n^\frac{1}{\gamma}=\frac{D}{2\alpha\delta^2}n^{1+\frac{1}{\gamma}}~.\]

Recall that $|H|\leq Cn^{1+\frac{1}{k}}$, so when choosing large enough $n$, it must be that $k\leq\gamma$.

Summarizing our proof so far, we showed that for fixed $\gamma\geq1$, $\alpha>0$ and a constant $C$, there is a graph $G$ such that every $(\alpha,\beta)$-hopset $H$ for $G$, with size $\leq Cn^{1+\frac{1}{k}}$, either satisfies $\beta\geq\delta$, or satisfies $k\leq\gamma$.

Choose $\gamma=k-1$. Now the corresponding graph $G$ has the property that every $(\alpha,\beta)$-hopset $H$ for $G$, with size $\leq Cn^{1+\frac{1}{k}}$, must have $\beta\geq\delta$. By $\delta$'s definition:
\[\beta\geq\delta=\lfloor\frac{\gamma-1}{\alpha+1}\rfloor=\lfloor\frac{k-2}{\alpha+1}\rfloor~.\]

\end{proof}

\section{A Unified Construction of Hopsets} \label{Generalized TZ Construction}
Let $G=(V,E)$ be a weighted undirected graph.
Our main construction depends on the choice of several parameters, whose roles will be discussed below.
\begin{enumerate}
    \item A positive integer $k\geq1$.
    \item A positive integer $F\geq1$.
    \item A non-decreasing function $f:\mathbb{N}\rightarrow\mathbb{N}$ such that $i\leq f(i)\leq F$ for every $0\le i\le F$.
    \item A sequence of non-negative numbers $\{\lambda_j\}_{j=0}^{F-1}$.
\end{enumerate}

The resulting construction is a hopset $H(k,f,\{\lambda_j\},F)$.
In the definition below, notice that this construction is actually not deterministic, but is created using some random choices.

The construction starts by creating a sequence of sets $V=A_0\supseteq A_1\supseteq ... \supseteq A_F$. The set $A_{j+1}$ is defined by selecting each vertex from $A_j$ independently with probability $\frac{1}{2}n^{-\frac{\lambda_j}{k}}$ (recall that $n=|V|$). The set $A_F$ is defined to be the empty set $A_F=\emptyset$. 

Given this sequence, we define some useful notations for every vertex $u\in V$ and $0\leq j\leq F-1$.
\begin{enumerate}
    \item Denote by $i(u)$ the {\em level} of $u$, which is the only $i$ such that $u\in A_i\setminus A_{i+1}$. 
    \item The $j$'th {\em pivot} of $u$, $p_j(u)$, is the vertex of $A_j$ which is the closest to $u$.
    \item The $j$'th {\em bunch} of $u$ is the set $B_j(u)=\{v\in A_j\;|\;d(u,v)<d(u,p_{j+1}(u))\}$. \newline
    (Whenever $A_{j+1}=\emptyset$, then $p_{j+1}(u)$ is undefined, and we say that $d(u,p_{j+1}(u))=\infty$, so $B_{j}(u)=A_j$).
\end{enumerate}

\begin{definition}
\[H(k,f,\{\lambda_j\},F)=\bigcup_{u\in V}\bigcup_{j=0}^{F-1}\{(u,p_j(u))\}\cup\bigcup_{u\in V}\bigcup_{j=i(u)}^{f(i(u))}\{(u,v)\;|\;v\in B_j(u)\}~.\]
\end{definition}

In words, $H(k,f,\{\lambda_j\},F)$ consists of all the edges of the form $(u,p_j(u))$, for every $u$ and $j$, and of all the edges $(u,v)$ such that $v\in B_j(u)$ for some $j\in[i(u),f(i(u))]$.

For clarity, we briefly summarize the role of each parameter in $H(k,f,\{\lambda_j\},F)$:
\begin{enumerate}
    \item The parameter $k$ controls the size of the hopset, which (under certain conditions) is expected to be roughly $O(n^{1+\frac{1}{k}})$.
    \item The function $f$ determines, for each $u\in V$, the highest index $j=f(i(u))$, such that edges of the form $(u,v)$, where $v\in B_j(u)$, are added to the hopset. Notice that $B_j(u)=\emptyset$ for $j<i(u)$, since in this case $p_{j+1}(u)=u$. Therefore, the relevant range of indexes for $u$ is $[i(u),f(i(u))]$, and this is the reason $f$ has to satisfy $f(i)\geq i$ for every $i$.
    \item The sequence $\{\lambda_j\}$ controls the sampling probabilities for the sets $\{A_j\}$: The probability of some $u\in A_j$ to be in $A_{j+1}$ is $\frac{1}{2}n^{-\frac{\lambda_j}{k}}$. 
    \item The parameter $F$ is the index of the last set in the sequence $\{A_j\}$, which is the empty set $\emptyset$.
\end{enumerate}

Throughout this paper, we use the following notation:
\[f^{-1}(j)=\min\{i\;|\;j\leq f(i)\}~.\]

The following lemma bounds the expected size of our hopset, given certain conditions on the parameters.
\begin{lemma} \label{lemma:HkfESize}
Suppose that the parameters $k,f,\{\lambda_j\},F$ satisfy
\begin{enumerate}
    \item $\sum_{j<F}\lambda_j\geq k$
    \item $\forall_j\;\lambda_j\leq1+\sum_{l<f^{-1}(j)}\lambda_l$
\end{enumerate}
Then, $\E[|H(k,f,\{\lambda_j\},F)|]=O(Fn+\max_i(f(i)-i+1)\cdot n^{1+\frac{1}{k}})$. 
\end{lemma}

\begin{proof}


Note that by the definition of the sets $A_i$, for $0\leq i\leq F-1$, every vertex of $V$ is contained in $A_i$ independently with probability 
\[\prod_{j<i}\left(\frac{1}{2}n^{-\frac{\lambda_j}{k}}\right)=\frac{1}{2^i}n^{-\frac{1}{k}\sum_{j<i}\lambda_j}~.\]
Thus, $\E[|A_i|]=\frac{1}{2^i}n^{1-\frac{1}{k}\sum_{j<i}\lambda_j}$ for every $0\leq i\leq F-1$.

Now, fix some $u\in V$ and $j$ such that $i(u)\leq j<F-1$. We show that the expected size of $B_j(u)$ is at most $2n^\frac{\lambda_j}{k}-1$. Let $u_1,u_2,u_3,...$ be all of the vertices of $A_j$, ordered by their distance from $u$. If $u_l$ is the first vertex in this list such that $u_l\in A_{j+1}$, then $B_j(u)\subseteq\{u_1,u_2,...,u_{l-1}\}$, since vertices that are closer to $u$ than $u_l$ appear before $u_l$ in this list. Notice that $l$ is stochastically dominated by a geometric random variable, since each of the $u_i$'s is sampled into $A_{j+1}$ independently with probability $\frac{1}{2}n^{-\frac{\lambda_j}{k}}$, and $u_l$ is the first vertex that was sampled. Therefore, $\E[|B_j(u)|]\leq\E[l-1]\leq2n^\frac{\lambda_j}{k}-1$.

For $j=F-1$, note that by definition $B_{F-1}(u)=A_{F-1}$, for every $u\in V$. Recall that $\E[|A_i|]=\frac{1}{2^i}n^{1-\frac{1}{k}\sum_{j<i}\lambda_j}$ for every $0\leq i\leq F-1$, and in particular,
\[\E[|B_{F-1}(u)|]=\E[|A_{F-1}|]=\frac{1}{2^{F-1}}n^{1-\frac{1}{k}\sum_{j<F-1}\lambda_j}\leq\frac{1}{2^{F-1}}n^{1-\frac{k}{k}+\frac{\lambda_{F-1}}{k}}<n^{\frac{\lambda_{F-1}}{k}}~.\]
Here, we used our assumption $\sum_{j<F}\lambda_j\geq k$.

We conclude that for every $0\leq j\leq F-1$ and for every $u\in V$, we have $\E[|B_j(u)|]<2n^{\frac{\lambda_j}{k}}$. Hence,
\begin{eqnarray*}
  \E[|H(k,f,\{\lambda_j\},F)|]&\leq&|\bigcup_{u\in V}\bigcup_{j=0}^{F-1}\{(u,p_j(u))\}|+\E\left[\right|\bigcup_{u\in V}\bigcup_{j=i(u)}^{f(i(u))}\{(u,v)\;|\;v\in B_j(u)\}\left|\right]\\
  &<&Fn+\sum_{u\in V}\sum_{j=i(u)}^{f(i(u))}2n^\frac{\lambda_j}{k}\\
  &=&Fn+\sum_{i=0}^{F-1}|A_i\setminus A_{i+1}|\sum_{j=i}^{f(i)}2n^\frac{\lambda_j}{k}~.
\end{eqnarray*}

Recall that $\E[|A_i|]=\frac{1}{2^i}n^{1-\frac{1}{k}\sum_{l<i}\lambda_l}$, so 
\[\E[|H(k,f,\{\lambda_j\},F)|]<Fn+\sum_{i=0}^{F-1}\frac{1}{2^i}n^{1-\frac{1}{k}\sum_{l<i}\lambda_l}\sum_{j=i}^{f(i)}2n^\frac{\lambda_j}{k}=Fn+\sum_{i=0}^{F-1}\sum_{j=i}^{f(i)}\frac{1}{2^{i-1}}n^{1+\frac{1}{k}(\lambda_j-\sum_{l<i}\lambda_l)}~.\]
Recall the assumption $\lambda_j\leq1+\sum_{l<f^{-1}(j)}\lambda_l$, for every $j$. For every $i\geq f^{-1}(j)$, we have \[\lambda_j\leq1+\sum_{l<f^{-1}(j)}\lambda_l\leq1+\sum_{l<i}\lambda_l~.\] 
But note that $f^{-1}(j)\leq i$ if and only of $f(i)\geq j$. We conclude that for every $i,j$ such that $j\leq f(i)$, we have $\lambda_j-\sum_{l<i}\lambda_l\leq 1$. Thus,
\begin{eqnarray*} \label{eq:HkfSize}
  \sum_{i=0}^{F-1}\sum_{j=i}^{f(i)}\frac{1}{2^{i-1}}n^{1+\frac{1}{k}(\lambda_j-\sum_{l<i}\lambda_l)}&\leq&\sum_{i=0}^{F-1}\sum_{j=i}^{f(i)}\frac{1}{2^{i-1}}n^{1+\frac{1}{k}}
  \leq\sum_{i=0}^{F-1}\frac{f(i)-i+1}{2^{i-1}}n^{1+\frac{1}{k}}\\
  &\leq&\max_i(f(i)-i+1)n^{1+\frac{1}{k}}\cdot\sum_{i=0}^{F-1}\frac{1}{2^{i-1}}\\
  &<&4\max_i(f(i)-i+1)n^{1+\frac{1}{k}}~.
\end{eqnarray*}

We conclude that 
\[\E[|H(k,f,\{\lambda_j\},F)|]<Fn+4\max_i(f(i)-i+1)n^{1+\frac{1}{k}}=O(Fn+\max_i(f(i)-i+1)\cdot n^{1+\frac{1}{k}})~.\]
\end{proof}

Given $k,f$, the following definition chooses the largest $\{\lambda_j\}$ and the smallest $F$ that satisfy the constraints of Lemma \ref{lemma:HkfESize}.

\begin{definition} \label{def:Hkf}
Given an integer $k\geq1$ and a non-decreasing function $f:\mathbb{N}\rightarrow\mathbb{N}$ such that $\forall_i\;f(i)\geq i$, the {\em General Hopset} $H(k,f)$, is the hopset $H(k,f,\{\lambda_j\},F)$, where $\{\lambda_j\}$ and F are defined by\footnote{Note that in the definition of the sequence $\{\lambda_j\}$, no explicit base case was provided (i.e. a definition of $\lambda_0$). But, notice that the definition of $\{\lambda_j\}$ actually does contain a definition for $\lambda_0$:
\[\lambda_0=1+\sum_{l<f^{-1}(0)}\lambda_l=1+\sum_{l<0}\lambda_l=1~,\]
where $f^{-1}(0)=0$ is true by the definition of $f^{-1}$ and the fact that $f(0)\geq0$.}
\begin{enumerate}
    \item $\lambda_j=1+\sum_{l<f^{-1}(j)}\lambda_l$. 
    \item $F=\min\{F'\;|\;\sum_{l<F'}\lambda_l\geq k\}$.
\end{enumerate}
\end{definition}

\section{Stretch and Hopbound Analysis Method} \label{Stretch and Hopbound Analysis Method}

In this section we show that our general hopset can provide the state-of-the-art results for $(\alpha,\beta)$-hopsets, for various regimes of $\alpha$. 

Given a weighted undirected graph $G$, and given the parameters $k,f$ that define $H(k,f)$, we add another parameter, which is a sequence of non-negative real numbers: $\{r_i\}_{i=0}^F$. We stress that these parameters only play a part in the analysis, and not in the construction of the hopset $H(k,f)$.

The following definition of the \textit{score} of a vertex is required for the lemma that will be proved afterwards.

\begin{definition} \label{def:Score}
Given the function $f$ and the sequence $\{r_i\}$, the {\em score} of a vertex $u\in V$ is
\[score(u)=\max\{i>0\;|\;d(u,p_i(u))>r_i\text{ and }\forall_{j\in[f^{-1}(i-1),i-1]}\;d(u,p_j(u))\leq r_j\}~,\]
where if $p_i(u)$ is not defined (e.g. when $i=F$ and $A_F=\emptyset$), we consider $d(u,p_i(u))$ to be $\infty$.
\end{definition}

\begin{remark}
The set in the definition of $score(u)$ is not empty, so the score of each vertex is well defined and positive. To see this, note that if $i$ is the minimal index such that $d(u,p_i(u))>r_i$, then $i>0$ (because $p_0(u)=u$, so $d(u,p_0(u))=0\leq r_0$), $i\leq F$ (because $d(u,p_F(u))=\infty>r_F$) and also for every $j\in[f^{-1}(i-1),i-1]$, by the minimality of $i$, $d(u,p_j(u))\leq r_j$.
\end{remark}

Denote $H=H(k,f)$. The following lemma lets us ``jump" from some vertex $u$ to some other vertex $u'$,  using a path in $G\cup H$, where $u'$ is at a certain distance from $u$. These paths have low hopbound and stretch, and we will use them later to connect every pair of vertices in $G$.

\begin{lemma}[Jumping Lemma] \label{lemma:HJump}
Suppose that $score(u)=i$, then for every $u'\in V$ such that
$d(u,u')\leq\frac{r_i-r_{i-1}}{2}-r_{f^{-1}(i-1)}$,
\[d_{H}^{(3)}(u,u')\leq3d(u,u')+2(r_{i-1}+r_{f^{-1}(i-1)})~.\]
Moreover, if also $d(u,u')\geq\frac{2}{t}(r_{i-1}+r_{f^{-1}(i-1)})$ for some $t>0$, then:
\[d_{H}^{(3)}(u,u')\leq(t+3)d(u,u')~.\]
\end{lemma}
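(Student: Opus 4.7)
The plan is to exhibit an explicit 3-hop path $u \to a \to b \to u'$ in $G \cup H$ and bound its total weight by $3d(u,u') + 2(r_{i-1}+r_{j_0})$, where $j_0 := f^{-1}(i-1)$. I will choose $a := p_{j_0}(u)$ and $b := p_{i-1}(u')$. Both outer edges $(u,a)$ and $(b,u')$ are automatically pivot edges of $H$ by construction. The score condition at $j=j_0 \in [j_0,i-1]$ immediately yields $d(u,a) \leq r_{j_0}$. For the third edge, since $p_{i-1}(u')$ is the closest $A_{i-1}$-vertex to $u'$ and $p_{i-1}(u) \in A_{i-1}$, the triangle inequality gives $d(u',b) \leq d(u',p_{i-1}(u)) \leq d(u,u') + d(u,p_{i-1}(u)) \leq d(u,u') + r_{i-1}$.

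The crux is to show the middle edge $(a,b)$ is in $H$; I will do this by verifying $b \in B_{i-1}(a)$ together with $i-1 \in [i(a), f(i(a))]$. First I pin down $i(a)$: since $a \in A_{j_0}$ we have $i(a) \geq j_0$; and if $a$ were in $A_i$, then $d(u,p_i(u)) \leq d(u,a) \leq r_{j_0} \leq r_i$, contradicting the score's guarantee that $d(u,p_i(u)) > r_i$. So $i(a) \in [j_0,i-1]$, and by monotonicity of $f$ combined with $f(j_0) \geq i-1$ (the defining property of $j_0 = f^{-1}(i-1)$), we get $f(i(a)) \geq i-1$. Hence $i-1$ lies in the allowed bunch range $[i(a), f(i(a))]$.

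Next I will verify the bunch condition $d(a,b) < d(a,p_i(a))$. The upper bound comes from the triangle inequality through $u$ and $u'$: $d(a,b) \leq d(u,a) + d(u,u') + d(u',b) \leq 2d(u,u') + r_{i-1} + r_{j_0}$. For the lower bound on $d(a,p_i(a))$, the key observation is that $p_i(u)$ is the closest $A_i$-vertex to $u$, so $d(u,p_i(u)) \leq d(u,p_i(a)) \leq d(u,a) + d(a,p_i(a))$, which gives $d(a,p_i(a)) \geq d(u,p_i(u)) - d(u,a) > r_i - r_{j_0}$. Plugging the hypothesis $d(u,u') \leq \frac{r_i - r_{i-1}}{2} - r_{j_0}$ into the upper bound yields $d(a,b) \leq r_i - r_{j_0} < d(a,p_i(a))$, so $b \in B_{i-1}(a)$ and $(a,b) \in H$.

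Summing the three edge weights gives $d(u,a) + d(a,b) + d(b,u') \leq r_{j_0} + (2d(u,u') + r_{i-1} + r_{j_0}) + (d(u,u') + r_{i-1}) = 3d(u,u') + 2(r_{i-1}+r_{j_0})$, establishing the first bound. The moreover part is immediate: assuming additionally $d(u,u') \geq \frac{2}{t}(r_{i-1}+r_{j_0})$ gives $2(r_{i-1}+r_{j_0}) \leq t \cdot d(u,u')$, so the total is at most $(t+3)d(u,u')$. The main obstacle is identifying the \emph{correct} pair $(a,b)$: the apparently symmetric choice $a=p_{i-1}(u)$, $b=p_{j_0}(u')$ yields an identical weight budget but only the weaker lower bound $d(a,p_i(a)) > r_i - r_{i-1}$, which is too weak to absorb the triangle-inequality bound on $d(a,b)$; it is essential that $a$ be the \emph{short-distance} pivot $p_{j_0}(u)$ so that the slack $r_i - r_{j_0}$ exactly matches what the hypothesis provides.
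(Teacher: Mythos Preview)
Your proof is correct and follows essentially the same approach as the paper: the same three-hop path $u\to p_{f^{-1}(i-1)}(u)\to p_{i-1}(u')\to u'$, the same triangle-inequality bounds on each leg, and the same verification that $i-1\in[i(a),f(i(a))]$ together with $b\in B_{i-1}(a)$. Your argument for $i(a)<i$ differs cosmetically (you contradict $d(u,p_i(u))>r_i$ directly via $r_{j_0}\le r_i$, whereas the paper derives $d(a,p_i(a))>r_i-r_{j_0}>0$), but both implicitly rely on $r_{j_0}\le r_i$, which holds under the standing assumption~\eqref{eq:rForH}; the final commentary on why the asymmetric choice of $a$ and $b$ is forced is a nice addition not present in the paper.
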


\begin{proof}
Let $u\in V$ be some vertex with $score(u)=i$ and let $u'\in V$ be some other vertex. We look at the following potential path:
\[u\rightarrow p_{f^{-1}(i-1)}(u)\rightarrow p_{i-1}(u')\rightarrow u'~.\]

\begin{center}
\begin{figure}[ht]
    \centering
    \includegraphics[width=10cm, height=5cm]{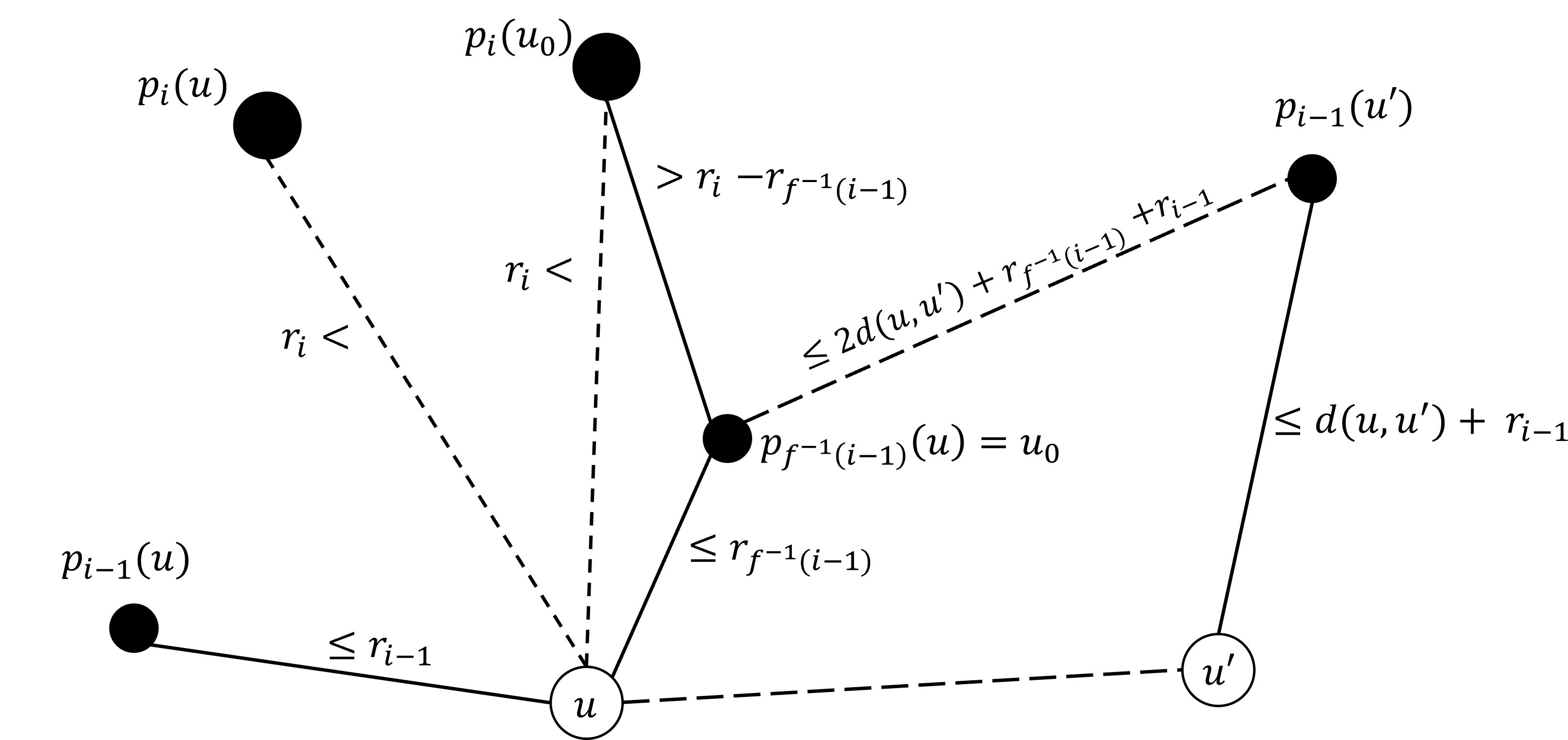}
    \caption{The potential path between $u$ and $u'$. Notice that $d(u,p_{i-1}(u))\leq r_{i-1}$ and $d(u,p_{f^{-1}(i-1)}(u))\leq r_{f^{-1}(i-1)}$, since $score(u)=i$.}
    \label{fig:jump_lemma}
\end{figure}
\end{center}

The reason we look at the $(i-1)$'th pivot of $u'$ is that this is the highest-index pivot of $u'$, such that we can still bound the distance to it:
\begin{equation}\label{eq:u'}
d(u',p_{i-1}(u'))\leq d(u',p_{i-1}(u))\leq d(u',u)+d(u,p_{i-1}(u))\leq d(u',u)+r_{i-1}~.
\end{equation}
The reason we look at the $f^{-1}(i-1)$'th pivot of $u$ is that this is the lowest-index pivot of $u$ that still can be connected to $p_{i-1}(u')$ through an $H$-edge, according to our construction.

Notice that in the path above, the first and the last arrows represent edges that do exist in $H$. We now bound the distance between $p_{f^{-1}(i-1)}(u)$ and $p_{i-1}(u')$. Recall that $score(u)=i$, and therefore $d(u,p_j(u))\leq r_j$ for every
$j\in[f^{-1}(i-1),i-1]$. In particular:
\[d(u,p_{f^{-1}(i-1)}(u))\leq r_{f^{-1}(i-1)}~,\]
and now we can see that:
\begin{equation} \label{eq:MiddleEdgeDist}
\begin{split}
d(p_{f^{-1}(i-1)}(u),p_{i-1}(u'))&\leq d(p_{f^{-1}(i-1)}(u),u)+d(u,u')+d(u',p_{i-1}(u'))\\
&\stackrel{\eqref{eq:u'}}{\leq} r_{f^{-1}(i-1)}+d(u,u')+(d(u,u')+r_{i-1})\\
&=2d(u,u')+r_{f^{-1}(i-1)}+r_{i-1}~.
\end{split}
\end{equation}

For convenience, we denote $u_0=p_{f^{-1}(i-1)}(u)$, and we also bound the distance $d(u_0,p_i(u_0))$.
\[r_i<d(u,p_i(u))\leq d(u,p_i(u_0))\leq d(u,u_0)+d(u_0,p_i(u_0))\leq r_{f^{-1}(i-1)}+d(u_0,p_i(u_0))\;\;\Rightarrow\]
\[\Rightarrow\;\;d(u_0,p_i(u_0))>r_i-r_{f^{-1}(i-1)}~.\]

Figure \ref{fig:jump_lemma} summarizes all of the computations above.

Recall that the vertex $u_0$ is connected through an edge of $H$ to any vertex $v\in B_j(u_0)$, for every $j\in[i(u_0),f(i(u_0))]$. Since $u_0$ is a $f^{-1}(i-1)$'th pivot, we know that $i(u_0)\geq f^{-1}(i-1)$, so using the fact that $f$ is non-decreasing:
\[i-1\leq f(f^{-1}(i-1))\leq f(i(u_0))~.\]

Also, since  $d(u_0,p_i(u_0))>r_i-r_{f^{-1}(i-1)}>0$, it cannot be that $i(u_0)\geq i$ (otherwise $u_0$ would be the $i$'th pivot of itself and the distance $d(u_0,p_i(u_0))$ would be $0$). We conclude that $i-1\in[i(u_0),f(i(u_0))]$. 

Therefore, $u_0$ is connected to every vertex of $B_{i-1}(u_0)$. Since $p_{i-1}(u')\in A_{i-1}$, a sufficient condition for $p_{i-1}(u')$ to be in $B_{i-1}(u_0)$, which would imply that $(u_0,p_{i-1}(u'))\in H$, is 
\[2d(u,u')+r_{f^{-1}(i-1)}+r_{i-1}\leq r_i-r_{f^{-1}(i-1)}~,\]
i.e.
\[d(u,u')\leq \frac{r_i-r_{i-1}}{2}-r_{f^{-1}(i-1)}~.\]

In case that this criteria is satisfied, $u_0=p_{f^{-1}(i-1)}(u)$ and $p_{i-1}(u')$ are connected, and we get a $3$-hops path from $u$ to $u'$. The weight of this path is:
\begin{eqnarray*}
d_{H}^{(3)}(u,u')&\stackrel{(\ref{eq:u'}),(\ref{eq:MiddleEdgeDist})}{\leq}&r_{f^{-1}(i-1)}+(2d(u,u')+r_{f^{-1}(i-1)}+r_{i-1})+(d(u,u')+r_{i-1})\\
&=&3d(u,u')+2(r_{i-1}+r_{f^{-1}(i-1)})~.
\end{eqnarray*}

Let $t>0$ be some real number. If it happens to be that the above $u,u'$ satisfy also $r_{i-1}+r_{f^{-1}(i-1)}\leq \frac{t}{2}d(u,u')$ (or equivalently $d(u,u')\geq \frac{2}{t}(r_{i-1}+r_{f^{-1}(i-1)})$), then we get a $3$-hops path between them, of weight $\leq 3d(u,u')+td(u,u')=(t+3)d(u,u')$, i.e. a stretch of $t+3$.
\end{proof}

Lemma \ref{lemma:HJump} implies that if we want to have many pairs of vertices, such that there's a $3$-hops path between them, with stretch $\approx t$, it's better to choose the parameters $\{r_i\}$ such that $\frac{2}{t}(r_{i-1}+r_{f^{-1}(i-1)})\leq\frac{r_i-r_{i-1}}{2}-r_{f^{-1}(i-1)}$, i.e. $r_i\geq(1+\frac{4}{t})r_{i-1}+(2+\frac{4}{t})r_{f^{-1}(i-1)}$. As we will see in Lemma \ref{lemma:HJump2}, using equality in this requirement is enough. Hence, from now on, we assume that the sequence $\{r_i\}$ satisfies
\begin{equation} \label{eq:rForH}
    r_i=(1+\frac{4}{t})r_{i-1}+(2+\frac{4}{t})r_{f^{-1}(i-1)}
\end{equation}

Let $u,v\in V$ be a pair of vertices, and let $u=u_0,u_1,u_2,...,u_d=v$ be the shortest path between them. We use the ``jumps" that Lemma \ref{lemma:HJump} provides, in order to find a low hopbound path in $G\cup H$ between $u,v$.

\begin{lemma} \label{lemma:HJump2}
Fix $0\leq h\leq d$, and suppose that the sequence $\{r_i\}$ satisfies Equation (\ref{eq:rForH}). Suppose that $score(u_h)=i$ and let
\[l=\max\{l'\geq h\;|\;d(u_h,u_{l'})\leq\frac{r_i-r_{i-1}}{2}-r_{f^{-1}(i-1)}\}~.\]
Then if $l<d$, we have:
\begin{enumerate}
    \item $d_{G\cup H}^{(4)}(u_h,u_{l+1})\leq(t+3)d(u_h,u_{l+1})$
    \item $d(u_h,u_{l+1})\geq\frac{4}{t}r_{f^{-1}(i-1)}$
\end{enumerate}

\end{lemma}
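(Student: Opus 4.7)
The plan is to obtain the $4$-hop path by taking the $3$-hop ``jump'' guaranteed by \lemmaref{lemma:HJump} from $u_j$ to $u_l$ and then appending the single $G$-edge $(u_l,u_{l+1})$. First I would verify that \lemmaref{lemma:HJump} is applicable to the pair $(u_j,u_l)$: since $l$ was chosen by the defining inequality $d(u_j,u_l)\le \frac{r_i-r_{i-1}}{2}-r_{f^{-1}(i-1)}$ and $\text{score}(u_j)=i$, the hypothesis of \lemmaref{lemma:HJump} is met, so
\[
d_{G\cup H}^{(3)}(u_j,u_l)\le 3d(u_j,u_l)+2\bigl(r_{i-1}+r_{f^{-1}(i-1)}\bigr).
\]
Appending $(u_l,u_{l+1})\in E(G)$ yields a $4$-hop path in $G\cup H$ whose weight is at most the right-hand side plus $d(u_l,u_{l+1})$.

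Next I would use the fact that $u_0,u_1,\ldots,u_d$ is a shortest path, so $u_l$ lies on a shortest path from $u_j$ to $u_{l+1}$, giving $d(u_j,u_l)+d(u_l,u_{l+1})=d(u_j,u_{l+1})$ and thus $d(u_j,u_l)\le d(u_j,u_{l+1})$. Combining,
\[
d_{G\cup H}^{(4)}(u_j,u_{l+1})\le 2d(u_j,u_l)+d(u_j,u_{l+1})+2\bigl(r_{i-1}+r_{f^{-1}(i-1)}\bigr)\le 3d(u_j,u_{l+1})+2\bigl(r_{i-1}+r_{f^{-1}(i-1)}\bigr).
\]
It remains to absorb the additive term into $t\cdot d(u_j,u_{l+1})$; this is the crux of the argument.

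The key step is to exploit the maximality of $l$: since $l<d$, the index $l+1$ violates the defining inequality, i.e.\ $d(u_j,u_{l+1})>\frac{r_i-r_{i-1}}{2}-r_{f^{-1}(i-1)}$. I would then feed the assumption (\ref{eq:rForH}) into this: rearranging (\ref{eq:rForH}) gives $r_i-r_{i-1}\ge \frac{4}{t}r_{i-1}+\bigl(2+\frac{4}{t}\bigr)r_{f^{-1}(i-1)}$, so
\[
\frac{r_i-r_{i-1}}{2}-r_{f^{-1}(i-1)}\ge \frac{2}{t}\bigl(r_{i-1}+r_{f^{-1}(i-1)}\bigr).
\]
Hence $d(u_j,u_{l+1})\ge\frac{2}{t}(r_{i-1}+r_{f^{-1}(i-1)})$, which means $2(r_{i-1}+r_{f^{-1}(i-1)})\le t\cdot d(u_j,u_{l+1})$, yielding property (1).

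For property (2), I would observe that $f^{-1}(i-1)\le i-1$ (since $f(i-1)\ge i-1$ by the standing assumption on $f$) and that $\{r_i\}$ is non-decreasing (an immediate consequence of (\ref{eq:rForH}) and non-negativity), so $r_{i-1}\ge r_{f^{-1}(i-1)}$. Therefore $\frac{2}{t}(r_{i-1}+r_{f^{-1}(i-1)})\ge \frac{4}{t}r_{f^{-1}(i-1)}$, and the chain of inequalities already derived gives $d(u_j,u_{l+1})\ge \frac{4}{t}r_{f^{-1}(i-1)}$. The only subtle point in the whole argument is the bookkeeping that transforms the bound on $d(u_j,u_l)$ produced by \lemmaref{lemma:HJump} into a bound in terms of $d(u_j,u_{l+1})$; this relies on $u_l$ sitting on a shortest path between its two neighbors in the chain, which lets the additive $2d(u_j,u_l)$ term be charged to $2d(u_j,u_{l+1})$ without loss.
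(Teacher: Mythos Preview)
Your proof is correct and follows essentially the same route as the paper: apply \lemmaref{lemma:HJump} to $(u_j,u_l)$, append the $G$-edge $(u_l,u_{l+1})$, and use the maximality of $l$ together with inequality~(\ref{eq:rForH}) to obtain $d(u_j,u_{l+1})\ge\frac{2}{t}(r_{i-1}+r_{f^{-1}(i-1)})$, which both absorbs the additive term and yields item~(2). The one difference is that the paper splits into two cases according to whether $d(u_j,u_l)\ge\frac{2}{t}(r_{i-1}+r_{f^{-1}(i-1)})$, invoking the ``Moreover'' clause of \lemmaref{lemma:HJump} in Case~1; your argument shows this split is unnecessary, since the maximality of $l$ already gives the needed lower bound on $d(u_j,u_{l+1})$ unconditionally, and the basic bound from \lemmaref{lemma:HJump} suffices throughout.
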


\begin{proof}
Denote by $W$ the weight of the edge $(u_l,u_{l+1})$. We look at two different cases.

\noindent{\bf Case 1:} $d(u_h,u_l)\geq\frac{2}{t}(r_{i-1}+r_{f^{-1}(i-1)})$.
In this case, by Lemma \ref{lemma:HJump}:
\[d_{G\cup H}^{(4)}(u_h,u_{l+1})\leq d_{G\cup H}^{(3)}(u_h,u_l)+W\leq(t+3)d(u_h,u_l)+W<(t+3)(d(u_h,u_l)+W)=(t+3)d(u_h,u_{l+1})~.\]

\noindent{\bf Case 2:} $d(u_h,u_l)<\frac{2}{t}(r_{i-1}+r_{f^{-1}(i-1)})$.
By Lemma \ref{lemma:HJump}, we have that $d_{G\cup H}^{(3)}(u_h,u_l)\leq3d(u_h,u_l)+2(r_{i-1}+r_{f^{-1}(i-1)})$. Also, recall that we assumed that Equation (\ref{eq:rForH}) holds, which is equivalent to the fact that $\frac{r_i-r_{i-1}}{2}-r_{f^{-1}(i-1)}=\frac{2}{t}(r_{i-1}+r_{f^{-1}(i-1)})$. Therefore, by the maximality of $l$,
\[d(u_h,u_{l+1})>\frac{r_i-r_{i-1}}{2}-r_{f^{-1}(i-1)}=\frac{2}{t}(r_{i-1}+r_{f^{-1}(i-1)})~.\]
We get that
\[2(r_{i-1}+r_{f^{-1}(i-1)})\leq td(u_h,u_{l+1})=t(d(u_h,u_l)+W)~,\]
and thus,
\begin{eqnarray*}
d_{G\cup H}^{(4)}(u_h,u_{l+1})&\leq& d_{G\cup H}^{(3)}(u_h,u_l)+W\\
&\leq& 3d(u_h,u_l)+2(r_{i-1}+r_{f^{-1}(i-1)})+W\\
&\leq&3d(u_h,u_l)+t(d(u_h,u_l)+W)+W\\
&=&(t+3)d(u_h,u_l)+(t+1)W\\
&<&(t+3)d(u_h,u_{l+1})~.
\end{eqnarray*}

In both cases, we saw that $d(u_h,u_{l+1})\geq\frac{2}{t}(r_{i-1}+r_{f^{-1}(i-1)})$, and since $\{r_i\}$ is a non-decreasing sequence (can be seen by Equation (\ref{eq:rForH})), we get:
\[d(u_h,u_{l+1})\geq\frac{2}{t}(r_{i-1}+r_{f^{-1}(i-1)})\geq\frac{4}{t}r_{f^{-1}(i-1)}~.\]
\end{proof}

The following theorem presents the size, the stretch and the hopbound for our hopset, $H(k,f)$. It uses Lemma \ref{lemma:HJump2} repeatedly between every pair of vertices $u,v\in V$.

\begin{theorem} \label{thm:Hkf}
Given an integer $k$, a monotone non-decreasing function $f:\mathbb{N}\rightarrow\mathbb{N}$ such that $\forall_i f(i)\geq i$, parameters $\{\lambda_j\}$ such that $\forall_j\lambda_j\leq1+\sum_{l<f^{-1}(j)}\lambda_l$ and $F$ such that $\sum_{j<F}\lambda_j\geq k$, every undirected weighted graph $G=(V,E)$ admits a hopset with the following properties, simultaneously for every $t>0$:
\begin{enumerate}
    \item Size $O(Fn+\max_i(f(i)-i+1)\cdot n^{1+1/k})$.
    \item Stretch $2t+3$.
    \item Hopbound $4r_F+3$,\newline where $\{r_i\}$ satisfies $r_0=1$ and $r_i=(1+\frac{4}{t})r_{i-1}+(2+\frac{4}{t})r_{f^{-1}(i-1)}$ for every $i>0$.
\end{enumerate}
\end{theorem}

\begin{proof}
Given $G,k,f$, build $H(k,f)$ on $G$. By Lemma \ref{lemma:HkfESize}, this hopset has the desired size in expectation.
Let $u,v\in V$ be a pair of vertices, and let $u=u_0,u_1,u_2,...,u_d=v$ be the shortest path between them. We use Lemma \ref{lemma:HJump2} to find a path between $u$ and $v$. Starting with $h=0$, find $l=\max\{l'\geq h\;|\;d(u_h,u_{l'})\leq\frac{r_i-r_{i-1}}{2}-r_{f^{-1}(i-1)}\}$, where $score(u_h)=i$. If $l=d$, stop the process and denote $v'=u_h$. Otherwise, set $h\leftarrow l+1$, and continue in the same way.

This process creates a subsequence of $u_0,...,u_d$: $u=v_0,v_1,v_2,...,v_b=v'$, such that for every $q<b$ we have $d_{G\cup H}^{(4)}(v_q,v_{q+1})\leq(t+3)d(v_q,v_{q+1})$ (by Lemma \ref{lemma:HJump2}). For $v'=v_b$ we have $d(v',v)\leq\frac{r_i-r_{i-1}}{2}-r_{f^{-1}(i-1)}$, where $score(v')=i$. For this last segment, we get from Lemma \ref{lemma:HJump} that
\[d_{G\cup H}^{(3)}(v',v)\leq3d(v',v)+2(r_{i-1}+r_{f^{-1}(i-1)})\leq3d(v',v)+4r_F\]
(again, by Equation (\ref{eq:rForH}), $\{r_i\}$ is a non-decreasing sequence).

When summing over the entire path, we get:
\begin{eqnarray*}
d_{G\cup H}^{(4b+3)}(u,v)&\leq&\sum_{q=0}^{b-1}(t+3)d(v_q,v_{q+1})+3d(v',v)+4r_F\\
&=&(t+3)d(u,v')+3d(v',v)+4r_F\\
&\leq&(t+3)d(u,v)+4r_F~.
\end{eqnarray*}

To bound $b$, we notice that by Lemma \ref{lemma:HJump2}, for every $q<b$ (here $i=score(v_q)$):
\[d(v_q,v_{q+1})\geq\frac{4}{t}r_{f^{-1}(i-1)}\geq\frac{4}{t}r_0~.\]

So, the number of these ``jumps" couldn't be greater than $\frac{d(u,v)}{\frac{4}{t}r_0}=\frac{t\cdot d(u,v)}{4r_0}$, and we finally got:
\[d_{G\cup H}^{(\frac{t\cdot d(u,v)}{r_0}+3)}(u,v)\leq(t+3)d(u,v)+4r_F~.\]

The above is true for the sequence $\{r_i\}$, which satisfies Equation (\ref{eq:rForH}) and $r_0=1$. However, note that we never used the fact that $r_0=1$, thus it is true for {\em any} sequence $\{r'_i\}$ that satisfies Equation (\ref{eq:rForH}). Given the vertices $u,v$, we define a new sequence as follows:
\[r'_i=\frac{t\cdot d(u,v)\cdot r_i}{4r_F}~.\]
The new sequence clearly still satisfies Equation (\ref{eq:rForH}), so if we use it instead of $\{r_i\}$, we get that for our specific $u,v$:
\[d_{G\cup H}^{(\frac{t\cdot d(u,v)}{r'_0}+3)}(u,v)\leq(t+3)d(u,v)+4r'_F\;\;\;\;\Rightarrow\]
\[\Rightarrow\;\;\;\;d_{G\cup H}^{(4r_F+3)}(u,v)\leq(t+3)d(u,v)+t\cdot d(u,v)=(2t+3)d(u,v)~,\]
i.e. the stretch of this new path is $2t+3$, and its hopbound is $4r_F+3$.

Although we chose $\{r'_i\}$ for a specific pair of vertices, this choice of $\{r'_i\}$ {\em doesn't change our construction at all}, but only the analysis. Hence, we conclude that for each $u,v\in V$, there is a path between them in $G\cup H$, with stretch $2t+3$ and hopbound $4r_F+3$, for our original sequence $\{r_i\}$.
\end{proof}

\subsection{Applying our Construction and Analysis for Rounding Functions} \label{sec:RoundingFunctionsHopsets}


In this section we apply the result of Theorem \ref{thm:Hkf} to functions of the form
\[f(i)=\left\lfloor\frac{i}{c}\right\rfloor\cdot c+c-1~,\]
where $c>0$ is some integer parameter. For short, this family of functions is referred as \textit{Rounding Functions}. In particular, for a given integer $c>0$, this function rounds $i$ to the minimal multiple of $c$ that is larger than $i$, then subtracts $1$. This generalizes both the construction of hopsets that is based on \cite{TZ01} - as every vertex $u\in V$ connects to every $v\in B_j(u)$ \textit{for every} $j$ between $i(u)$ and $k-1$ - and the construction of hopsets by \cite{EN19,HP17}, where every $u\in V$ connects to every $v\in B_j(u)$, only for $j=i(u)$. The former is achieved by setting $c=k$, then for every $i\leq k-1$, $f(i)=k-1$. The latter is achieved by setting $c=1$, then $f(i)=i$ for every $i$.

Let $G=(V,E)$ be an undirected weighted graph, let $u,v\in V$ be a pair of vertices, and let $u=u_0,u_1,u_2,...,u_d=v$ be the shortest path between them. In the proof of Theorem \ref{thm:Hkf}, we found a path between $u$ and $v$, by concatenating the \textit{detours} that Lemma \ref{lemma:HJump2} provided. The properties of these detours were that they consist of at most $4$ hops in $G\cup H(k,f)$, their stretch is at most $(t+3)$, and the part of the path they skip is of weight at least $\frac{4}{t}r_0$ (the exact details appear in the lemma). 

In the case of our rounding function $f$, we can improve the hopbound of the resulting path. For this purpose, we make the following observation. The part of the path that each detour skips is actually of weight at least $\frac{4}{t}r_{f^{-1}(i-1)}$, where $i$ is the score of the starting vertex of the detour. Instead of bounding it by $\frac{4}{t}r_{f^{-1}(i-1)}\geq\frac{4}{t}r_0$, we may consider two cases. If the score of the starting vertex is large, then this bound improves. Specifically, for our rounding function $f$, if the score is at least $c$, then this weight is at least $\frac{4}{t}r_c$. Otherwise, if the score is smaller than $c$, we find another $2$-hops detour with stretch at most $2c-1$, that skips a part of the $u-v$ shortest path that has weight roughly $\frac{r_c}{c}$. The existence of these new detours is proved in Lemma \ref{lemma:InvolvedJump} below. The resulting stretch of this new process is the maximum between $2c-1$ (which is the stretch of the new detours) and $2t+3$ (which is the resulting stretch in Theorem \ref{thm:Hkf}). To balance these, we choose $t\approx c$. Then, while the overall stretch is $O(c)$, each detour we use skips a part of the $u-v$ shortest path, that has weight at least $\frac{r_c}{c}$. Computing the elements of the sequence $\{r_i\}$, for our specific function $f$, gives $\frac{r_c}{c}\approx r_0$. For comparison, recall that the part of the path that each detour skipped in Theorem \ref{thm:Hkf} was of weight roughly $\frac{r_0}{t}\approx\frac{r_0}{c}$. As a result, the hopbound reduces by a factor of $c$.

The existence of these new detours is proved in the following lemma. This lemma uses the technique of \cite{TZ01} for the first $c$ levels of the hierarchy, and also uses some properties of the function $score(u)$.

\begin{lemma} \label{lemma:InvolvedJump}
Fix $0\leq h\leq d$. Suppose that $score(u_h)\leq c$ and let
\[l=\max\{l'\geq h\;|\;d(u_h,u_{l'})\leq\frac{r_c}{c}\}~.\]
Then, $d_{H}^{(2)}(u_h,u_l)\leq(2c-1)d(u_h,u_l)$.

Moreover, if $l<d$, then $d_{G\cup H}^{(3)}(u_h,u_{l+1})\leq(2c-1)d(u_h,u_{l+1})$.
\end{lemma}

Before we prove Lemma \ref{lemma:InvolvedJump}, we prove the following fact, which can be viewed as the main component of the stretch analysis in \cite{TZ01}.

\begin{fact} \label{fact:TZ2001}
For $x,y\in V$ such that $d(x,p_c(x))>c\cdot d(x,y)$,
\[d_{H}^{(2)}(x,y)\leq (2c-1)\cdot d(x,y)~.\]
\end{fact}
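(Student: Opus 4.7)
The strategy is to adapt the classical Thorup--Zwick two-hop distance oracle argument to the $H(k,f)$ construction with the specific $f(i) = \lfloor i/c\rfloor\cdot c + c - 1$ used in this subsection. I first note that the hypothesis $d(x,p_c(x)) > c\cdot d(x,y) \ge 0$ forces $p_c(x) \ne x$, so $x \notin A_c$; hence $i(x) < c$ and $f(i(x)) = c - 1$. This observation is exactly what guarantees that the bunch edges of $x$ up through level $c-1$ are present in $H$.

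Next, for $j = 0, 1, 2, \ldots$ I consider the two termination tests: (A) $p_j(x) \in B_j(y)$ and (B) $p_j(y) \in B_j(x)$, and let $j^*$ be the smallest $j$ at which either holds. Set $D_j = d(x, p_j(x))$ and $\tilde D_j = d(y, p_j(y))$. The core of the proof is the invariant that if neither test has succeeded by step $j$, then $D_j \le j\cdot d(x,y)$ and $\tilde D_j \le j\cdot d(x,y)$. The inductive step uses the failure of (B): $p_j(y) \notin B_j(x)$ means $d(x, p_j(y)) \ge d(x, p_{j+1}(x))$, so $D_{j+1} \le d(x,y) + d(y,p_j(y)) \le d(x,y) + \tilde D_j \le (j+1)\,d(x,y)$, and symmetrically the failure of (A) gives $\tilde D_{j+1} \le (j+1)\,d(x,y)$. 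Pushing this to $j = c$ would imply $D_c \le c\cdot d(x,y)$, contradicting the hypothesis; therefore $j^* \le c - 1$.

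Finally, I verify that the resulting two-hop path lies in $G \cup H$ and bound its length. In Case (B) the path is $x \to p_{j^*}(y) \to y$: the edge $(y, p_{j^*}(y))$ is a pivot edge of $y$ (automatically in $H$), and $(x, p_{j^*}(y))$ is the bunch edge of $x$ at level $j^*$, which is in $H$ provided $j^* \in [i(x), f(i(x))] = [i(x), c-1]$. The lower bound $j^* \ge i(x)$ holds because otherwise $B_{j^*}(x) = \emptyset$ and (B) could not have fired. Case (A) is symmetric: termination in (A) requires $B_{j^*}(y) \ne \emptyset$, forcing $i(y) \le j^* \le c - 1$, so $f(i(y)) = c - 1 \ge j^*$ and the bunch edge of $y$ is in $H$. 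In both cases the path weight is at most $d(x,y) + 2\tilde D_{j^*} \le (2j^* + 1)\,d(x,y) \le (2c - 1)\,d(x,y) \le (2c + 1)\,d(x,y)$, as claimed.

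The main technical hurdle is the interaction between the TZ-style termination argument and the $f$-restricted bunch structure of $H(k,f)$: unlike plain linear-TZ where every level's bunch is stored, here only levels in $[i(u), f(i(u))]$ are stored for each $u$. The argument closes precisely because the hypothesis forces $j^* \le c-1$ and because the $f$ used in this section satisfies $f(i) = c - 1$ for every $i < c$, so whenever the relevant endpoint has level below $c$ (which the termination condition itself forces), the required bunch edge is automatically available in $H$.
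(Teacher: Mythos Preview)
Your proof is correct and follows essentially the same approach as the paper's: define $j^*$ as the first level where a pivot of one vertex falls into the other's bunch, run the standard Thorup--Zwick alternating-pivot induction to bound $D_j,\tilde D_j\le j\cdot d(x,y)$, use the hypothesis to force $j^*<c$, and output the two-hop path through the relevant pivot. You are actually more explicit than the paper in verifying that the required bunch edge is present in $H(k,f)$ for this specific $f$---the paper simply asserts the edges are in $H$, whereas you spell out why $j^*\in[i(x),c-1]$ (resp.\ $j^*\in[i(y),c-1]$) so the edge is stored; one trivial slip is that in Case~(A) the bound reads $d(x,y)+2D_{j^*}$ rather than $d(x,y)+2\tilde D_{j^*}$, but this is immaterial since both are at most $j^*\cdot d(x,y)$.
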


\begin{proof}
Let $i^*$ be the minimal index such that $p_{i^*}(x)\in B_{i^*}(y)$ or $p_{i^*}(y)\in B_{i^*}(x)$. We prove by induction that for every $j\leq i^*$,
\[d(x,p_j(x))\leq j\cdot d(x,y)~~~\textrm{and}~~~d(y,p_j(y))\leq j\cdot d(x,y)~.\]

For $j=0$, this is trivial since $p_0(x)=x$ and $p_0(y)=y$. For $j>0$, we assume by induction that both $d(x,p_{j-1}(x))\leq(j-1)d(x,y)$ and $d(y,p_{j-1}(y))\leq(j-1)d(x,y)$. The fact that $p_{j-1}(y)\notin B_{j-1}(x)$ indicates that $d(x,p_{j-1}(y))\geq d(x,p_j(x))$. Hence,
\[d(x,p_j(x))\leq d(x,p_{j-1}(y))\leq d(x,y)+d(y,p_{j-1}(y))\leq d(x,y)+(j-1)d(x,y)=j\cdot d(x,y)~.\]
Proving $d(y,p_j(y))\leq j\cdot d(x,y)$ is symmetric. This concludes the inductive proof.

As a result, since we assume $d(x,p_c(x))>c\cdot d(x,y)$, it must be that $i^*<c$. Recall that by the definition of $i^*$, either $p_{i^*}(x)\in B_{i^*}(y)$ or $p_{i^*}(y)\in B_{i^*}(x)$. We assume that $p_{i^*}(x)\in B_{i^*}(y)$, and the other case is symmetric. By the definition of $H=H(k,f)$, the edge $(x,p_{i^*}(x))$ is contained in $H$. In addition, note that $f(i)\geq c-1\geq i^*$ for every $i\geq0$, and that $i(y)\leq i^*$ (otherwise $B_{i^*}(y)$ would be empty, in contradiction to the fact that $p_{i^*}(x)\in B_{i^*}(y)$). Thus, $i^*\in[i(y),f(i(y))]$, and since $p_{i^*}(x)\in B_{i^*}(y)$, we conclude that the edge $(y,p_{i^*}(x))$ is contained in $H$ as well.

Then,
\begin{eqnarray*}
d_{H}^{(2)}(x,y)&\leq&d(x,p_{i^*}(x))+d(p_{i^*}(x),y)
\leq d(x,p_{i^*}(x))+d(p_{i^*}(x),x)+d(x,y)\\
&=&2d(x,p_{i^*}(x))+d(x,y)
\leq2i^*\cdot d(x,y)+d(x,y)\\
&=&(2i^*+1)d(x,y)
\leq(2c-1)d(x,y)~.
\end{eqnarray*}

\end{proof}

Another useful fact for the proof of Lemma \ref{lemma:InvolvedJump} is the following.

\begin{fact} \label{fact:LowScore}
Every vertex $u\in V$ such that $score(u)\leq c$ satisfies $d(u,p_c(u))>r_c$.
\end{fact}

\begin{proof}

Let $i$ be the minimal index that is greater or equal to $c$ and $d(u,p_i(u))>r_i$. There is such $i$, because for $i=F$, the pivot $p_F(u)$ does not exist, thus the distance $d(u,p_F(u))$ is $\infty$. If $i=c$, we are done. Otherwise, notice that $f^{-1}(i-1)\geq c$ (in general, $f^{-1}(i')=\left\lfloor\frac{i'}{c}\right\rfloor\cdot c$), so for all $j\in[f^{-1}(i-1),i-1]$ we have $d(u,p_j(u))\leq r_j$. By Definition \ref{def:Score} of $score$, that means that $score(x)=i>c$, in contradiction to the assumption that $score(u)\leq c$.

\end{proof}

We are now ready to prove Lemma \ref{lemma:InvolvedJump}.

\begin{proof}[Proof of Lemma \ref{lemma:InvolvedJump}]

By Fact \ref{fact:LowScore}, since $score(u_h)\leq c$, then $d(u_h,p_c(u_h))>r_c$. Note that $l$ is defined such that $d(u_h,u_l)\leq\frac{r_c}{c}$. Combining these two inequalities, we get
\[d(u_h,p_c(u_h))>r_c\geq c\cdot d(u_h,u_l)~.\]
By Fact \ref{fact:TZ2001}, it means that
\[d_{H}^{(2)}(u_h,u_l)\leq(2c-1)d(u_h,u_l)~.\]

This proves the first part of the lemma. For the second part, assume that $l<d$ and the weight of the edge $(u_l,u_{l+1})$ is $W$. Then,
\[d_{G\cup H}^{(3)}(u_h,u_{l+1})\leq(2c-1)d(u_h,u_l)+W\leq(2c-1)(d(u_h,u_l)+W)=(2c-1)d(u_h,u_{l+1})~.\]

\end{proof}

Next, we state and prove the strong version of Theorem \ref{thm:Hkf}, for the function $f(i)=\lfloor\frac{i}{c}\rfloor\cdot c+c-1$.

\begin{theorem} \label{thm:StrongHfk}
Given two integer parameters $1<c,k$, every undirected weighted graph $G=(V,E)$ admits a hopset with the following properties.
\begin{enumerate}
    \item Size $O(c\log_ck\cdot n+c\cdot n^{1+1/k})$.
    \item Stretch $8c+3$.
    \item Hopbound $O\left(\frac{k^{1+\frac{2}{\ln c}}}{c}\right)$.
\end{enumerate}
\end{theorem}

\begin{proof}

Given $G,k,c$, let $H=H(k,f)$, for the function $f(i)=\lfloor\frac{i}{c}\rfloor\cdot c+c-1$. Recall the values $\{\lambda_j\},F,\{r_i\}$ from the definition of $H(k,f)$. Here, in the definition of $\{r_i\}$, we set the parameter $t$ to be $4c$. For our specific choice of the function $f$, we prove in Appendix \ref{appendix:FProperties} that
\begin{enumerate}
    \item $f^{-1}(i)=\lfloor\frac{i}{c}\rfloor\cdot c$.
    \item $\lambda_{ac+b}=(c+1)^a$ for every integers $a\geq0$ and $b\in[0,c-1]$.
    \item $F\leq\lceil\log_{c+1}(k+1)\rceil\cdot c$. 
    \item $r_{ac}=\left[\left(1+\frac{4}{t}\right)^c\left(\frac{t}{2}+2\right)-\left(\frac{t}{2}+1\right)\right]^ar_0$ for every $a\geq0$.
\end{enumerate}

By Lemma \ref{lemma:HkfESize} and Item $3$, the hopset $H$ has size 
\[O(Fn+\max_i(f(i)-i+1)\cdot n^{1+\frac{1}{k}})=O(c\log_ck\cdot n+c\cdot n^{1+\frac{1}{k}})~,\]
as desired.

Substituting our choice $t=4c$ in Item $4$, we get for every $a\geq0$
\[r_{ac}=\left[\left(1+\frac{4}{t}\right)^c\left(\frac{t}{2}+2\right)-\left(\frac{t}{2}+1\right)\right]^ar_0
=\left[\left(1+\frac{1}{c}\right)^c\left(2c+2\right)-\left(2c+1\right)\right]^ar_0~.\]
Using the inequality $2^x\leq1+x\leq e^x$, that holds for every $x\in[0,1]$, we get
\begin{equation} \label{eq:RBounds}
\left[2c+3\right]^ar_0=\left[2\cdot\left(2c+2\right)-\left(2c+1\right)\right]^ar_0\leq r_{ac}\leq\left[e\cdot\left(2c+2\right)-\left(2c+1\right)\right]^ar_0<\left[4c+5\right]^ar_0~.
\end{equation}

Let $u,v\in V$ be a pair of vertices, and let $u=u_0,u_1,u_2,...,u_d=v$ be the shortest path between them. We use Lemmas \ref{lemma:HJump2} and \ref{lemma:InvolvedJump} to find a path between $u$ and $v$. Starting with $h=0$, define $l$ as follows:
\begin{enumerate}
    \item If $i=score(u_h)>c$, set $l=\max\{l'\geq h\;|\;d(u_h,u_{l'})\leq\frac{r_i-r_{i-1}}{2}-r_{f^{-1}(i-1)}\}$.
    \item Otherwise, set $l=\max\{l'\geq h\;|\;d(u_h,u_{l'})\leq\frac{r_c}{c}\}$.
\end{enumerate}
If $l=d$, stop the process and denote $v'=u_h$. Otherwise, set $h\leftarrow l+1$, and continue in the same way.

This process creates a sub-sequence $u=v_0,v_1,v_2,...,v_b=v'$ of $u_0,...,u_d$, such that for every $q<b$, if $score(v_q)>c$, we have $d_{G\cup H}^{(4)}(v_q,v_{q+1})\leq(4c+3)d(v_q,v_{q+1})$ (by Lemma \ref{lemma:HJump2}). For $q<b$, if $score(v_q)\leq c$, then by Lemma \ref{lemma:InvolvedJump}, we have $d_{G\cup H}^{(3)}(v_q,v_{q+1})\leq(2c-1)d(v_q,v_{q+1})<(4c+3)d(v_q,v_{q+1})$. 

For $v'=v_b$, if $score(v')\leq c$, then by Lemma \ref{lemma:InvolvedJump} we still have $d_{G\cup H}^{(3)}(v',v)\leq(2c-1)d(v',v)<(4c+3)d(v',v)$. If $score(v')>c$, then by definition $d(v',v)\leq\frac{r_i-r_{i-1}}{2}-r_{f^{-1}(i-1)}$, where $i=score(v')$. By Lemma \ref{lemma:HJump},
\[d_{G\cup H}^{(3)}(v',v)\leq3d(v',v)+2(r_{i-1}+r_{f^{-1}(i-1)})\leq3d(v',v)+4r_F<(4c+3)d(v',v)+4r_F~.\]

When summing over the entire path, we get:
\[d_{G\cup H}^{(4b+3)}(u,v)\leq\sum_{q=0}^{b-1}(4c+3)\cdot d(v_q,v_{q+1})+(4c+3)\cdot d(v',v)+4r_F=(4c+3)d(u,v)+4r_F~.\]

To bound $b$, we notice that by Lemma \ref{lemma:HJump2}, for every $q<b$ such that $score(v_q)=j>c$ (recall that we chose $t=4c$):
\[d(v_q,v_{q+1})\geq\frac{4}{4c}r_{f^{-1}(j-1)}\geq\frac{1}{c}r_{f^{-1}(c)}=\frac{r_c}{c}~.\]
For $q<b$ such that $score(v_q)\leq c$, we choose $v_{q+1}$ such that $d(v_q,v_{q+1})\geq\frac{r_c}{c}$. To see this, recall that if $v_q=u_h$, then $v_{q+1}=u_{l+1}$, where $u_l$ is the last vertex such that $d(u_h,u_l)\leq\frac{r_c}{c}$. Therefore, $d(v_q,v_{q+1})=d(u_h,u_{l+1})>\frac{r_c}{c}$.

In conclusion, we saw that $d(v_q,v_{q+1})\geq\frac{r_c}{c}$ for \textit{any} $q<b$. That means that the number of these detours cannot be larger than $\frac{d(u,v)}{\frac{r_c}{c}}=\frac{c\cdot d(u,v)}{r_c}$. Substituting in $b$, we get
\begin{equation} \label{eq:StretchAndHopbound}
d_{G\cup H}^{(\frac{4c\cdot d(u,v)}{r_c}+3)}(u,v)\leq(4c+3)d(u,v)+4r_F~.
\end{equation}

Recall from Equation (\ref{eq:RBounds}) that $(2c+3)^ar_0\leq r_{ac}\leq(4c+5)^ar_0$ for every integer $a\geq0$. Here, $r_0$ is an arbitrary parameter that does not affect neither the construction of the hopset $H=H(k,f)$, nor the analysis we made above. Thus, for the sake of the analysis of the stretch and hopbound of a specific pair of vertices $u,v$, we may choose $r_0$ dependent on $d(u,v)$ as follows.
\[r_0=\frac{c}{(4c+5)^{\lceil\log_{c+1}(k+1)\rceil}}\cdot d(u,v)~.\]
To bound $r_F$, notice that $\{r_i\}$ is non-decreasing\footnote{This can be easily observed by the definition: $r_i=(1+\frac{4}{t})r_{i-1}+(2+\frac{4}{t})r_{f^{-1}(i-1)}>1\cdot r_{i-1}+0\cdot r_{f^{-1}(i-1)}=r_{i-1}$~.}, and recall that $F\leq\lceil\log_{c+1}(k+1)\rceil\cdot c$. Therefore,
\[r_F\leq r_{\lceil\log_{c+1}(k+1)\rceil\cdot c}\leq(4c+5)^{\lceil\log_{c+1}(k+1)\rceil}r_0=c\cdot d(u,v)~.\]
We substitute it in Equation (\ref{eq:StretchAndHopbound}) and get
\[d_{G\cup H}^{(\frac{4c\cdot d(u,v)}{r_c}+3)}(u,v)\leq(4c+3)d(u,v)+4c\cdot d(u,v)=(8c+3)d(u,v)~.\]

We use Equation (\ref{eq:RBounds}) again, to bound $r_c~$.
\[r_c\geq(2c+3)r_0=\frac{(2c+3)c}{(4c+5)^{\lceil\log_{c+1}(k+1)\rceil}}\cdot d(u,v)~.\]
Thus, the hopbound for $u,v$ is at most
\begin{eqnarray*}
\frac{4c\cdot d(u,v)}{r_c}+3&\leq&\frac{4(4c+5)^{\lceil\log_{c+1}(k+1)\rceil}}{2c+3}+3
=O\left(\frac{(5(c+1))^{\log_{c+1}(k+1)}}{c}\right)\\
&=&O\left(\frac{k^{\log_{c+1}(5(c+1))}}{c}\right)
=O\left(\frac{k^{1+\log_{c+1}5}}{c}\right)\\
&=&O\left(\frac{k^{1+\frac{\ln5}{\ln(c+1)}}}{c}\right)
=O\left(\frac{k^{1+\frac{2}{\ln c}}}{c}\right)~.
\end{eqnarray*}

We conclude that the hopset $H=H(k,f)$ has stretch $8c+3$, hopbound $O\left(\frac{k^{1+\frac{2}{\ln c}}}{c}\right)$ and size 
\[O(c\log_ck\cdot n+c\cdot n^{1+\frac{1}{k}})~,\]
as desired.

\end{proof}

\subsection{Special Cases} \label{sec:HopsetSpecialCases}

To show how Theorems \ref{thm:Hkf} and \ref{thm:StrongHfk} generalize the constructions of all the state-of-the-art hopsets, we substitute several values in the function $f$ or the parameter $c$.

\vspace{5mm}
\noindent{\textbf{Stretch $\boldsymbol{(1+\epsilon)}$}.}
Choose the function $f(i)=i$. The resulting hopset $H(k,f)$ has size $O(n\log k+n^{1+\frac{1}{k}})$. It could be easily verified that the hopset $H(k,f)$ contains the construction of the hopsets of \cite{EN19,HP17}. Hence, this hopset has the same stretch and hopbound as these hopsets, namely, it has stretch $1+\epsilon$ and hopbound $O\left(\frac{\log k}{\epsilon}\right)^{\log k}$. In fact, \cite{EN19,HP17} show a different analysis of their hopsets, with the same stretch and hopbound, that exploit hop-edges from a vertex to its bunch members and only to its \textit{next} pivot (that is, the edges from a vertex $u\in A_i\setminus A_{i+1}$ to all the vertices in $B_j(u)$, for all $j>i$, and only to $p_{i+1}(u)$, instead of to $p_j(u)$ for every $j>i$). Following this analysis, and a slight modification of the sample probabilities of the sets $A_i$, the authors of \cite{EN19,HP17} provided hopsets with an improved size of $O(n^{1+\frac{1}{k}})$.

\vspace{5mm}
\noindent{\textbf{Stretch $\boldsymbol{(3+\epsilon)}$}.}
In Theorem \ref{thm:Hkf}, choose the function $f(i)=i$. Note that $f$ is a \textit{rounding function} as in Section \ref{sec:RoundingFunctionsHopsets}, with $c=1$. By the results in Appendix \ref{appendix:FProperties}, $\lambda_j=2^j$, $F\leq\lceil\log_2(k+1)\rceil$, and
\[r_i=\left[\left(1+\frac{4}{t}\right)\left(\frac{t}{2}+2\right)-\left(\frac{t}{2}+1\right)\right]^i=\left(3+\frac{8}{t}\right)^i~.\]
The result is a hopset $H$, that simultaneously for every $t>0$ has stretch $2t+3$, hopbound\newline 
$O\left(\left(3+\frac{8}{t}\right)^{\lceil\log_2(k+1)\rceil}\right)=O\left(\frac{1}{t}\cdot k^{\log_2\left(3+\frac{8}{t}\right)}\right)$ and size $O(n\log k+n^{1+\frac{1}{k}})$. Setting $t=\frac{\epsilon}{2}$, it has stretch $3+\epsilon$ and hopbound $O\left(\frac{1}{\epsilon}\cdot k^{\log_2\left(3+\frac{16}{\epsilon}\right)}\right)$. This result is essentially the same as the result of \cite{BP20}, with the factor of $\log\Lambda$ removed from the size\footnote{In Table \ref{table:hopsets}, we omitted the dependence on $k$ from the size. Actually, the size of the hopset in \cite{BP20} was $O((n^{1+\frac{1}{k}}+n\log k)\cdot\log\Lambda)$, and thus the improvement is truly by a multiplicative factor of $\log\Lambda$.}. This result improves the result of \cite{EGN19} as well, since it enables the exponent of $k$ in the hopbound be smaller than $2$ (for a large constant $\epsilon$). We note, however, that \cite{EGN19} does not have a coefficient of $O\left(\frac{1}{\epsilon}\right)$ on the hopbound.
On the other hand, the construction itself of \cite{EGN19} is identical to the construction of $H(k,f)$ in this case. Hence, the analysis of \cite{EGN19} shows that our hopset actually has the same properties as in their results.

\vspace{5mm}
\noindent{\textbf{Constant Stretch}.}
In Theorem \ref{thm:StrongHfk}, choose a constant $c$. The result is a hopset $H$, with stretch $8c+3=O(1)$, hopbound 
$O\left(k^{1+\frac{2}{\ln c}}\right)$ and size $O(n\log k+n^{1+\frac{1}{k}})$. This improves the hopset of \cite{BP20} for constant stretch (specifically, it removes the factor $\log\Lambda$ from the size).

\vspace{5mm}
\noindent{\textbf{Stretch $\boldsymbol{O(k^\epsilon)}$}.}
In Theorem \ref{thm:StrongHfk}, choose $c=\lceil k^\epsilon\rceil$ for some $0<\epsilon<1$. The result is a hopset $H$, with stretch $8\lceil k^\epsilon\rceil+3=O( k^\epsilon)$, hopbound
$O\left(\frac{k^{1+\frac{2}{\ln c}}}{c}\right)=O\left(\frac{k\cdot e^{\frac{2}{\epsilon}}}{k^\epsilon}\right)=O(e^{\frac{2}{\epsilon}}k^{1-\epsilon})=O_\epsilon(k^{1-\epsilon})$ and size $O(\epsilon^{-1}k^\epsilon\cdot n+k^\epsilon\cdot n^{1+\frac{1}{k}})=O_\epsilon(k^\epsilon\cdot n^{1+\frac{1}{k}})$. This improves the hopset of \cite{BP20} for constant stretch (again by removing the factor $\log\Lambda$ from the size).

\vspace{5mm}
\noindent{\textbf{Stretch $\boldsymbol{2k-1}$}.}
In Theorem \ref{thm:StrongHfk}, choose $c=\frac{k-2}{4}$. The result is a hopset $H$, with stretch $8c+3=2k-1$, hopbound 
$O\left(\frac{k^{1+\frac{2}{\ln c}}}{c}\right)=O\left(\frac{k^{1+\frac{2}{\ln k-\ln8}}}{k}\right)=O(1)$ and size $O(k\cdot n^{1+\frac{1}{k}})$.

We note, however, that when choosing $c=k$, Lemma \ref{lemma:InvolvedJump} implies that $d_{G\cup H}^{(2)}(u,v)=(2k-1)d(u,v)$ for every $u,v\in V$. This is true because $r_c=r_k=\infty$, thus, for $j=0$ we get $u_h=u$ and $u_l=v$. Therefore, the actual stretch and hopbound of the resulting hopset $H(k,f)$ are $2k-1$ and $2$ respectively, matching those of the hopset that is based on \cite{TZ01} from \cite{BP20} (in fact, they are exactly the same hopset).

\section{A Unified Construction of Spanners and Emulators} \label{TZ Spanners}

In this section we present two spanner constructions that are based on our general hopset algorithm from Section \ref{Generalized TZ Construction}. The main difference arises due to the fact that a spanner is a sub-graph, so we need to replace the hopset edges by shortest paths. As a consequence, the size of the spanner might increase. We show two methods, the first used by \cite{BP20} and the second by \cite{EGN19}, that handle this problem. The first solution results in a spanner with a better additive stretch than the second, but requires the desired stretch as an input (thus, not \textit{simultaneous}). The second solution results in a spanner with a smaller size, that works for all the possible multiplicative stretches simultaneously, but has a larger additive stretch. In addition to these two constructions of spanners, we also present a similar construction of an emulator, using the same techniques.

\subsection{Non-Simultaneous Spanners} \label{NSSpanner}

The first approach to construct a spanner, which is analogous to the hopset $H(k,f)$ from the previous sections, is to replace each hopset edge of $H(k,f)$ by a shortest path, only if its length is bounded by some value. More accurately, we replace every ``bunch edge", i.e., a hop-edge of the form $(u,v)$, where $u\in V$ and $v\in B_j(u)$ such that $j\in[i,f(i)]$, by a $u-v$ shortest path, only if $d_G(u,v)\leq r_F$. Here, $\{r_i\}$ is the same sequence as in Theorem \ref{thm:Hkf}. On the other hand, hop-edges of the form $(u,p_i(u))$, where $u\in V$ and $0\leq i\leq F-1$, are replaced by a shortest path regardless the distance $d_G(u,p_i(u))$. We call these hop-edges ``pivot edges".

The two key observations in the analysis of this new spanner, are that (1) in the analysis of $H(k,f)$ (specifically, in Lemma \ref{lemma:HJump}), the only bunch edges that we ever use are of weight at most $r_F$, and (2) the union of the shortest paths that replace pivot edges forms a union of $F$ forests, and thus the total number of these edges is at most $Fn$. The total number of edges that replace bunch edges, on the other hand, is at most $r_F$ times the number of bunch edges in $H(k,f)$ (recall that here we assume that the input graph is \textit{unweighted}). In Lemma \ref{lemma:HkfESize} we saw that the number of bunch edges in $H(k,f)$ is at most $4\max_i(f(i)-i+1)n^{1+\frac{1}{k}}$. Thus, we conclude that the number of edges in the new spanner is at most
\[Fn+r_F\cdot4\max_i(f(i)-i+1)n^{1+\frac{1}{k}}=O(Fn+r_F\cdot\max_i(f(i)-i+1)\cdot n^{1+\frac{1}{k}})~.\]

A slightly different analysis is required to show that the new hopset we constructed has a multiplicative stretch $\alpha$ which is essentially equal to the stretch of $H(k,f)$, and an additive stretch $\beta$ which is essentially equal to the hopbound of $H(k,f)$. Specifically, we use the fact that the graph is now unweighted, and we increase the sequence $\{r_i\}$ such that for every $i$,
\begin{equation} \label{eq:NewR}
    \frac{r_i-r_{i-1}}{2}-r_{f^{-1}(i-1)}\geq\frac{2}{t}(r_{i-1}+r_{f^{-1}(i-1)})\boldsymbol{+1}~.
\end{equation}

Consider two vertices $u,v\in V$ and the $u-v$ shortest path $P$ between them, and let $x$ be a vertex with $score(x)=i$ on the path $P$. Using the modified definition of $\{r_i\}$ and the fact that the graph is unweighted, we conclude that there must be a vertex $y$ further on $P$, say, in the direction from $u$ to $v$, that satisfies $\frac{2}{t}(r_{i-1}+r_{f^{-1}(i-1)})\leq d_G(x,y)\leq\frac{r_i-r_{i-1}}{2}-r_{f^{-1}(i-1)}$, unless $x$ is too close to $v$. This enables us to use the $3$-hops detours from Lemma \ref{lemma:HJump} between $x$ and $y$, without the need to use an additional $G$-edge, as demonstrated in Lemma \ref{lemma:HJump2}. Here, of course, these $3$ hops translate to shortest paths. For the case where $x$ is too close to $v$, in the analysis of $H(k,f)$ we modified the definition of $\{r_i\}$ to handle the last detour from $x$ to $v$. Here, instead, we simply consider the length of this last detour as the additive stretch of our spanner.

For completeness, and for presenting the specific details of the above description, we bring here the full construction and analysis of this new spanner.

Let $G=(V,E)$ be an undirected unweighted graph, let $k\geq1$ be an integer, $f:\mathbb{N}\rightarrow\mathbb{N}$ be some non-decreasing function such that $\forall_i\;f(i)\geq i$ and $t>0$ be a real number.

The basic definitions, those of $\{\lambda_j\}$ and $F$, the sets $\{A_i\}$, the pivots $\{p_i(u)\}$ and the bunches $\{B_i(u)\}$ (where $u\in V, i\in[0,F]$), are the same as in Definition \ref{def:Hkf} of $H(k,f)$. For completeness, we write them again:
\begin{enumerate}
    \item $f^{-1}(i)\coloneqq\min\{j\;|\;f(j)\geq i\}$.
    \item $\lambda_j\coloneqq1+\sum_{l<f^{-1}(j)}\lambda_l$.
    \item $F\coloneqq\min\{F'\;|\;\sum_{l<F'}\lambda_l\geq k\}$.
    \item $A_0\coloneqq V$, and for every $0\leq j<F-1$, every vertex of $A_j$ is sampled independently to $A_{j+1}$ with probability $\frac{1}{2}n^{-\frac{\lambda_j}{k}}$. Lastly, $A_F=\emptyset$.
    \item $i(u)\coloneqq\max\{i\;|\;u\in A_i\}$.
    \item $p_j(u)$ is the closest vertex to $u$ from $A_j$.
    \item $B_j(u)\coloneqq\{v\in A_j\;|\;d(u,v)<d(u,p_{j+1}(u))\}$.
\end{enumerate}

The number $t>0$ is going to be approximately the multiplicative stretch of the spanner we construct, which will be denoted by $S(k,f,t)$. One critical difference between the construction of $S(k,f,t)$ and the construction of $H(k,f)$ is that now $t$ is given in advance, and the spanner will be built based on it. That means that unlike the case of the hopset, here we build a different spanner for every desired $t$, i.e. this spanner is not simultaneous. 


We define a sequence $\{r_i\}$ similarly to its definition in Theorem \ref{thm:Hkf}, but with a slight change. We still define $r_0\coloneqq1$, but this time, to satisfy Inequality (\ref{eq:NewR}), we add $2$ to the recursive formula that defines $r_{i+1}$.
\[r_{i+1}=(1+\frac{4}{t})r_i+(2+\frac{4}{t})r_{f^{-1}(i)}+2~.\]

Now we are ready to define $S(k,f,t)$. Recall that for every $x,y\in V$, $P_{x,y}$ denotes the shortest path between $x$ and $y$.
\begin{definition} \label{def:Skft}
Given a positive integer parameter $k>0$, a positive real parameter $t>0$ and a non-decreasing function $f:\mathbb{N}\rightarrow\mathbb{N}$, such that $\forall_i\;f(i)\geq i$, the non-simultaneous spanner $S(k,f,t)$ is defined as
\[S(k,f,t)=\bigcup_{u\in V}\bigcup_{j=0}^{F-1}E(P_{u,p_j(u)})\cup\bigcup_{u\in V}\bigcup_{j=i(u)}^{f(i(u))}\bigcup_{\substack{v\in B_j(u) \\ \boldsymbol{d(u,v)\leq r_F}}}E(P_{u,v})~,\]
where $E(P)$ is the set of edges of the path $P$.
\end{definition}

In words, $S(k,f,t)$ consists of the union of all the shortest paths of the form $P_{u,p_j(u)}$, for every $u$ and $j$, and of all the shortest paths $P_{u,v}$ such that $v\in B_j(u)$ for some $j\in[i(u),f(i(u))]$ and also $d(u,v)\leq r_F$.

The limitation on the length of the ``bunch paths" enables us to bound the number of added edges by $r_F\cdot|B_j(u)|$, for specific $u$ and $j$. The following lemma uses this bound for computing the size of $S(k,f,t)$.

\begin{lemma} \label{lemma:S1ESize}
\[\E[|S(k,f,t)|]=O(Fn+r_F\cdot\max_i(f(i)-i+1)\cdot n^{1+\frac{1}{k}})~.\]
\end{lemma}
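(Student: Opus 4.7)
My plan is to decompose the spanner size into two contributions, the bunch-path part $S_B \coloneqq \bigcup_{u,j,v} E(P_{u,v})$ and the pivot-path part $S_P \coloneqq \bigcup_{u,j} E(P_{u,p_j(u)})$, and bound each in expectation.

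For the bunch paths, every triple $(u,j,v)$ contributing to $S_B$ satisfies $d(u,v) \le r_F$ by construction, and since the graph is unweighted, $|E(P_{u,v})| \le r_F$. Hence
\[
|S_B| \;\le\; r_F \cdot \sum_{u \in V} \sum_{j = i(u)}^{f(i(u))} |B_j(u)|.
\]
The sum on the right is exactly the quantity that was controlled in the proof of \lemmaref{lemma:HkfESize}: splitting the outer sum by the level $i = i(u)$, using $\E[|B_j(u)|] \le n^{\lambda_j/k}$, and invoking the constraint $\lambda_j \le 1 + \sum_{l < f^{-1}(j)} \lambda_l$ together with $\E[|A_i|] = n^{1 - \frac{1}{k}\sum_{l<i}\lambda_l}$ yields a bound of $O(F^2 n^{1+1/k})$ in expectation. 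Multiplying by $r_F$ gives $\E[|S_B|] = O(F^2 r_F n^{1+1/k})$.

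For the pivot paths, two easy reductions simplify the task. First, for $j \le i(u)$ we have $p_j(u) = u$, so the path is empty. Second, for $j > i(u)$ one checks that $p_j(u) \in B_j(u)$: indeed $p_j(u) \in A_j$ and $d(u,p_j(u)) < d(u,p_{j+1}(u))$ by the definition of the bunch. Consequently, whenever $j \in (i(u), f(i(u))]$ and $d(u, p_j(u)) \le r_F$, the pivot path $P_{u,p_j(u)}$ is already accounted for in $S_B$ and contributes no new edges. Only the \emph{long} or \emph{high-level} pivot paths remain, i.e., those with $j > f(i(u))$ or with $d(u, p_j(u)) > r_F$. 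My plan is to handle them by a charging argument: cut each such path into consecutive sub-paths of length at most $r_F$, and charge each sub-path to a bunch path of an appropriately chosen pivot-type vertex sitting on it, so that its edges are absorbed into the $S_B$ count. The hard part — and the main obstacle in the proof — is to verify that every sub-path produced by the cutting actually corresponds to a bunch path of the form included in the construction (i.e., that the indices $j$ and $f(i(\cdot))$ line up), and that each edge of $G$ is charged only $O(1)$ times across all such sub-paths, so that the total pivot contribution is dominated by the same $O(F^2 r_F n^{1+1/k})$ bound.
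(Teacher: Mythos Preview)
Your treatment of the bunch-path contribution $S_B$ is correct and matches the paper exactly: each retained path has at most $r_F$ edges, so $|S_B|\le r_F\cdot\sum_{u}\sum_{j=i(u)}^{f(i(u))}|B_j(u)|$, and the right-hand sum is $O(F^2 n^{1+1/k})$ in expectation by the same computation as in \lemmaref{lemma:HkfESize}.

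The pivot-path part, however, has a genuine gap. First, the claim that $p_j(u)\in B_j(u)$ for $j>i(u)$ is not always true: if $p_j(u)\in A_{j+1}$ (or more generally if some vertex of $A_{j+1}$ lies at distance $d(u,p_j(u))$ from $u$), then $d(u,p_j(u))=d(u,p_{j+1}(u))$ and the strict inequality defining $B_j(u)$ fails. Second, and more seriously, even granting that reduction you are left with all pivot paths of level $j>f(i(u))$ and all pivot paths longer than $r_F$, and you do not actually carry out the proposed charging argument; you yourself flag it as ``the main obstacle''. There is no reason to expect the sub-paths you describe to coincide with bunch paths included in the construction, so the sketch does not close.

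The paper avoids this entirely with a one-line structural observation: if $v$ lies on the shortest path $P_{u,p_j(u)}$, then $p_j(v)=p_j(u)$ (since $d(u,p_j(v))\le d(u,v)+d(v,p_j(v))\le d(u,v)+d(v,p_j(u))=d(u,p_j(u))$ and ties are broken consistently). Hence for each fixed $j$ the union $\bigcup_u E(P_{u,p_j(u)})$ is a forest on $V$ and contributes at most $n-1$ edges; summing over $j$ gives $|S_P|\le Fn$, which is trivially $O(F^2 r_F n^{1+1/k})$. No charging, no case analysis on path lengths, and no dependence on $r_F$ at all for this term.
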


\begin{proof}

Note that $S(k,f,t)$ consists of two types of paths, i.e., $S(k,f,t)=S_1\cup S_2$, where
\[S_1=\bigcup_{u\in V}\bigcup_{j=0}^{F-1}E(P_{u,p_j(u)})\]
(``pivot paths"), and
\[S_2=\bigcup_{u\in V}\bigcup_{j=i(u)}^{f(i(u))}\bigcup_{\substack{v\in B_j(u) \\ d(u,v)\leq r_F}}E(P_{u,v})\]
(``bunch paths").

For computing the size of $S_1$, notice that if some $v\in V$ is on the shortest path between $u$ and $p_j(u)$, then $p_j(v)=p_j(u)$. That is because
\[d(u,p_j(v))\leq d(u,v)+d(v,p_j(v))\leq d(u,v)+d(v,p_j(u))=d(u,p_j(u))~,\]
so by $p_j(u)$'s definition, it must be that $d(u,p_j(u))=d(u,p_j(v))$. Therefore, the inequalities above must be equalities, and thus $d(v,p_j(u))=d(v,p_j(v))$. Since we assume that ties are broken in a consistent manner, we get that $p_j(u)=p_j(v)$.

Therefore, for a fixed $j$ and a fixed $x\in A_j$, when looking at the union of all the shortest paths $P_{u,x}$ where $x=p_j(u)$, this union is a tree - it is exactly the shortest path tree from $x$ to all these vertices $u$. Also, clearly for different $x,y\in A_j$, these trees are disjoint (because each $u\in V$ has only one $j$'th pivot). Hence, for a fixed $j$,
\[|\bigcup_{x\in A_j}\bigcup_{u|p_j(u)=x}E(P_{u,x})|\leq n\;\;\;\Rightarrow\]
\[\Rightarrow\;\;\;|S_1|=|\bigcup_{u\in V}\bigcup_{j=0}^{F-1}E(P_{u,p_j(u)})|=|\bigcup_{j=0}^{F-1}\bigcup_{x\in A_j}\bigcup_{u|p_j(u)=x}E(P_{u,x})|\leq Fn~.\]

For $|S_2|$, notice that for every fixed $u\in V$ and $j\in[i(u),f(i(u))]$, we add the paths $\{P_{u,v}\}$, where $v\in B_j(u)$ and $d(u,v)\leq r_F$. That is, we add at most $|B_j(u)|$ paths, each of length at most $r_F$, which is at most $|B_j(u)|\cdot r_F$ edges for each $u,j$. So:
\[|S_2|=|\bigcup_{u\in V}\bigcup_{j=i(u)}^{f(i(u))}\bigcup_{\substack{v\in B_j(u) \\ d(u,v)\leq r_F}}E(P_{u,v})|\leq r_F\cdot\sum_{u\in V}\sum_{j=i(u)}^{f(i(u))}|B_j(u)|~.\]
In the proof of Lemma \ref{lemma:HkfESize} we saw that $\sum_{u\in V}\sum_{j=i(u)}^{f(i(u))}|B_j(u)|<4\max_i(f(i)-i+1)n^{1+\frac{1}{k}}$. We conclude that the overall size of $S(k,f,t)$ is, in expectation,
\[\E[|S(k,f,t)|]\leq\E[|S_1|]+\E[|S_2|]=O(Fn+r_F\cdot\max_i(f(i)-i+1)\cdot n^{1+\frac{1}{k}})~.\]

\end{proof}

The analysis of the multiplicative and additive stretch is very similar to the case of the hopset $H(k,f)$ in Section \ref{Stretch and Hopbound Analysis Method}. We use the same definition of a \textit{score} of a vertex $u\in V$:
\[score(u)=\max\{i>0\;|\;d(u,p_i(u))>r_i\text{ and }\forall_{j\in[f^{-1}(i-1),i-1]}\;d(u,p_j(u))\leq r_j\}~.\]

The following lemma and its proof are analogous to Lemma \ref{lemma:HJump}. We denote $S=S(k,f,t)$.

\begin{lemma} \label{lemma:S1Jump}
Suppose that $u\in V$ has $score(u)=i$, then for every $u'\in V$ such that\newline
$d(u,u')\leq \frac{r_i-r_{i-1}}{2}-r_{f^{-1}(i-1)}$:
\[d_S(u,u')\leq3d(u,u')+2(r_{i-1}+r_{f^{-1}(i-1)})~.\]
Moreover, if also $d(u,u')\geq \frac{2}{t}(r_{i-1}+r_{f^{-1}(i-1)})$, then:
\[d_S(u,u')\leq(t+3)d(u,u')~.\]
\end{lemma}

\begin{proof}
Let $u\in V$ be some vertex with $score(u)=i>0$ and suppose that $u'\in V$ satisfies $d(u,u')\leq \frac{r_i-r_{i-1}}{2}-r_{f^{-1}(i-1)}$. By our construction, the paths $P_{u,p_{f^{-1}(i-1)}(u)}$ and $P_{u',p_{i-1}(u')}$ are contained in $S$. We now bound the distance between $p_{f^{-1}(i-1)}(u)$ and $p_{i-1}(u')$. Recall that $score(u)=i$, and therefore for every $j\in[f^{-1}(i-1),i-1]$ we have $d(u,p_j(u))\leq r_j$. In particular,
\[d(u,p_{f^{-1}(i-1)}(u))\leq r_{f^{-1}(i-1)}~,\]
and also
\[d(u',p_{i-1}(u'))\leq d(u',p_{i-1}(u))\leq d(u',u)+d(u,p_{i-1}(u))\leq d(u',u)+r_{i-1}~.\]
Hence,
\begin{eqnarray*}
d(p_{f^{-1}(i-1)}(u),p_{i-1}(u'))&\leq&d(p_{f^{-1}(i-1)}(u),u)+d(u,u')+d(u',p_{i-1}(u'))\\
&\leq&r_{f^{-1}(i-1)}+d(u,u')+(d(u,u')+r_{i-1})\\
&=&2d(u,u')+r_{f^{-1}(i-1)}+r_{i-1}\\
&\leq&2(\frac{r_i-r_{i-1}}{2}-r_{f^{-1}(i-1)})+r_{f^{-1}(i-1)}+r_{i-1}\\
&=&r_i-r_{f^{-1}(i-1)}~.
\end{eqnarray*}

For convenience, we denote $u_0=p_{f^{-1}(i-1)}(u)$, and we also bound the distance $d(u_0,p_i(u_0))$:
\[r_i<d(u,p_i(u))\leq d(u,p_i(u_0))\leq d(u,u_0)+d(u_0,p_i(u_0))\leq r_{f^{-1}(i-1)}+d(u_0,p_i(u_0))\;\;\Rightarrow\]
\[\Rightarrow\;\;d(u_0,p_i(u_0))>r_i-r_{f^{-1}(i-1)}~.\]

Recall that $P_{u_0,v}$ is contained in $S(k,f,t)$, whenever $v\in B_j(u_0)$, $j\in[i(u_0),f(i(u_0))]$ and $d(u_0,v)\leq r_F$. Since $u_0$ is a $f^{-1}(i-1)$'th pivot, we know that $i(u_0)\geq f^{-1}(i-1)$, so using the fact that $f$ is non-decreasing:
\[i-1\leq f(f^{-1}(i-1))\leq f(i(u_0))~.\]

Also, since  $d(u_0,p_i(u_0))>r_i-r_{f^{-1}(i-1)}>0$, it cannot be that $i(u_0)\geq i$ (otherwise $u_0=p_i(u_0)$, so $d(u_0,p_i(u_0))=0$). Then we got that $i-1\in[i(u_0),f(i(u_0))]$.

Since $d(u_0,p_{i-1}(u'))\leq r_i-r_{f^{-1}(i-1)}<d(u_0,p_i(u_0))$, we know that $p_{i-1}(u')\in B_{i-1}(u_0)$, and also $d(u_0,p_{i-1}(u'))\leq r_i-r_{f^{-1}(i-1)}\leq r_F$. So by definition, $P_{u_0,p_{i-1}(u')}$ is contained is $S(k,f,t)$.

The length of the path $P_{u,u_0}\circ P_{u_0,p_{i-1}(u')}\circ P_{p_{i-1}(u'),u'}$, which we proved that is contained in $S$, is at most:
\[r_{f^{-1}(i-1)}+(2d(u,u')+r_{f^{-1}(i-1)}+r_{i-1})+(d(u,u')+r_{i-1})=\]
\[=3d(u,u')+2(r_{f^{-1}(i-1)}+r_{i-1})~.\]

If it happens to be that the above $u,u'$ satisfy also $r_{i-1}+r_{f^{-1}(i-1)}\leq\frac{t}{2}d(u,u')$ (or equivalently $d(u,u')\geq \frac{2}{t}(r_{i-1}+r_{f^{-1}(i-1)})$), then this path between $u,u'$ has weight at most
\[3d(u,u')+td(u,u')=(t+3)d(u,u')~,\]
i.e. a multiplicative stretch of $t+3$.
\end{proof}

Let $u,v\in V$ be a pair of vertices. We use Lemma \ref{lemma:S1Jump} for finding a path between $u,v$, that consists of these ``jumps" that the lemma provides. Denote the shortest path between them by $u=u_0,u_1,u_2,...,u_d=v$.

\begin{lemma} \label{lemma:S1Jump2}
Fix $0\leq h\leq d$. Suppose that $score(u_h)=i$. One of the following holds:
\begin{enumerate}
    \item $d(u_h,v)\leq\frac{2}{t}(r_{i-1}+r_{f^{-1}(i-1)})$,
    \item $\exists_{l>h}$ such that $\frac{2}{t}(r_{i-1}+r_{f^{-1}(i-1)})<d(u_h,u_l)\leq\frac{r_i-r_{i-1}}{2}-r_{f^{-1}(i-1)}$.
\end{enumerate}

In addition, if (1) holds, then
\[d_S(u_h,v)\leq3d(u_h,v)+4r_F~,\]
and if (2) holds, then
\[d_S(u_h,u_l)\leq(t+3)d(u_h,u_l)~.\]
\end{lemma}

\begin{proof}
We first prove that at least one of (1),(2) holds. Notice that by $\{r_i\}$'s definition:
\begin{eqnarray*}
\frac{r_i-r_{i-1}}{2}-r_{f^{-1}(i-1)}&=&\frac{((1+\frac{4}{t})r_{i-1}+(2+\frac{4}{t})r_{f^{-1}(i-1)}+2)-r_{i-1}}{2}-r_{f^{-1}(i-1)}\\
&=&\frac{2}{t}r_{i-1}+(1+\frac{2}{t})r_{f^{-1}(i-1)}+1-r_{f^{-1}(i-1)}\\
&=&\frac{2}{t}(r_{i-1}+r_{f^{-1}(i-1)})+1~.
\end{eqnarray*}

If (1) is true, then we're done. Suppose (1) doesn't hold, then
$d(u_h,v)>\frac{2}{t}(r_{i-1}+r_{f^{-1}(i-1)})$ and there must be some $h<l\leq d$ which is the first index such that $d(u_h,u_l)>\frac{2}{t}(r_{i-1}+r_{f^{-1}(i-1)})$. Since $G$ is unweighted, we actually know that
\[d(u_h,u_l)\leq\frac{2}{t}(r_{i-1}+r_{f^{-1}(i-1)})+1=\frac{r_i-r_{i-1}}{2}-r_{f^{-1}(i-1)}~,\]
thus, (2) holds.

If (1) holds, then
\[d(u_h,v)\leq\frac{2}{t}(r_{i-1}+r_{f^{-1}(i-1)})<\frac{2}{t}(r_{i-1}+r_{f^{-1}(i-1)})+1=\frac{r_i-r_{i-1}}{2}-r_{f^{-1}(i-1)}~.\]
Then by Lemma \ref{lemma:S1Jump}, we have that
\[d_S(u_h,v)\leq3d(u_h,v)+2(r_{i-1}+r_{f^{-1}(i-1)})\leq3d(u_h,v)+4r_F~,\]
when we used the fact that $\{r_i\}$ is a non-decreasing sequence.

If (2) holds, then directly from Lemma \ref{lemma:S1Jump}, we have that
\[d_S(u_h,u_l)\leq(t+3)d(u_h,u_l)~,\]
as desired.

\end{proof}

\begin{theorem} \label{thm:Skft}
Fix an integer $k\geq1$, a non-decreasing function $f:\mathbb{N}\rightarrow\mathbb{N}$ such that $\forall_i f(i)\geq i$ and a real parameter $t>0$. Define the values $\{\lambda_j\},F,\{r_i\}$ by $\lambda_j=1+\sum_{l<f^{-1}(j)}\lambda_l$ for every $j\geq0$, $F=\min\{F'\;|\;\sum_{l<F'}\lambda_l\geq k\}$, $r_0=1$ and $r_i=(1+\frac{4}{t})r_{i-1}+(2+\frac{4}{t})r_{f^{-1}(i-1)}+2$ for every $i>0$. Then, every undirected unweighted graph $G=(V,E)$ admits a spanner with
\begin{enumerate}
    \item size $O(Fn+r_F\cdot\max_i(f(i)-i+1)\cdot n^{1+1/k})$,
    \item multiplicative stretch $t+3$, and
    \item additive stretch $4r_F$.
\end{enumerate}

\end{theorem}

\begin{proof}
Given $G,k,f,t$, build $S(k,f,t)$ on $G$. By Lemma \ref{lemma:S1ESize}, this spanner has the desired size, in expectation.

Let $u,v\in V$ be a pair of vertices, and let $u=u_0,u_1,u_2,...,u_d=v$ be the shortest path between them. We use Lemma \ref{lemma:S1Jump2} to find a path between $u$ and $v$. Starting with $h=0$, find $l>h$ such that $\frac{2}{t}(r_{i-1}+r_{f^{-1}(i-1)})<d(u_h,u_l)\leq\frac{r_i-r_{i-1}}{2}-r_{f^{-1}(i-1)}$, where $i=score(u_h)$. If there is no such $l$, stop the process and denote $v'=u_h$. Otherwise, set $h\leftarrow l$, and continue in the same way.

This process creates a subsequence of $u_0,...,u_d$: $u=v_0,v_1,v_2,...,v_b=v'$, such that for every $q<b$ we have $d_S(v_q,v_{q+1})\leq(t+3)d(v_q,v_{q+1})$ (by Lemma \ref{lemma:S1Jump2}). For $v'=v_b$ we have $d(v',v)\leq\frac{2}{t}(r_{i-1}+r_{f^{-1}(i-1)})$, where $i=score(v')$. For this last segment, we get from Lemma \ref{lemma:S1Jump2} that $d_S(v',v)\leq3d(v',v)+4r_F$.

When summing over the entire path, we get:
\begin{eqnarray*}
d_S(u,v)&\leq&\sum_{q=0}^{b-1}(t+3)d(v_q,v_{q+1})+3d(v',v)+4r_F\\
&=&(t+3)d(u,v')+3d(v',v)+4r_F\\
&\leq&(t+3)d(u,v)+4r_F~.
\end{eqnarray*}

\end{proof}

Note that in Theorem \ref{thm:Skft} we used a different sequence $\{r_i\}$ than in Theorem \ref{thm:Hkf}. The next lemma shows that these two sequences are the same up to a factor of at most $2$.

\begin{lemma} \label{lemma:RRelations}
Let $\{r_i\}$ be the sequence from Theorem \ref{thm:Skft}. That is, given a fixed number $t>0$, $r_0=1$ and $r_i=\left(1+\frac{4}{t}\right)r_{i-1}+\left(2+\frac{4}{t}\right)r_{f^{-1}(i-1)}+2$ for every $i>0$. Let $\{r'_i\}$ be the sequence from Theorem \ref{thm:Hkf}. That is, $r'_0=1$ and $r'_i=\left(1+\frac{4}{t}\right)r'_{i-1}+\left(2+\frac{4}{t}\right)r'_{f^{-1}(i-1)}$. Then, for every $i\geq0$
\[r_i=\left(1+\frac{t}{t+4}\right)r'_i-\frac{t}{t+4}~.\]
\end{lemma}

\begin{proof}

For every $x,y\in\mathbb{R}$, denote
\[Z(x,y)=\left(1+\frac{4}{t}\right)x+\left(2+\frac{4}{t}\right)y~.\]
Then, for every $i>0$ we have $r_i=Z(r_{i-1},r_{f^{-1}(i-1)})+2$ and $r'_i=Z(r'_{i-1},r'_{f^{-1}(i-1)})$. Note also that $Z$ is linear, in the sense that for every $a,x,y,x',y'\in\mathbb{R}$, $Z(x+x',y+y')=Z(x,y)+Z(x',y')$ and $Z(ax,ay)=aZ(x,y)$. Lastly, notice that
\begin{equation} \label{eq:FixedPoint}
Z\left(-\frac{t}{t+4},-\frac{t}{t+4}\right)=-\frac{t}{t+4}\cdot Z(1,1)=-\frac{t}{t+4}\cdot\left(3+\frac{8}{t}\right)=-\frac{3t+8}{t+4}=-\frac{t}{t+4}-2~.
\end{equation}

We now prove by induction over $i\geq0$, that $r_i=\left(1+\frac{t}{t+4}\right)r'_i-\frac{t}{t+4}$. For $i=0$, $\left(1+\frac{t}{t+4}\right)r'_0-\frac{t}{t+4}=1+\frac{t}{t+4}-\frac{t}{t+4}=1=r_0$. For $i>0$, assuming correctness for all $j<i$, we get
\begin{eqnarray*}
r_i&=&Z(r_{i-1},r_{f^{-1}(i-1)})+2\\
&=&Z\left(\left(1+\frac{t}{t+4}\right)r'_{i-1}-\frac{t}{t+4},\left(1+\frac{t}{t+4}\right)r'_{f^{-1}(i-1)}-\frac{t}{t+4}\right)+2\\
&=&\left(1+\frac{t}{t+4}\right)Z\left(r'_{i-1},r'_{f^{-1}(i-1)}\right)+Z\left(-\frac{t}{t+4},-\frac{t}{t+4}\right)+2\\
&\stackrel{(\ref{eq:FixedPoint})}{=}&\left(1+\frac{t}{t+4}\right)r'_i-\frac{t}{t+4}-2+2\\
&=&\left(1+\frac{t}{t+4}\right)r'_i-\frac{t}{t+4}~,
\end{eqnarray*}
which completes the proof.

\end{proof}

\subsubsection{Applying our Construction and Analysis for Rounding Functions} \label{sec:RoundingFunctionsHopsetsSpanners}


Similarly to Section \ref{sec:RoundingFunctionsHopsets}, we apply Theorem \ref{thm:Skft} to the function
\[f(i)=\left\lfloor\frac{i}{c}\right\rfloor\cdot c+c-1~.\]

Applying Theorem \ref{thm:Skft} on this function $f$ yields the following corollary.

\begin{corollary}[Corollary from Theorem \ref{thm:Skft}] \label{cor:SkftForRounding}
Given two integer parameters $1<c\leq k$, every undirected unweighted graph $G=(V,E)$ admits a spanner with
\begin{enumerate}
    \item size $O(c\cdot k^{1+\frac{2}{\ln c}}\cdot n^{1+\frac{1}{k}})$,
    \item multiplicative stretch $4c+3$, and
    \item additive stretch $O(k^{1+\frac{2}{\ln c}})$.
\end{enumerate}
\end{corollary}

\begin{proof}

Given an undirected unweighted graph $G=(V,E)$, apply $S(k,f,t)$ on the graph $G$, with the function $f(i)=\lfloor\frac{i}{c}\rfloor\cdot c+c-1$ and $t=4c$. By Lemma \ref{lemma:S1ESize} and the computations in Appendix \ref{appendix:FProperties}, the expected size of $S(k,f,t)$ is $O(c\log_ck\cdot n+r_F\cdot c\cdot n^{1+\frac{1}{k}})$. We saw in the proof of Theorem \ref{thm:StrongHfk} that 
\[r'_F\leq(4c+5)^{\lceil\log_{c+1}(k+1)\rceil}r_0=(4c+5)^{\lceil\log_{c+1}(k+1)\rceil}=O(k^{1+\frac{2}{\ln c}})~,\]
for the sequence $\{r'_i\}$ that satisfies $r'_0=1$ and $r'_i=\left(1+\frac{4}{t}\right)r'_{i-1}+\left(2+\frac{4}{t}\right)r'_{f^{-1}(i-1)}$ for every $i>0$, where $t=4c$. By Lemma \ref{lemma:RRelations}, our sequence $\{r_i\}$, that satisfies $r_0=1$ and $r_i=\left(1+\frac{4}{t}\right)r_{i-1}+\left(2+\frac{4}{t}\right)r_{f^{-1}(i-1)}+2$, is at most twice as large. Thus, the expected size of $S(k,f,t)$ is
\[O(c\log_ck\cdot n+r_F\cdot c\cdot n^{1+\frac{1}{k}})=O(c\log_ck\cdot n+c\cdot k^{1+\frac{2}{\ln c}}\cdot n^{1+\frac{1}{k}})=O(c\cdot k^{1+\frac{2}{\ln c}}\cdot n^{1+\frac{1}{k}})~.\]
The last step is due to the fact that $c\log_ck\leq k$ for $2\le c\le k$.

The multiplicative and additive stretch of $S(k,f,t)$ are $4c+3$ and $4r_F=O(k^{1+\frac{2}{\ln c}})$, respectively.

\end{proof}

We now substitute several values in $f$ and in $c$ in Theorem \ref{thm:Skft} and in Corollary \ref{cor:SkftForRounding}. 

\vspace{5mm}
\noindent{\textbf{Multiplicative Stretch $\boldsymbol{(3+\epsilon)}$}.}
In Theorem \ref{thm:Skft}, choose the function $f(i)=i$ and $t=\epsilon$. Note that $f$ is a \textit{rounding function}, with $c=1$. By the results in Appendix \ref{appendix:FProperties}, $\lambda_j=2^j$, $F\leq\lceil\log_2(k+1)\rceil$, and
\[r_i\leq2\left[\left(1+\frac{4}{t}\right)\left(\frac{t}{2}+2\right)-\left(\frac{t}{2}+1\right)\right]^i=2\left(3+\frac{8}{t}\right)^i=2\left(3+\frac{8}{\epsilon}\right)^i~,\]
where the multiplication by $2$ is due to Lemma \ref{lemma:RRelations}.
The result is a spanner $S$, with multiplicative stretch $t+3=3+\epsilon$, additive stretch\newline 
$O\left(\left(3+\frac{8}{\epsilon}\right)^{\lceil\log_2(k+1)\rceil}\right)=O\left(k^{\log_2\left(3+\frac{8}{\epsilon}\right)}\right)$ and size 
\[O(n\log k+r_F\cdot n^{1+\frac{1}{k}})=O(n\log k+k^{\log_2\left(3+\frac{8}{\epsilon}\right)}\cdot n^{1+\frac{1}{k}})=O(k^{\log_2\left(3+\frac{8}{\epsilon}\right)}\cdot n^{1+\frac{1}{k}})~.\]

\vspace{5mm}
\noindent{\textbf{Constant Multiplicative Stretch}.}
In Corollary \ref{cor:SkftForRounding}, choose a constant $c$. The result is a spanner $S$, with multiplicative stretch $4c+3=O(1)$, additive stretch\newline 
$O\left(k^{1+\frac{2}{\ln c}}\right)$ and size $O(k^{1+\frac{2}{\ln c}}\cdot n^{1+\frac{1}{k}})$.

\vspace{5mm}
\noindent{\textbf{Multiplicative Stretch $\boldsymbol{O(k^\epsilon)}$}.}
In Corollary \ref{cor:SkftForRounding}, choose $c=\lceil k^\epsilon\rceil$ for some $0<\epsilon<1$. The result is a spanner $S$, with multiplicative stretch $4\lceil k^\epsilon\rceil+3=O(k^\epsilon)$, additive stretch
$O\left(k^{1+\frac{2}{\ln c}}\right)=O\left(k\cdot e^{\frac{2}{\epsilon}}\right)=O_\epsilon(k)$ and size $O(e^{\frac{2}{\epsilon}}\cdot k^{1+\epsilon}\cdot n^{1+\frac{1}{k}})=O_\epsilon(k^{1+\epsilon}\cdot n^{1+\frac{1}{k}})$.

\vspace{5mm}
\noindent{\textbf{Multiplicative Stretch $\boldsymbol{O(k)}$}.}
Using Corollary \ref{cor:SkftForRounding} with the function $f(i)=k-1$ (i.e., $c=k$), we obtain a spanner with multiplicative stretch $4k+3$, additive stretch $O(k^{1+\frac{2}{\ln k}})=O(k)$ and size $O(k\cdot k^{1+\frac{2}{\ln k}}\cdot n^{1+\frac{1}{k}})=O(k^2n^{1+\frac{1}{k}})$. In fact, a modified analysis, similar to the one in Section \ref{sec:RoundingFunctionsHopsets}, gives multiplicative stretch $2k-1$ and additive stretch $0$. This is specifically implied by an analogous lemma to Lemma \ref{lemma:InvolvedJump}, as for $c=k$, every vertex $u$ has $score(u)\leq c$. See Section \ref{sec:EmulatorSpecialCases} for a more elaborated discussion in the case of emulators. As mentioned in Lemma \ref{lemma:InvolvedJump}, its technique was used in the celebrated result by Thorup and Zwick \cite{TZ01}, that provides a spanner (and other structures) with multiplicative stretch $2k-1$ and additive stretch $0$. Their spanner, however, has size $O(kn^{1+\frac{1}{k}})$, which is smaller than our spanner by a factor of $k$. This size can be achieved in our case as well, by slightly changing our construction: instead of inserting only short bunch paths into $S(k,f,t)$ (with length at most $r_F$; see Definition \ref{def:Skft}), insert them into $S(k,f,t)$ regardless of their length. Since, for $c=k$, this construction is the exact same construction as of \cite{TZ01}, their size analysis holds, and we obtain size of $O(kn^{1+\frac{1}{k}})$.

\subsection{Simultaneous Spanners} \label{SSpanner}
In the previous spanner construction, we solved the problem of bounding the size of the spanner by limiting the length of the added paths. We now use a different solution, inspired by the \textit{half bunches} of \cite{Pet09}. In this approach, instead of connecting (by a shortest path) every vertex $u$ with all the vertices of its bunch $B_j(u)$, we connect $u$ only to a subset of it, called half bunch. The main argument regarding these half bunches, as introduces in \cite{Pet09}, is that the number of edges in the union of all the added shortest paths is relatively small. Specifically, it is roughly
\[\sum_{u\in V}\sum_{j=i(u)}^{f(i(u))}|B_j(u)|^3~.\]
Besides the power of $3$, this is exactly the number of bunch edges (hop-edges between a vertex and a member of its bunch) in the hopset $H(k,f)$. To handle the power of $3$, we must change the sampling probabilities, i.e., change the values $\lambda_i$, such that the size of each bunch $B_j(u)$ will be smaller. The modification of the sampling probabilities does not affect the stretch and the hopbound of $H(k,f)$, which translate into the multiplicative and additive stretch of the new spanner respectively. However, the usage of half bunches instead of bunches forces us to increase the elements of the sequence $\{r_i\}$, in order to maintain Lemma \ref{lemma:HJump}. This, in turn, makes the additive stretch larger, with respect to the hopbound of the original hopset $H(k,f)$, and to the additive stretch of the spanner $S(k,f,t)$.

We now fully and formally describe the construction that uses half bunches. We again start with a given undirected unweighted graph $G=(V,E)$, an integer $k\geq1$ and a non-decreasing function $f:\mathbb{N}\rightarrow\mathbb{N}$ such that $\forall_i\;f(i)\geq i$. Instead of defining $\{\lambda_j\}$ and $F$ the same as before, we now leave them as parameters to be chosen later, and define $\{A_i\}$, $\{p_i(u)\}$ and $\{B_i(u)\}$ on top of them, the same way as in Section \ref{Generalized TZ Construction} and in subSection \ref{NSSpanner}.

We now define the \textit{half bunch} of a vertex $u\in V$:
\begin{definition}
Given $u\in V$ and $j\in[0,F-1]$, the $j$'th {\em half bunch} of $u$ is the set
\[B_j^\frac{1}{2}(u)\coloneqq\{v\in A_j\;|\;d(u,v)<\frac{1}{2}d(u,p_{j+1}(u))\}~.\]
\end{definition}

Now, instead of adding to the spanner paths from a vertex $u$ to every vertex in its bunch, we only add the paths between $u$ and the vertices in its half bunch:
\begin{definition}
\[S(k,f,\{\lambda_j\},F)=\bigcup_{u\in V}\bigcup_{j=0}^{F-1}E(P_{u,p_j(u)})\cup\bigcup_{u\in V}\bigcup_{j=i(u)}^{f(i(u))}\bigcup_{v\in B_j^\frac{1}{2}(u)}E(P_{u,v})~,\]
where $E(P)$ is the set of edges of the path $P$.
\end{definition}

The choice to add only paths to vertices in the half bunch of a vertex, instead of the whole bunch of the vertex, reduce significantly the number of added edges to the spanner. We formalize and prove this claim:

\begin{lemma} \label{lemma:halfBunches}
Fix some $i\in[0,F-1]$ and $j\in[i,f(i)]$. Denote by $Q_{i,j}$ the sub-graph of $G$ that is induced by the union of all the shortest paths
\[\{P_{u,v}\;|\;u\in A_i\setminus A_{i+1}\text{ and }v\in B_j^\frac{1}{2}(u)\}~.\]
Then,
\[|E_{Q_{i,j}}|\leq n+4\sum_{u\in A_i}|B_j(u)|^3~,\]
where $E_{Q_{i,j}}$ is the set of edges of $Q_{i,j}$.
\end{lemma}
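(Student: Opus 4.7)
The plan is to decompose $|E_{Q_{i,j}}|$ into a spanning-forest contribution (accounting for $n$) plus an ``excess'' term controlled by $4\sum_{u\in A_i}|B_j(u)|^3$ via a charging argument that exploits the half-bunch condition.

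First I would set up the local picture: for each $u\in A_i\setminus A_{i+1}$, by the consistent tie-breaking convention the shortest paths $\{P_{u,v} : v\in B_j^{1/2}(u)\}$ share common prefixes, and hence their union forms a shortest-path tree $T_u$ rooted at $u$. Moreover, every vertex of $T_u$ lies in the ball $B(u,\tfrac{1}{2}d(u,p_{j+1}(u)))$. In particular, if $w$ sits on some $P_{u,v}$ with $v\in B_j^{1/2}(u)$, then $d(u,w)<\tfrac{1}{2}d(u,p_{j+1}(u))$, which in the unweighted setting forces the sampled $A_j$-vertices witnessing the path to be close to $w$. This is the ``locality'' that the half-bunch enforces and that replaces the ``limit the path length by $r_F$'' idea used in the previous subsection.

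Next I would take an arbitrary spanning forest $F$ of $Q_{i,j}$, so that $|E(F)|\le |V(Q_{i,j})|-1 \le n-1$, absorbing the additive $n$. It remains to bound the number of non-forest edges, i.e.\ the cyclomatic number of $Q_{i,j}$. Each non-forest edge $e$ closes a unique cycle $C_e$ in $F\cup\{e\}$, and along $C_e$ the ``label'' $(u,v)$ of the containing path $P_{u,v}$ must change at least twice (since shortest paths between any two vertices are unique). So every such cycle is witnessed by at least two distinct pairs $(u_1,v_1),(u_2,v_2)$ whose paths meet at several vertices in configurations forbidden by uniqueness of shortest paths alone. My intended charging sends each non-forest edge to a triple $(u,x,y)$ where $u\in A_i$ is associated with one of the colliding trees and $x,y$ are two vertices in $B_j(u)$ identified from the local structure of the cycle (for example, $x$ an $A_j$-vertex that one of the interleaving paths is heading towards, and $y$ a pivot lying close to the crossing point). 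Using the half-bunch inequality $d(u,v)<\tfrac12 d(u,p_{j+1}(u))$ I would argue $x,y\in B_j(u)$, and that the map is at most $4|B_j(u)|$-to-one, yielding the factor $4|B_j(u)|^3$ after summing over $u\in A_i$.

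The main obstacle will be the charging step: identifying the right three ``witnesses'' so that the cubic factor is correct and the map is injective up to a constant. The delicate part is that a single cycle can involve many trees $T_u$ and many $A_j$-vertices, so one must choose a canonical triple from the cycle (likely the minimum-$d(u,\cdot)$ pivot and a pair of distinguished terminals) and verify, using the half-bunch inequality together with the uniqueness of shortest paths, both that these vertices indeed lie in the appropriate bunch $B_j(u)$ and that no $(u,x,y)$ triple can be produced by more than four non-forest edges. Once this is done, the stated bound follows immediately by combining the forest bound with the charging bound.
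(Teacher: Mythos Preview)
Your outline diverges from the paper's argument and, as you yourself flag, the charging step is where it stalls. The difficulty is real: once you pass to a spanning forest, a non-forest edge $e$ closes a cycle $C_e$ in $F\cup\{e\}$ that may thread through many trees $T_u$ and many half-bunch paths, and there is no evident canonical way to extract a triple $(u,x,y)$ with $x,y\in B_j(u)$ from $C_e$ so that the preimage sizes are controlled by $O(|B_j(u)|)$. Your suggested witnesses (``an $A_j$-vertex one of the paths is heading towards'', ``a pivot close to the crossing point'') are not pinned down, and nothing in the cycle structure singles out a bounded number of non-forest edges per triple. As written this is a plan with its main lemma missing, not a proof.

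The paper avoids the cycle-space viewpoint entirely. It adds the paths $P_1,P_2,\dots$ one at a time and, for each vertex $z$, charges every degree increase \emph{after the first one} to the 4-tuple $(u,v,x,y)$, where $P_{u,v}$ is the first path that touched $z$ and $P_{x,y}$ is the current path. Because two shortest paths intersect in a subpath, this map is at most $2$-to-$1$. The half-bunch inequality is then used exactly once, at the level of two crossing paths: if $P_{u,v}$ and $P_{x,y}$ share a vertex with $v\in B_j^{1/2}(u)$, $y\in B_j^{1/2}(x)$ and, say, $d(u,v)\le d(x,y)$, then $d(x,u)\le d(x,y)+d(u,v)\le 2d(x,y)<d(x,p_{j+1}(x))$, so $u,v,y\in B_j(x)$. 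Hence the image lies in $\bigcup_{w\in A_i}\{w\}\times B_j(w)^3$ up to the symmetric case, giving $|Im(\varphi)|\le 2\sum_{w}|B_j(w)|^3$, and summing $\deg(z)\le 2+2|\{l:(l,z)\text{ mapped}\}|$ over $z$ yields the bound. The ``$n$'' term here comes from the first degree increase at each vertex, not from a spanning forest. If you want to salvage your approach, the missing idea is precisely this: work with \emph{pairs of crossing paths} rather than cycles, so that the half-bunch inequality directly places three of the four endpoints into a single bunch.
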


\begin{proof}
Let $\{P_1,P_2,P_3,...\}$ be an enumeration of the paths in
\[\{P_{u,v}\;|\;u\in A_i\setminus A_{i+1}\text{ and }v\in B_j^\frac{1}{2}(u)\}~.\]

Suppose we are building the graph $Q_{i,j}$ by adding these paths one by one. Denote by $d_{l,z}$ the degree of the vertex $z\in V$, after the paths $\{P_1,P_2,...,P_l\}$ were added (also we denote $d_{0,z}=0$ for every $z\in V$). We define a mapping from pairs of the form $(l,z)$ to tuples of the form $(u,v,x,y)\in V^4$:

Given $z\in V$ and $l>0$, if $d_{l,z}>d_{l-1,z}>0$ (which means that the addition of $P_l$ increased $z$'s degree, which was already positive before that), set $\varphi(l,z)=(u,v,x,y)$, where $P_l=P_{x,y}$ and $P_{u,v}$ is the first path such that after it was added, the degree of $z$ became positive. If the condition $d_{l,z}>d_{l-1,z}>0$ is not satisfied, we say that $(l,z)$ is {\em not mapped}.

Fix some $(u,v,x,y)\in Im(\varphi)$ and consider the set $\varphi^{-1}(u,v,x,y)=\{(l,z)\;|\;\varphi(l,z)=(u,v,x,y)\}$. For every $(l,z)$ in this set, $l$ is the unique index such that $P_l=P_{x,y}$. Also, $z\in P_{u,v}\cap P_{x,y}$, where the addition of $P_{x,y}$ increased $z$'s degree. Notice that since $P_{u,v}$ and $P_{x,y}$ are shortest paths, $P_{u,v}\cap P_{x,y}$ must be a shortest path, and the only vertices that $P_{x,y}$ could increase their degrees are the two ends of this shortest path (because for every internal vertex of $P_{u,v}\cap P_{x,y}$, its adjacent edges from $P_{u,v}$ and its adjacent edges from $P_{x,y}$ are the same edges). Therefore, each pair in $\varphi^{-1}(u,v,x,y)$ has the same $l$ and one of two possible values of $z$, i.e., $|\varphi^{-1}(u,v,x,y)|\leq2$.

So, $\varphi$ is actually a $2$-to-$1$ mapping, and we get:
\[|\{(l,z)\;|\;(l,z)\text{ is mapped}\}|\leq2Im(\varphi)~.\]

Let $(u,v,x,y)\in Im(\varphi)$. We know that there is $z\in V$ such that $z$ is contained both in $P_{u,v}$ and $P_{x,y}$. Recall that both paths were added to the spanner because $v\in B_j^\frac{1}{2}(u)$ and $y\in B_j^\frac{1}{2}(x)$. Thus, if $d(u,v)\leq d(x,y)$, we get:
\[d(x,u)\leq d(x,z)+d(z,u)\leq d(x,y)+d(u,v)\leq2d(x,y)<2\cdot\frac{1}{2}d(x,p_{j+1}(x))=d(x,p_{j+1}(x))~.\]
Therefore $u\in B_j(x)$. Similarly, $v\in B_j(x)$ and also of course $y\in B_j^\frac{1}{2}(x)\subseteq B_j(x)$. If we instead have $d(x,y)\leq d(u,v)$, then symmetrically we get $x,y,v\in B_j(u)$.

Therefore, we can estimate the size of $Im(\varphi)$:
\[Im(\varphi)\subseteq\{(u,v,x,y)\;|\;u\in A_i\text{, }v,x,y\in B_j(u)\}\cup\{(u,v,x,y)\;|\;x\in A_i\text{, }y,u,v\in B_j(x)\}\;\Rightarrow\]
\[\Rightarrow\;\;\;|Im(\varphi)|\leq2\sum_{u\in A_i}|B_j(u)|^3~.\]

The last thing to notice, is that for a given $z\in V$, each added path can increase its degree by at most $2$. The number of times that $z$'s degree is increased, except for the first time, is equal to the number of pairs $(l,z)$ that are mapped. So:
\begin{eqnarray*}
|E_{Q_{i,j}}|&=&\frac{1}{2}\sum_{z\in V}deg_{Q_{i,j}}(z)\\
&\leq&\frac{1}{2}\sum_{z\in V}(2+2|\{l\;|\;(l,z)\text{ is mapped}\}|)\\
&=&\sum_{z\in V}1+\sum_{z\in V}|\{l\;|\;(l,z)\text{ is mapped}\}|\\
&=&n+|\{(l,z)\;|\;(l,z)\text{ is mapped}\}|\\
&\leq&n+2|Im(\varphi)|\\
&\leq&n+4\sum_{u\in A_i}|B_j(u)|^3~.
\end{eqnarray*}

\end{proof}

The previous lemma let us calculate the expected size of $S(k,f,\{\lambda_j\},F)$, and choose the parameters $\{\lambda_j\},F$ such that we get the desired spanner size. The following lemma is analogous to Lemma \ref{lemma:HkfESize}.

\begin{lemma} \label{lemma:S2ESize}
Suppose that the parameters $k,f,\{\lambda_j\},F$ satisfy
\begin{enumerate}
    \item $\sum_{l<F}\lambda_l\geq k$,
    \item $\forall_j\;\lambda_j\leq\frac{1}{3}(1+\sum_{l<f^{-1}(j)}\lambda_l)$.
\end{enumerate}
Then, $\E[|S(k,f,\{\lambda_j\},F)|]=O\left(\max_i(f(i)-i+1)\cdot(Fn+n^{1+\frac{1}{k}})\right)$.
\end{lemma}

\begin{proof}
As in the proof of Lemma \ref{lemma:HkfESize}, we can see that $\E[|A_i|]=\frac{1}{2^i}n^{1-\frac{1}{k}\sum_{j<i}\lambda_j}$ for every $0\leq i\leq F-1$.

Similarly to the proof of Lemma \ref{lemma:S1ESize}, we denote $S(k,f,\{\lambda_j\},F)=S_1\cup S_2$, where
\[S_1=\bigcup_{u\in V}\bigcup_{j=0}^{F-1}E(P_{u,p_j(u)})~,\;\;\;\;S_2=\bigcup_{u\in V}\bigcup_{j=i(u)}^{f(i(u))}\bigcup_{v\in B_j^\frac{1}{2}(u)}E(P_{u,v})~,\]
and with the same proof as of Lemma \ref{lemma:S1ESize} we can prove that $|S_1|\leq Fn$.

Now, notice that $S_2$ is exactly the union of $E_{Q_{i,j}}$ for every $i\in[0,F-1]$ and $j\in[i,f(i)]$, where the notation $E_{Q_{i,j}}$ is from Lemma \ref{lemma:halfBunches}. Therefore, from Lemma \ref{lemma:halfBunches}:
\[\E[|S_2|]\leq\sum_{i<F}\sum_{j=i}^{f(i)}(n+4\sum_{u\in A_i}\E[|B_j(u)|^3])\leq Fn\cdot\max_i(f(i)-i+1)+4\sum_{i=0}^{F-1}\sum_{u\in A_i}\sum_{j=i}^{f(i)}\E[|B_j(u)|^3]~.\]

Recall that in the proof of Lemma \ref{lemma:HkfESize} we saw that $|B_j(u)|+1$ is bounded by a geometric random variable with $p=\frac{1}{2}n^{-\frac{\lambda_j}{k}}$. Generally, for a geometric random variable $X$ with parameter $p$, it can be shown that:
\[\E[X^3]=\frac{6(1-p)}{p^3}+\frac{1}{p}\leq\frac{6}{p^3}+\frac{1}{p^3}=\frac{7}{p^3}~.\]

In our case:
\[\E[|B_j(u)|^3]\leq\E[(|B_j(u)|+1)^3]\leq56n^{\frac{3\lambda_j}{k}}~,\]

so we get
\begin{eqnarray*}
\E[|S_2|]&\leq&Fn\cdot\max_i(f(i)-i+1)+4\sum_{i=0}^{F-1}\frac{1}{2^i}n^{1-\frac{1}{k}\sum_{l<i}\lambda_l}\sum_{j=i}^{f(i)}56n^\frac{3\lambda_j}{k}\\
&=&Fn\cdot\max_i(f(i)-i+1)n+224\sum_{i=0}^{F-1}\sum_{j=i}^{f(i)}\frac{1}{2^i}n^{1+\frac{1}{k}(3\lambda_j-\sum_{l<i}\lambda_l)}~.
\end{eqnarray*}

Recall the assumption $\lambda_j\leq\frac{1}{3}(1+\sum_{l<f^{-1}(j)}\lambda_l)$, for every $j$. For every $i\geq f^{-1}(j)$, we have 
\[\lambda_j\leq\frac{1}{3}\left(1+\sum_{l<f^{-1}(j)}\lambda_l\right)\leq\frac{1}{3}\left(1+\sum_{l<i}\lambda_l\right)~.\] 
But note that $f^{-1}(j)\leq i$ if and only of $f(i)\geq j$. We conclude that for every $i,j$ such that $j\leq f(i)$, we have $3\lambda_j-\sum_{l<i}\lambda_l\leq 1$. Thus, the expected size of $S_2$ is at most
\begin{eqnarray*}
&&Fn\cdot\max_i(f(i)-i+1)+224\sum_{i=0}^{F-1}\sum_{j=i}^{f(i)}\frac{1}{2^i}n^{1+\frac{1}{k}(3\lambda_j-\sum_{l<i}\lambda_l)}\\
&\leq&Fn\cdot\max_i(f(i)-i+1)+224\sum_{i=0}^{F-1}\frac{f(i)-i+1}{2^i}n^{1+\frac{1}{k}}\\
&\leq&Fn\cdot\max_i(f(i)-i+1)+224\max_i(f(i)-i+1)n^{1+\frac{1}{k}}\cdot\sum_{i=0}^{F-1}\frac{1}{2^i}\\
&=&O\left(\max_i(f(i)-i+1)\cdot(Fn+n^{1+\frac{1}{k}})\right)~.
\end{eqnarray*}

We conclude that the total expected size of $S(k,f,\{\lambda_j\},F)$ is
\[\mathbb{E}[|S_1|]+\mathbb{E}[|S_2|]=O\left(\max_i(f(i)-i+1)\cdot(Fn+n^{1+\frac{1}{k}})\right)~.\]

\end{proof}

The following definition chooses the largest $\{\lambda_j\}$ possible and the smallest $F$ possible:
\begin{definition}
Given $k,f$ we define $S(k,f)\coloneqq S(k,f,\{\lambda_j\},F)$, where
\begin{enumerate}
    \item $F\coloneqq\min\{F'\;|\;\sum_{l<F'}\lambda_l\geq k\}$,
    \item $\forall_j\;\lambda_j\coloneqq\frac{1}{3}(1+\sum_{l<f^{-1}(j)}\lambda_l)$.
\end{enumerate}
\end{definition}

We now analyse the stretch of $S(k,f)$, using a similar analysis to that of $H(k,f)$ in Section \ref{Stretch and Hopbound Analysis Method}.

Let $t>0$ be some real number, and define the following sequence:
\[r_0\coloneqq1~,\]
\[r_{i+1}\coloneqq(2+\frac{8}{t})r_i+(3+\frac{8}{t})r_{f^{-1}(i)}+4~.\]

We use the same definition of \textit{score} as before:
\[score(u)=\max\{i>0\;|\;d(u,p_i(u))>r_i\text{ and }\forall_{j\in[f^{-1}(i-1),i-1]}\;d(u,p_j(u))\leq r_j\}~.\]

The following two lemmas are analogous to lemmas \ref{lemma:S1Jump} and \ref{lemma:S1Jump2}, so instead of proving them again, we only specify the necessary changes in the proofs. We denote $S=S(k,f)$.

\begin{lemma} \label{lemma:S2Jump}
Fix $u\in V$ with $score(u)=i$. For every $u'\in V$ such that $d(u,u')\leq\frac{r_i-2r_{i-1}-3r_{f^{-1}(i-1)}}{4}$:
\[d_S(u,u')\leq3d(u,u')+2(r_{i-1}+r_{f^{-1}(i-1)})~.\]
Moreover, if also $d(u,u')\geq\frac{2}{t}(r_{i-1}+r_{f^{-1}(i-1)})$, then:
\[d_S(u,u')\leq(t+3)d(u,u')~.\]
\end{lemma}

\begin{proof}
As in Lemma \ref{lemma:S1Jump}, $P_{u,p_{f^{-1}(i-1)}(u)}$ and $P_{u',p_{i-1}(u')}$ are contained in $S$, and:
\[d(p_{f^{-1}(i-1)}(u),p_{i-1}(u'))\leq2d(u,u')+r_{f^{-1}(i-1)}+r_{i-1}~,\]
\[d(u_0,p_i(u_0))>r_i-r_{f^{-1}(i-1)}~,\]
where $u_0=p_{f^{-1}(i-1)}(u)$.

Then, by $S$'s definition, for the path $P_{u_0,p_{i-1}(u')}$ to be contained in $S$, it is enough that
\[2d(u,u')+r_{f^{-1}(i-1)}+r_{i-1}\leq\frac{1}{2}(r_i-r_{f^{-1}(i-1)})~,\]
i.e.
\[d(u,u')\leq\frac{r_i-2r_{i-1}-3r_{f^{-1}(i-1)}}{4}~,\]
which is given.

Now, the weight of the path $P_{u,u_0}\circ P_{u_0,p_{i-1}(u')}\circ P_{p_{i-1}(u'),u'}$, which we proved that is contained in $S$, is at most:
\[3d(u,u')+2(r_{f^{-1}(i-1)}+r_{i-1})~,\]
and if $d(u,u')\geq \frac{2}{t}(r_{i-1}+r_{f^{-1}(i-1)})$, then this weight is at most
\[3d(u,u')+td(u,u')=(t+3)d(u,u')~.\]
\end{proof}

Denote the shortest path between $u,v$ as $u=u_0,u_1,u_2,...,u_d=v$.

\begin{lemma} \label{lemma:S2Jump2}
Fix $0\leq h\leq d$. Suppose that $score(u_h)=i$. One of the following holds:
\begin{enumerate}
    \item $d(u_h,v)\leq\frac{2}{t}(r_{i-1}+r_{f^{-1}(i-1)})$,
    \item $\exists_{l>h}$ such that $\frac{2}{t}(r_{i-1}+r_{f^{-1}(i-1)})<d(u_h,u_l)\leq\frac{r_i-2r_{i-1}-3r_{f^{-1}(i-1)}}{4}$.
\end{enumerate}

In addition, if (1) holds, then
\[d_S(u_h,v)\leq3d(u_h,v)+4r_F~,\]
and if (2) holds, then
\[d_S(u_h,u_l)\leq(t+3)d(u_h,u_l)~.\]
\end{lemma}

\begin{proof}
Notice that by $\{r_i\}$'s definition:
\begin{eqnarray*}
\frac{r_i-2r_{i-1}-3r_{f^{-1}(i-1)}}{4}&=&\frac{((2+\frac{8}{t})r_{i-1}+(3+\frac{8}{t})r_{f^{-1}(i-1)}+4)-2r_{i-1}-3r_{f^{-1}(i-1)}}{4}\\
&=&\frac{2}{t}r_{i-1}+\frac{2}{t}r_{f^{-1}(i-1)}+1\\
&=&\frac{2}{t}(r_{i-1}+r_{f^{-1}(i-1)})+1~.
\end{eqnarray*}

Using this identity, the rest of the proof is identical to that of Lemma \ref{lemma:S1Jump2}.

\end{proof}

\begin{theorem} \label{thm:Skf}
Fix an integer $k\geq1$, a non-decreasing function $f:\mathbb{N}\rightarrow\mathbb{N}$ such that $\forall_i f(i)\geq i$ and a real parameter $t>0$. Define the values $\{\lambda_j\},F,\{r_i\}$ by $\lambda_j=\frac{1}{3}\left(1+\sum_{l<f^{-1}(j)}\lambda_l\right)$ for every $j\geq0$, $F=\min\{F'\;|\;\sum_{l<F'}\lambda_l\geq k\}$, $r_0=1$ and $r_i=\left(2+\frac{8}{t}\right)r_{i-1}+\left(3+\frac{8}{t}\right)r_{f^{-1}(i-1)}+4$ for every $i>0$. Then, every undirected unweighted graph $G=(V,E)$ admits a spanner with
\begin{enumerate}
    \item Size $O\left(\max_i(f(i)-i+1)\cdot(Fn+n^{1+\frac{1}{k}})\right)$,
    \item Multiplicative stretch $t+3$,
    \item Additive stretch $4r_F$.
\end{enumerate}

\end{theorem}

\begin{proof}
Given $G,k,f$ build $S(k,f)$ on $G$. By Lemma \ref{lemma:S2ESize}, this spanner has the desired size, in expectation. 

Let $u,v\in V$ be a pair of vertices, and let $u=u_0,u_1,u_2,...,u_d=v$ be the shortest path between them. We use Lemma \ref{lemma:S2Jump2} to find a path between $u$ and $v$. Starting with $h=0$, find $l>0$ such that $\frac{2}{t}(r_{i-1}+r_{f^{-1}(i-1)})<d(u_h,u_l)\leq\frac{r_i-2r_{i-1}-3r_{f^{-1}(i-1)}}{4}$, where $i=score(u_h)$. If there is no such $l$, stop the process and denote $v'=u_h$. Otherwise, set $h\leftarrow l$, and continue in the same way.

This process creates a subsequence of $u_0,...,u_d$: $u=v_0,v_1,v_2,...,v_b=v'$, such that for every $q<b$ we have $d_S(v_q,v_{q+1})\leq(t+3)d(v_q,v_{q+1})$ (by Lemma \ref{lemma:S2Jump2}). For $v'=v_b$ we have $d(v',v)\leq\frac{2}{t}(r_{i-1}+r_{f^{-1}(i-1)})$, where $i=score(v')$. For this last segment, we get from Lemma \ref{lemma:S2Jump2} that $d_S(v',v)\leq3d(v',v)+4r_F$.

When summing over the entire path, we get:
\begin{eqnarray*}
d_S(u,v)&\leq&\sum_{q=0}^{b-1}(t+3)d(v_q,v_{q+1})+3d(v',v)+4r_F\\
&=&(t+3)d(u,v')+3d(v',v)+4r_F\\
&\leq&(t+3)d(u,v)+4r_F~.
\end{eqnarray*}

\end{proof}

\vspace{5mm}
\noindent{\textbf{Multiplicative Stretch $\boldsymbol{3+\epsilon}$}.}
For the function $f(i)=i$, we get in Theorem \ref{thm:Skf}, for every $j\geq0$,
\[\lambda_j=\frac{1}{3}(1+\sum_{l<j}\lambda_l)=\frac{1}{3}(1+\sum_{l<j-1}\lambda_l)+\frac{1}{3}\lambda_{j-1}=\lambda_{j-1}+\frac{1}{3}\lambda_{j-1}=\frac{4}{3}\lambda_{j-1}~.\]
Thus, $\lambda_j=\frac{1}{3}\cdot\left(\frac{4}{3}\right)^j$.

As a result, we also get $F=\lceil\log_{\frac{4}{3}}(k+1)\rceil$. Regarding the sequence $\{r_i\}$, we get for every $i>0$
\[r_i=\left(2+\frac{8}{t}\right)r_{i-1}+\left(3+\frac{8}{t}\right)r_{f^{-1}(i-1)}+4=\left(5+\frac{16}{t}\right)r_{i-1}+4~.\]
By induction, it is easy to prove that for every $i\geq0$,
\[r_i=\left(1+\frac{t}{t+4}\right)\left(5+\frac{16}{t}\right)^i-\frac{t}{t+4}~,\]
and thus $r_i\leq2\left(5+\frac{16}{t}\right)^i$, for every $i\geq0$. In particular,
\[r_F\leq2\left(5+\frac{16}{t}\right)^{\lceil\log_{\frac{4}{3}}(k+1)\rceil}=O(k^{\log_{4/3}(5+\frac{16}{t})})~.\]

In Theorem \ref{thm:Skf}, choose $t=\epsilon$. The resulting spanner has multiplicative stretch $3+\epsilon$, additive stretch $O(k^{\log_{4/3}(5+\frac{16}{\epsilon})})$ and size $O(n\log k+n^{1+\frac{1}{k}})$. Compared to the spanner from Section \ref{sec:RoundingFunctionsHopsetsSpanners}, that had stretch $3+\epsilon$, this spanner has a worse exponent for $k$ in the additive stretch (the additive stretch there was $O\left(k^{\log_2\left(3+\frac{8}{\epsilon}\right)}\right)$). However, the size of this spanner is significantly better than the spanner from Section \ref{sec:RoundingFunctionsHopsetsSpanners}, which was $O(k^{\log_2\left(3+\frac{8}{\epsilon}\right)}\cdot n^{1+\frac{1}{k}})$.


\subsection{Emulators}

Recall Lemma \ref{lemma:HJump2} from Section \ref{Stretch and Hopbound Analysis Method}. Given $u,v\in V$ and a shortest path $P$ between them, we assigned a \textit{score} for every vertex on $P$. Then, the score $i$ of such vertex $x$ implied an interval 
\[\mathcal{I}=\left[\frac{2}{t}(r_{i-1}+r_{f^{-1}(i-1)}),\frac{r_i-r_{i-1}}{2}-r_{f^{-1}(i-1)}\right]~,\] 
such that if $y$ is a vertex on $P$ with distance $d\in\mathcal{I}$ from $x$, then there is a $3$-hops detour in $H(k,f)$ between $x,y$ with weight at most $(t+3)d$. In case there is no vertex $y$ on $P$ with distance $d\in\mathcal{I}$ from $x$, we used an additional edge of $G$, besides the three edges of $H(k,f)$, to reach from $x$ to the first vertex $y$ with $d_G(x,y)>\frac{r_i-r_{i-1}}{2}-r_{f^{-1}(i-1)}$. These detours, with three or four edges, enabled us to find a low-hops path in $G\cup H(k,f)$ between $u,v$.

Emulator is a close structure to hopset, in that they both have a low-stretch path between every $u,v\in V$, while using edges that do not necessarily exist in $G$. Emulators, however, must not use edges of $G$ like hopsets do, but on the other hand, do not have a restriction on the number of hops that these low-stretch paths have. That means that we could use $H(k,f)$ as an emulator, if we knew that the detours mentioned above all have three hops in $H(k,f)$ (as opposed to $G\cup H(k,f)$). Recall that this happens when the starting vertex $x$ of the detour has a vertex $y$ further on $P$, with distance $d\in\mathcal{I}$. To ensure that this condition is satisfied, in an \textit{unweighted} graph, we modify the sequence $\{r_i\}$, so that
\[\frac{r_i-r_{i-1}}{2}-r_{f^{-1}(i-1)}\geq\frac{2}{t}(r_{i-1}+r_{f^{-1}(i-1)})+1~,\]
for every $i>0$, similarly to our approach in Sections \ref{NSSpanner} and \ref{SSpanner}. This increases the elements of this sequence by a factor of at most $2$, by Lemma \ref{lemma:RRelations} (with respect to the original sequence from Theorem \ref{thm:Hkf}). The length of the last detour, which does not necessarily have a stretch of $t+3$ with respect to the part of $P$ it skips, is considered as the additive stretch of the emulator, as in Sections \ref{NSSpanner} and \ref{SSpanner}.

As a result, we get the following theorem, regarding a unified construction of emulators. The proof is deferred to Appendix \ref{EmulatorsAppendix}.

\begin{theorem} \label{thm:Ekf}
Let $k>0$ be an integer, let $f:\mathbb{N}\rightarrow\mathbb{N}$ be a monotone non-decreasing function such that $\forall_i f(i)\geq i$, and let $\{\lambda_j\},F,\{r_i\}$ be parameters such that $\forall_j\lambda_j\leq1+\sum_{l<f^{-1}(j)}\lambda_l$, $\sum_{j<F}\lambda_j\geq k$, $r_0=1$ and $r_{i+1}=(1+\frac{4}{t})r_i+(2+\frac{4}{t})r_{f^{-1}(i)}+2$ for every $i\geq0$. Then, every undirected unweighted graph $G=(V,E)$ admits a $(t+3,4r_F)$-emulator with size $O(Fn+\max_i(f(i)-i+1)\cdot n^{1+1/k})$, simultaneously for every $t>0$.
\end{theorem}

The following corollary is obtained from Theorem \ref{thm:Ekf}, when using the function $f(i)=\left\lfloor\frac{i}{c}\right\rfloor\cdot c+c-1$.

\begin{corollary}[Corollary from Theorem \ref{thm:Ekf}] \label{cor:EfkForRounding}
Given two integer parameters $1<c,k$, every undirected unweighted graph $G=(V,E)$ admits a $\left(4c+3,O(k^{1+\frac{2}{\ln c}})\right)$-emulator with size $O(c\log_ck+c\cdot n^{1+\frac{1}{k}})$.
\end{corollary}

\begin{proof}

Given an undirected unweighted graph $G=(V,E)$, apply the emulator from Theorem \ref{thm:Ekf} on the graph $G$, with the function $f(i)=\lfloor\frac{i}{c}\rfloor\cdot c+c-1$ and $t=4c$. In Appendix \ref{appendix:FProperties}, we prove that $F=O(c\log_ck)$ and that $r'_F=O(k^{1+\frac{2}{\ln c}})$, for the sequence $\{r'_i\}$ that satisfies $r'_0=1$ and $r'_i=\left(1+\frac{4}{t}\right)r'_{i-1}+\left(2+\frac{4}{t}\right)r'_{f^{-1}(i-1)}$ for every $i>0$ (see the proof of Theorem \ref{thm:StrongHfk} for more details). By Lemma \ref{lemma:RRelations}, our sequence $\{r_i\}$, that satisfies $r_0=1$ and $r_i=\left(1+\frac{4}{t}\right)r_{i-1}+\left(2+\frac{4}{t}\right)r_{f^{-1}(i-1)}+2$, is at most twice as large as $\{r'_i\}$.

Thus, by Theorem \ref{thm:Ekf}, $G$ admits an emulator with multiplicative stretch $t+3=4c+3$, additive stretch $4r_F=O(k^{1+\frac{2}{\ln c}})$, and size 
\[O(Fn+\max_i(f(i)-i+1)\cdot n^{1+1/k})=O(c\log_ck\cdot n+c\cdot n^{1+\frac{1}{k}})~.\]

\end{proof}

\subsubsection{Special Cases} \label{sec:EmulatorSpecialCases}

In Section \ref{sec:HopsetSpecialCases} we substituted functions $f$ and values $c$, to achieve a variety of results for hopsets. Using Lemma \ref{lemma:RRelations}, we may conclude the same results for every regime of the multiplicative stretch as in Section \ref{sec:HopsetSpecialCases}, except for the additive stretch, which is at most twice the hopbound of the respective hopset.

\vspace{5mm}
\noindent{\textbf{Stretch $\boldsymbol{(1+\epsilon)}$}.}
Choose the function $f(i)=i$. The construction of the emulator from Theorem \ref{thm:Ekf}, which is identical to the hopset $H(k,f)$ from Section \ref{Generalized TZ Construction}, is essentially the same construction as in \cite{TZ06}. In \cite{TZ06}, and later in \cite{EGN19} and other works, it was proved that this construction serves as a $(1+\epsilon,\beta)$-emulator with size $O(n\log k+n^{1+\frac{1}{k}})$, where $\beta=O\left(\frac{\log k}{\epsilon}\right)^{\log k}$.

\vspace{5mm}
\noindent{\textbf{Stretch $\boldsymbol{(3+\epsilon)}$}.}
By Theorem \ref{thm:Ekf} with the function $f(i)=i$ and with $t=\epsilon$, every undirected unweighted graph has an emulator, that simultaneously for every $\epsilon>0$ has multiplicative stretch $3+\epsilon$, additive stretch $O\left(\frac{1}{\epsilon}\cdot k^{\log_2\left(3+\frac{8}{\epsilon}\right)}\right)$ and size $O(n\log k+n^{1+\frac{1}{k}})$.

\vspace{5mm}
\noindent{\textbf{Constant Stretch}.}
For a constant $c$, Corollary \ref{cor:EfkForRounding} gives an $\left(4c+3,O\left(k^{1+\frac{2}{\ln c}}\right)\right)$-emulator with size $O(n\log k+n^{1+\frac{1}{k}})$.

\vspace{5mm}
\noindent{\textbf{Stretch $\boldsymbol{O(k^\epsilon)}$}.}
In Corollary \ref{cor:EfkForRounding}, choose $c=\lceil k^\epsilon\rceil$ for some $0<\epsilon<1$. The result is an emulator with multiplicative stretch $4\lceil k^\epsilon\rceil+3=O(k^\epsilon)$, additive stretch
$O(e^{\frac{2}{\epsilon}}k)=O_\epsilon(k)$ and size $O(\epsilon^{-1}k^\epsilon\cdot n+k^\epsilon\cdot n^{1+\frac{1}{k}})=O_\epsilon(k^\epsilon\cdot n^{1+\frac{1}{k}})$.

\vspace{5mm}
\noindent{\textbf{Stretch $\boldsymbol{2k-1}$}.}
Choosing the constant function $f(i)=k-1$, the set $H(k,f)$ is identical to the emulator that is implicit in \cite{TZ01}. Thus, it has multiplicative stretch $2k-1$, additive stretch $0$ and size $O(kn^{1+\frac{1}{k}})$. We note that while these results cannot be obtained directly from Theorem \ref{thm:Ekf}, they can be obtained using more involved analysis, that is analogous to the proof of Theorem \ref{thm:StrongHfk}. We elaborate below on this approach.

In Section \ref{sec:RoundingFunctionsHopsets}, a modified analysis was given for the stretch and the hopbound of the hopset $H(k,f)$, for the function $f(i)=\left\lfloor\frac{i}{c}\right\rfloor\cdot c+c-1$. In particular, given $u,v\in V$ and a vertex $x$ on the $u-v$ shortest path $P$, instead of a $3$-hops detour in $H(k,f)$, from $x$ to some vertex $y$ on $P$ that has a certain distance from $x$, an alternative detour was given in the case that $score(x)\leq c$. The stretch and the number of hops of this alternative detour were better (stretch $2c-1$ instead of $4c+3$, and $2$ hops instead of $3$ or $4$), but the most crucial property of these detours was that they skip a large part of the path $P$, and thus reduces the hopbound of the hopset $H(k,f)$. In emulators, the number of hops is less significant. However, we still use this alternative detours, due to their additional handy property: for every $x$ on $P$ with $score(x)\leq c$, and for \textit{every} $y$ further on $P$, with $d_G(x,y)\leq\frac{r_c}{c}$, there is a detour from $x$ to $y$ in $H(k,f)$, with stretch $2c-1$. Note that here we do not require $d_G(x,y)$ to have some lower bound as in Lemma \ref{lemma:HJump}. This fact is then utilized in the detours that we use: if the starting vertex of every such detour has $score(x)\leq c$, then the total multiplicative stretch is $2c-1$, and the the total additive stretch is $0$.

Notice that when choosing $c=k$, i.e., when using the constant function $f(i)=k-1$, the score of every vertex is at most $c=k$, because the pivots $p_k(u)$ do not exist. Hence, we get a multiplicative stretch of $2k-1$, additive stretch of $0$, and size 
\[O(c\log_ck\cdot n+c\cdot n^{1+\frac{1}{k}})=O(k\cdot1\cdot n+k\cdot n^{1+\frac{1}{k}})=O(k\cdot n^{1+\frac{1}{k}})~.\]
This matches the emulator that is implicit in \cite{TZ01}.

\section{A Lower Bound for the Unified Algorithm} \label{A Lower Bound for the TZ Hopset}

In the previous sections, we showed that the general hopset $H(k,f)$ can essentially achieve all of the state-of-the-art results for hopsets. We now show that unfortunately, $H(k,f)$ cannot achieve significantly better results. In this section, all of our logarithms are of base $2$.

\subsection{Lower Bound Graph Construction}

We demonstrate our lower bound for the hopset $H(k,f)$ on an $n$-vertex graph $G(k,f,\alpha,n)$. Here, $k>0$ is an integer parameter, $f$ is a non-decreasing function $f:\mathbb{N}\rightarrow\mathbb{N}$ such that $f(i)\geq i$ for every $i$, $\alpha>0$ is some real parameter, and the construction is valid for infinitely many integers $n>0$.

For the construction to be well defined, we need to choose $n$ such that $n^\frac{1}{2k}\in\mathbb{N}$ and also $\log n\in\mathbb{N}$. In addition, we assume that $k\leq\frac{\log n}{4\log\log n}$. 
We define a graph $G(k,f,\alpha,n)$ as follows.

First, let $\{\lambda_j\}$ and $F$ be defined the same way as in Definition \ref{def:Hkf}:
\begin{enumerate}
    \item $\lambda_j=1+\sum_{l<f^{-1}(j)}\lambda_l$ (in particular, $\lambda_0=1$),
    \item $F=\min\{F'\;|\;\sum_{l<F'}\lambda_l\geq k\}$.
\end{enumerate}

Also, let $\{r_j\}$ be defined similarly to its definition in Theorem \ref{thm:Hkf}: $r_0=1$ and
\[\forall_{j\geq0}\;r_{j+1}=(1+\frac{1}{\alpha})r_j+(2+\frac{1}{\alpha})r_{f^{-1}(j)}~,\]
where here we chose $t=4\alpha$.

We define the {\em tower} $T(k,f,\alpha,n)$ to be the following weighted graph. $T(k,f,\alpha,n)$ consists of $F$ layers, indexed by $0$ to $F-1$. The $0$'th layer consists of only one vertex. For $0<i<F-1$, the $i$'th layer contains $\log n\cdot n^{\frac{1}{k}\sum_{l<i}\lambda_l}$ vertices, while the number of vertices in the $(F-1)$'th layer is
\[n^{1-\frac{1}{2k}}-1-\log n\sum_{i=1}^{F-2}n^{\frac{1}{k}\sum_{l<i}\lambda_l}~.\]
That is, the $(F-1)$'th layer completes the number of vertices in $T(k,f,\alpha,n)$ to $n^{1-\frac{1}{2k}}$ (we assume here that $n^{1-\frac{1}{2k}}$ is larger than the sum of the sizes of the previous layers. See Remark \ref{remark:TowerVertices} below for an explanation).

For each $i<F-1$, each vertex of the $i$'th layer is connected by an edge to every other vertex of the $i$'th layer, and to every vertex of the $(i+1)$'th layer. The weight of the edges within a layer is $1$, while the weight of the edges from the $i$'th layer to the $(i+1)$'th is $r_{i+1}-r_i$.

\begin{remark} \label{remark:TowerVertices}
Note that by our assumption that $k\leq\frac{\log n}{4\log\log n}$,
\[n^{\frac{1}{2k}}\geq n^{\frac{2\log\log n}{\log n}}=\log^2n>k\log n\geq F\log n~,\]
where the last inequality is due to the fact that $F\leq k$ (because $\forall_l\;\lambda_l\geq1$, so $\sum_{l<k}\lambda_l\geq k$). Equivalently, 
\begin{equation} \label{eq:WellDefined}
    F\log n\cdot n^{1-\frac{1}{k}}<n^{1-\frac{1}{2k}}~.
\end{equation}
Then,
\begin{eqnarray*}
1+\log n\sum_{i=1}^{F-2}n^{\frac{1}{k}\sum_{l<i}\lambda_l}&\leq&1+(F-1)\log n\cdot n^{\frac{1}{k}\sum_{l<F-2}\lambda_l}\\
&\leq&1+(F-1)\log n\cdot n^{\frac{1}{k}(k-1)}\\
&=&1+(F-1)\log n\cdot n^{1-\frac{1}{k}}\\
&\leq&F\log n\cdot n^{1-\frac{1}{k}}\stackrel{(\ref{eq:WellDefined})}{<}n^{1-\frac{1}{2k}}~,
\end{eqnarray*}
and therefore $T(k,f,\alpha,n)$ is well defined. Here, we used the fact that $\sum_{l<F-2}\lambda_l\leq k-1$. This fact could be proved by noticing that the definition of $\{\lambda_j\}$ implies that $\lambda_j$ is an integer greater or equal to $1$ for every $j$. Then, if $\sum_{l<F-2}\lambda_l>k-1$ then $\sum_{l<F-2}\lambda_l\geq k$, and we get
\[\sum_{l<F-1}\lambda_l=\sum_{l<F-2}\lambda_l+\lambda_{F-2}\geq k+1>k~,\]
in contradiction to $F$ being the minimal such that $\sum_{l<F}\lambda_l\geq k$.
\end{remark}

Now, the graph $G(k,f,\alpha,n)$ consists of a path $P$ of length $n^\frac{1}{2k}-1$, where each of the $n^\frac{1}{2k}$ vertices of $P$ is the $0$'th layer of a disjoint copy of the tower $T(k,f,\alpha,n)$. We denote by $L_{c,i}$ the $i$'th layer of the $c$'th copy of $T(k,f,\alpha,n)$.

The total number of vertices in $G(k,f,\alpha,n)$ is
\[n^\frac{1}{2k}\cdot|T(k,f,\alpha,n)|=n^\frac{1}{2k}\cdot n^{1-\frac{1}{2k}}=n~.\]
See figure \ref{fig:Gkfn} for a visualization of the graph $G(k,f,\alpha,n)$.

\begin{center}
\begin{figure}[ht]
    \centering
    \includegraphics[width=12cm, height=5cm]{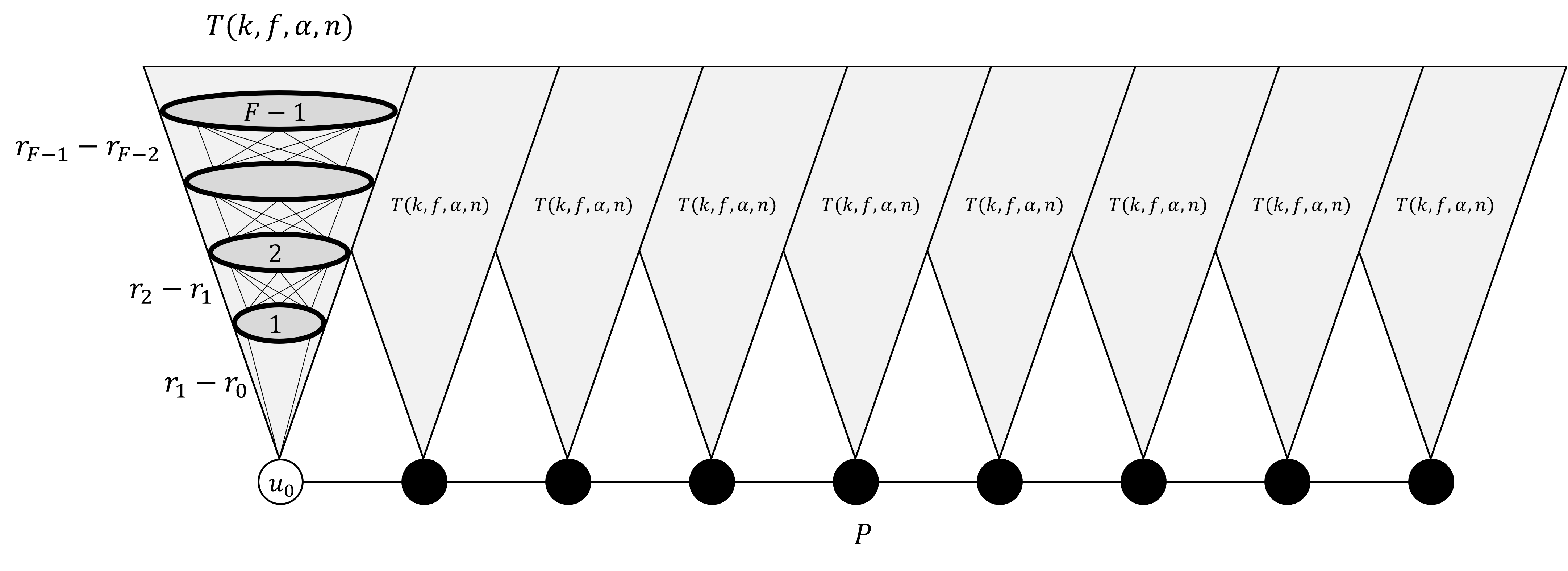}
    \caption{The graph $G(k,f,\alpha,n)$. $P$ contains $n^{\frac{1}{2k}}$ vertices, each of them is the $0$'th layer of a copy of $T(k,f,\alpha,n)$.}
    \label{fig:Gkfn}
\end{figure}
\end{center}

\subsection{Lower Bound Proof}
Suppose we construct the hopset $H(k,f)$ on the graph $G(k,f,\alpha,n)$. We prove the following lemma.

\begin{lemma} \label{lemma:HighProb}
With high probability, for each $c\in[1,n^\frac{1}{2k}]$ and each $j<F-1$, $L_{c,j}$ contains a vertex from $A_j$, but doesn't contain a vertex from $A_{j+1}$.
\end{lemma}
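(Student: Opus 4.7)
The plan is to prove the two claims---$L_{c,j}\cap A_j\neq\emptyset$ and $L_{c,j}\cap A_{j+1}=\emptyset$---separately by a first-moment calculation, and then union-bound over the at most $F\cdot n^{\frac{1}{2k}}$ pairs $(c,j)$ with $c\in[1,n^{\frac{1}{2k}}]$ and $j\in[0,F-2]$.

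For a fixed vertex $v$, the independent sampling procedure gives $\Pr[v\in A_j]=\prod_{l<j}n^{-\lambda_l/k}=n^{-\frac{1}{k}\sum_{l<j}\lambda_l}$, and combined with the prescribed layer size $|L_{c,j}|=\log_2(n)\cdot n^{\frac{1}{k}\sum_{l<j}\lambda_l}$ for $0<j<F-1$ this yields
$$\mathbb{E}[|L_{c,j}\cap A_j|]=\log_2(n),\qquad \mathbb{E}[|L_{c,j}\cap A_{j+1}|]=\log_2(n)\cdot n^{-\lambda_j/k}\;\leq\;\log_2(n)\cdot n^{-\frac{1}{k}},$$
where the last inequality uses $\lambda_j\geq 1$ (noted in the construction). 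For $0<j<F-1$, the indicators $\mathbf{1}[v\in A_j]$ for distinct $v\in L_{c,j}$ are mutually independent Bernoulli random variables, so $\prod_v(1-p_v)\leq e^{-\sum_v p_v}$ gives
$$\Pr[L_{c,j}\cap A_j=\emptyset]\;\leq\; e^{-\mathbb{E}[|L_{c,j}\cap A_j|]}\;=\;n^{-\frac{1}{\ln 2}}.$$
For $j=0$, the claim $L_{c,0}\cap A_0\neq\emptyset$ holds deterministically since $L_{c,0}\subseteq V=A_0$. Finally, Markov's inequality gives $\Pr[L_{c,j}\cap A_{j+1}\neq\emptyset]\leq \log_2(n)\cdot n^{-\frac{1}{k}}$.

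Union-bounding over the at most $F\cdot n^{\frac{1}{2k}}$ pairs yields a total failure probability of at most
$$F\cdot n^{\frac{1}{2k}}\cdot\left(n^{-\frac{1}{\ln 2}}+\log_2(n)\cdot n^{-\frac{1}{k}}\right),$$
which vanishes as $n\to\infty$ because $\tfrac{1}{2k}<\tfrac{1}{\ln 2}$ and $\tfrac{1}{2k}<\tfrac{1}{k}$ for every $k\geq 1$. No step here is genuinely difficult: the lemma is essentially a bookkeeping exercise once one verifies that the layer sizes have been tuned so that the expected intersection with $A_j$ is exactly $\log_2(n)$ (large enough for the Chernoff-type product bound to kill it) while the expected intersection with $A_{j+1}$ is $O(\log_2(n)\cdot n^{-1/k})$ (small enough for Markov). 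The only point requiring minor care is confirming that distinct vertices are sampled into each $A_l$ independently, so that the product bound $\prod_v(1-p_v)\leq e^{-\sum_v p_v}$ legitimately applies---this is immediate from the sampling definition of the $A_l$'s.
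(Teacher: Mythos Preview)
Your proof is correct and follows essentially the same approach as the paper: both arguments bound $\Pr[L_{c,j}\cap A_j=\emptyset]$ by the product bound $e^{-\log_2 n}$ and bound $\Pr[L_{c,j}\cap A_{j+1}\neq\emptyset]$ by Markov's inequality, using the tuned layer sizes. The only cosmetic difference is that you combine the per-pair failure probabilities via a union bound, whereas the paper multiplies the per-pair success probabilities directly (exploiting independence across the disjoint layers $L_{c,j}$); both give $o(1)$ failure probability, and your union-bound route is arguably slightly cleaner since it does not need to invoke independence across pairs.
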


\begin{proof}
For every $c\in[1,n^\frac{1}{2k}]$ and $j\in[0,F-2]$, denote by $E_{c,j}$ the event that $L_{c,j}\cap A_j\neq\emptyset$ and $L_{c,j}\cap A_{j+1}=\emptyset$. Since each vertex of the graph is chosen into the $A_j$'s independently, and since the sets $\{L_{c,j}\}_{c,j}$ are pairwise disjoint, we get that the events $\{E_{c,j}\}_{c,j}$ are pairwise independent. Also, notice that if $L_{c,j}\cap A_j=\emptyset$, then $L_{c,j}\cap A_{j+1}=\emptyset$ as well, since $A_{j+1}\subseteq A_j$. Thus,
\[\Pr[E_{c,j}]=\Pr[L_{c,j}\cap A_{j+1}=\emptyset]-\Pr[L_{c,j}\cap A_j=\emptyset]~.\]

Note that the size of $L_{c,j}$, for every $c$ and $j<F-1$ is at most $\log n\cdot n^{\frac{1}{k}\sum_{l<j}\lambda_l}$. The probability of each of these vertices to be in $A_j$ is exactly $n^{-\frac{1}{k}\sum_{l<j}\lambda_l}$, and therefore,
\[\Pr[L_{1,j}\cap A_j=\emptyset]=(1-n^{-\frac{1}{k}\sum_{l<j}\lambda_l})^{\log n\cdot n^{\frac{1}{k}\sum_{l<j}\lambda_l}}\leq e^{-\log n}<\frac{1}{n}~.\]
On the other hand,
\begin{eqnarray*}
    \Pr[L_{1,j}\cap A_{j+1}=\emptyset]&=&\left(1-n^{-\frac{1}{k}\sum_{l<j+1}\lambda_l}\right)^{\log n\cdot n^{\frac{1}{k}\sum_{l<j}\lambda_l}}\\
    &\geq&4^{-n^{-\frac{1}{k}\sum_{l<j+1}\lambda_l}\cdot\log n\cdot n^{\frac{1}{k}\sum_{l<j}\lambda_l}}\\
    &=&4^{-n^{-\frac{\lambda_{j+1}}{k}}\cdot\log n}
    =n^{-2n^{-\frac{\lambda_{j+1}}{k}}}
    \geq n^{-2n^{-\frac{1}{k}}}~.
\end{eqnarray*}

Here we used the inequality $1-x\geq4^{-x}$, that holds for every $x\in[0,\frac{1}{2}]$. 
For convenience, denote $x=n^{\frac{1}{k}}$. Then, we saw $\Pr[L_{1,j}\cap A_{j+1}=\emptyset]\geq n^{-\frac{2}{x}}$. Thus, the probability of the event $E_{c,j}$ is
\begin{eqnarray*}
    \Pr[E_{c,j}]&=&\Pr[L_{c,j}\cap A_{j+1}=\emptyset]-\Pr[L_{c,j}\cap A_j=\emptyset]\\
    &>&n^{-\frac{2}{x}}-\frac{1}{n}=n^{-\frac{2}{x}}\left(1-n^{-1+\frac{2}{x}}\right)\\
    &\geq&n^{-\frac{2}{x}}\cdot4^{-n^{-1+\frac{2}{x}}}~.
\end{eqnarray*}
We take $\log=\log_2$, and we get $\log\Pr[E_{c,j}]\geq-\frac{2\log n}{x}-2n^{-1+\frac{2}{x}}$. By independence, we observe that
\begin{eqnarray*}
    \log\Pr\left[\bigcap_{c,j}E_{c,j}\right]&=&\log\left(\prod_{c,j}\Pr[E_{c,j}]\right)
    \geq\sum_{c,j}\left(-\frac{2\log n}{x}-2n^{-1+\frac{2}{x}}\right)\\
    &=&(F-1)n^{\frac{1}{2k}}\left(-\frac{2\log n}{x}-2n^{-1+\frac{2}{x}}\right)\\
    &\geq&k\sqrt{x}\left(-\frac{2\log n}{x}-2n^{-1+\frac{2}{x}}\right)\\
    &=&-\frac{2k\log n}{\sqrt{x}}-2k\sqrt{x}n^{-1+\frac{2}{x}}~.
\end{eqnarray*}
We used here the fact that $F\leq k$ (because $\forall_l\;\lambda_l\geq1$, so $\sum_{l<k}\lambda_l\geq k$). Recall that we assumed that $k\leq\frac{\log n}{4\log\log n}$, and thus the absolute value of the first term can be bounded by
\begin{equation} \label{eq:FirstTermBound}
    \frac{2k\log n}{\sqrt{x}}=\frac{2k\log n}{n^{\frac{1}{2k}}}\leq\frac{2k\log n}{n^{\frac{2\log\log n}{\log n}}}=\frac{2k\log n}{\log^2n}=\frac{2k}{\log n}\leq\frac{1}{2\log\log n}\stackrel{n\rightarrow\infty}{\longrightarrow}0~.
\end{equation}

For the second term, we again take $\log=\log_2$, and get
\begin{eqnarray*}
    \log\left(2k\sqrt{x}n^{-1+\frac{2}{x}}\right)&=&1+\log k+\frac{\log x}{2}+\left(-1+\frac{2}{x}\right)\log n\\
    &\leq&1+\log\log n+\frac{\log n}{2k}-\log n+\frac{2\log n}{x}\\
    &=&\left(o(1)+\frac{1}{2k}-1\right)\log n+\frac{2\log n}{x}\\
    &\leq&\left(o(1)-\frac{1}{2}\right)\log n+\frac{2k\log n}{\sqrt{x}}~.
\end{eqnarray*}
Here, by Inequality (\ref{eq:FirstTermBound}), the second term converges to $0$, and therefore the whole expression converges to $-\infty$. Hence, $2k\sqrt{x}n^{-1+\frac{2}{x}}\stackrel{n\rightarrow\infty}{\longrightarrow}0$. Back to $\log\Pr\left[\bigcap_{c,j}E_{c,j}\right]=-\frac{2k\log n}{\sqrt{x}}-2k\sqrt{x}n^{-1+\frac{2}{x}}$, we saw that both of the terms converge to $0$, and thus $\Pr\left[\bigcap_{c,j}E_{c,j}\right]\stackrel{n\rightarrow\infty}{\longrightarrow}1$, as desired.

\end{proof}

Lemma \ref{lemma:HighProb} lets us understand what hop-edges $H(k,f)$ adds to $G(k,f,\alpha,n)$. We prove that each of these edges is either within the same copy of a tower, or has a large weight with respect to the number of towers it skips.

\begin{lemma} \label{lemma:2TCon}
Let $k,f,\alpha,n$ and assume that Lemma \ref{lemma:HighProb} is satisfied\footnote{That is, for every $c\in[1,n^{\frac{1}{2k}}]$ and $j<F-1$, assume we have $L_{c,j}\cap A_j\neq\emptyset$ and $L_{c,j}\cap A_{j+1}=\emptyset$.}. Suppose that $H(k,f)$ contains an edge $(x,y)$, where $x\in L_{c,i}$, $y\in L_{d,j}$, $c\neq d$ and $i,j<F-2$. Then:
\[w(x,y)>(\alpha+1)(|d-c|-2)~.\]
\end{lemma}

\begin{proof}

We first claim that for every vertex $z$ in a tower $T$ in the graph $G=G(k,f,\alpha,n)$, all of its pivots $\{p_i(z)\}_{i=0}^{F-2}$ are in the same tower $T$. This claim is trivial for $i\leq i(z)$ (see Section \ref{Generalized TZ Construction} for the definitions of $p_i(z),i(z)$ and other notations), as for such $i$'s, $p_i(z)=z$. For $i(z)<i\leq F-2$, by Lemma \ref{lemma:HighProb}, we may assume that the $i$'th layer of $T$ contains a vertex in $A_i$. This layer is clearly closer to $z$ than all the $i$'th layers in towers $T'\neq T$. Hence, by the definition of $p_i(z)$, we conclude that $p_i(z)$ is also in $T$.

Recall that $H(k,f)$ contains only two types of edges - edges of the form $(z,p_i(z))$ for some vertex $z$ and $i\in[0,F-1]$, and edges of the form $(z,z')$, where $z'\in B_{j'}(z)$ and $j'\in\left[i(z),f\left(i(z)\right)\right]$. Therefore, by our argument above, it cannot be that $(x,y)$ is of the first type, since $x,y$ are in different towers, and $x,y\notin A_{F-1}$ (because $x\in L_{c,i},\;y\in L_{d,j}$, and $i,j<F-2$, and we assumed that these layers do not contain any vertex from $A_{F-1}$). 

Hence, without loss of generality, it must be that $y\in B_{j'}(x)$, for some $j'\in[i(x),f(i(x))]$. By the definition of $B_{j'}(x)$, we conclude that (1) $y\in A_{j'}$, and that (2) $d(x,y)<d(x,p_{j'+1}(x))$. Since we assumed that there are no vertices of $A_{j+1}$ in $L_{d,j}$, (1) implies that $j'\leq j$. On the other hand, if $j'<j$, then $L_{d,j}$ contains a vertex $z\in A_j\subseteq A_{j'+1}$, and therefore $d(x,y)=d(x,z)\geq d(x,p_{j'+1}(x))$, in contradiction to (2). We therefore conclude that $j'=j$.

Observe that $j=j'\in[i(x),f(i(x))]$, and thus $f(i(x))\geq j$. By the definition of $f^{-1}$, this implies that $f^{-1}(j)\leq i(x)$. We also claim that $i(x)\leq i$, since we assumed that there are no vertices of $A_{i+1}$ in $L_{c,i}$. We finally get
\begin{equation} \label{eq:IndicesInequality}
    f^{-1}(j)\leq i(x)\leq i~.
\end{equation}

We now compute that distances on both sides of the inequality $d(x,y)<d(x,p_{j+1}(x))$. Since $p_{j+1}(x)$ is in the same tower as $x$, $d(x,p_{j+1}(x))$ is simply the distance between the layers $L_{c,i}$ and $L_{c,j+1}$, which is 
\[d(x,p_{j+1}(x))=\sum_{l=i}^{j}(r_{l+1}-r_l)=r_{j+1}-r_i~.\]
For the distance $d(x,y)$, observe that the shortest path from $x\in L_{c,i}$ to $y\in L_{d,j}$ starts by getting from $x$ to the $0$'th layer of its tower, then through the path $P$, reaching the $0$'th layer of $y$'s tower, and finally getting up through the layers to $y$. The weight of this path, which is the weight of the edge $(x,y)$ is
\[d(x,y)=w(x,y)=\sum_{l<i}(r_{l+1}-r_l)+|d-c|+\sum_{l<j}(r_{l+1}-r_l)=r_i+|d-c|+r_j-2~.\]

Substituting in the inequality $d(x,y)<d(x,p_{j+1}(x))$, we get
\begin{eqnarray*}
    |d-c|&<&r_{j+1}-r_i-r_i-r_j+2\stackrel{(\ref{eq:IndicesInequality})}{\leq} r_{j+1}-r_j-2r_{f^{-1}(j)}+2=\frac{1}{\alpha}(r_j+r_{f^{-1}(j)})+2\\
    &\Rightarrow&r_j+r_{f^{-1}(j)}>\alpha|d-c|-2\alpha~,
\end{eqnarray*}
where we used $\{r_j\}$'s definition and the fact that it is a non-decreasing sequence. By these same reasons we finally get
\begin{eqnarray*}
  w(x,y)&=&r_i+|d-c|+r_j-2\\
  &\geq&r_{f^{-1}(j)}+|d-c|+r_j-2\\
  &>&\alpha|d-c|-2\alpha+|d-c|-2\\
  &=&(\alpha+1)(|d-c|-2)~.
\end{eqnarray*}

\end{proof}

We now show that the graph $G(k,f,\alpha,n)$ demonstrates a lower bound for the hopset $H(k,f)$. The following lemma shows that the upper bound for the hopbound of $H(k,f)$ from Theorem \ref{thm:Hkf} is actually tight, up to a $\Theta(\alpha^2)$-factor.

\begin{lemma} \label{lemma:HkfLowerB}
Suppose that $H=H(k,f)$ is an $(\alpha,\beta)$-hopset for $G=G(k,f,\alpha,n)$. Then, if $n$ is large enough, with high probability:
\[\beta\geq\frac{r_{F-2}-1}{5\alpha^2}~.\]
\end{lemma}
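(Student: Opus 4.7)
My plan is to exhibit a pair $(x,y)$ of vertices in $G(k,f,\alpha,n)$ for which every $\alpha$-stretch path in $G\cup H$ must use at least $(r_{F-2}-1)/(5\alpha^2)$ hops. I would pick $x \in L_{1,F-2}$ and $y \in L_{D+1, F-2}$, i.e.\ in the top-1 layer of two towers separated by a suitably tuned tower-distance $D$ (proportional to $r_{F-2}/\alpha$), so that $d_G(x,y) = 2r_{F-2} + D - 2$ by equation (\ref{eq:hopWeight}) and both towers exist provided $n$ is large enough. The $\alpha$-stretch budget for any hopset path is then $\alpha(2r_{F-2}+D-2)$.

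Given an arbitrary $\alpha$-stretch $\beta$-hop path $\pi$ from $x$ to $y$, I would partition its edges into three groups: $G$-edges, \emph{inner} $H$-edges whose two endpoints lie in layers strictly below $F-2$, and \emph{outer} $H$-edges with at least one endpoint at layer $F-2$ or $F-1$. The key bound for inner cross-tower hops is \lemmaref{lemma:2TCon}, which says such an edge $(u,v)$ in towers $c_u \neq c_v$ satisfies $w(u,v) > (\alpha+1)(|c_u-c_v|-2)$, so its weight-per-tower-traversed is essentially $\alpha+1$. For outer hops, I would argue structurally: any vertex at layer $\ge F-2$ is at graph-distance at least $r_{F-2}$ from the path $P$, so any outer edge (especially one crossing towers) carries an unavoidable ``ascent cost'' of order $r_{F-2}$; combined with the total weight budget of $O(\alpha r_{F-2})$, this already caps the number of outer hops by $O(\alpha)$.

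The argument then closes by an accounting over tower-traversal: the path must span at least $D$ towers via some mixture of $P$-edges (weight $1$ per tower), inner between-tower hops (weight $\approx \alpha+1$ per tower), and outer hops (few, each of weight $\Omega(r_{F-2})$ but able to skip at most $O(r_{F-2})$ towers within the remaining budget). Pairing this with the total-weight constraint $\alpha(2r_{F-2}+D-2)$ and choosing $D = \Theta(r_{F-2}/\alpha)$, a short LP-style calculation forces $\beta \geq (r_{F-2}-1)/(5\alpha^2)$, with the $5\alpha^2$ denominator arising naturally from one $\alpha$ factor for the stretch budget and another for the weight-per-tower-traversed of inner hops via \lemmaref{lemma:2TCon}.

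The main obstacle will be handling outer hops cleanly, since \lemmaref{lemma:2TCon} does not directly apply to them; I would need to pin down carefully the ascent cost of reaching layer $F-2$ or $F-1$ from below (or from another tower), and show that after at most $O(\alpha)$ such hops the budget is exhausted, after which the remaining traversal must be done by the ``expensive'' inner-hop or $P$-edge mechanisms. Throughout, I would rely on \lemmaref{lemma:HighProb} so that pivots and bunches sit in the tower-layers claimed, ensuring the hopset edges behave as expected with high probability; this is precisely why the conclusion is ``with high probability.''
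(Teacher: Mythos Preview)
Your approach differs substantially from the paper's, and your outer-hop accounting has a real gap that the paper's argument neatly sidesteps.

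The paper places both endpoints on the base path $P$ (layer $0$), at distance $d(u_0,v)=\lceil 4\alpha\beta\rceil$ --- chosen as a function of the unknown $\beta$. It then runs a clean dichotomy: if every cross-tower edge of the $\beta$-hop path touches only layers $<F-2$, then \lemmaref{lemma:2TCon} applies uniformly (even to the $G$-edges on $P$, trivially) and summing gives $\alpha d(u_0,v) > (\alpha+1)(d(u_0,v)-2\beta)$, hence $d(u_0,v) < 4\alpha\beta$, contradicting the choice of $d(u_0,v)$. Therefore some edge of the path has an endpoint at layer $\geq F-2$; by \eqref{eq:hopWeight} that single edge already has weight $\geq r_{F-2}-1$, while the whole path has weight $\leq \alpha d(u_0,v)\leq 5\alpha^2\beta$, and the bound follows immediately. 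No separate analysis of outer hops is needed: their mere presence, combined with the tiny weight budget $\alpha\lceil 4\alpha\beta\rceil$, finishes the proof.

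Your choice of endpoints at layer $F-2$ inflates the distance to $\Theta(r_{F-2})$ and hence the weight budget to $\Theta(\alpha r_{F-2})$. Under that budget a single outer hop of weight $r_{F-2}+|\Delta|-1$ can legally cross $|\Delta|=\Theta(\alpha r_{F-2})$ towers, far more than your $D=\Theta(r_{F-2}/\alpha)$; so the ``$O(\alpha)$ outer hops, each skipping $O(r_{F-2})$ towers'' picture does not pin down the hop count at all. Worse, \lemmaref{lemma:HighProb} says nothing about $A_{F-1}$ or layer $F-1$, so it does not preclude $H$ from containing a direct $(F-2)$-bunch edge between your $x$ and $y$ (this happens, for instance, whenever $p_{F-1}(x)$ is far or $A_{F-1}=\emptyset$), giving a one-hop stretch-$1$ path. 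Your closing LP would therefore need genuinely new control over layer-$(F-2)$ hopset edges that neither \lemmaref{lemma:HighProb} nor \lemmaref{lemma:2TCon} provides. The paper's trick of keeping endpoints at layer $0$ and choosing the distance as a function of $\beta$ is precisely what makes the appearance of a single high-layer edge already budget-violating, eliminating all of these complications.
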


\begin{proof}
Denote by $u_0$ the first vertex of the path $P$ and let $v$ be some other vertex in this path. Let $P_{u_0,v}$ be the shortest path between $u_0,v$ in the graph $G\cup H$, that has at most $\beta$ edges. By our assumption that $H$ is an $(\alpha,\beta)$-hopset for $G$, we know that
\[w(P_{u_0,v})\leq\alpha d(u_0,v)~.\]

Denote by $(x_1,y_1),(x_2,y_2),...,(x_b,y_b)$ the edges of $P_{u_0,v}$ that connect two different towers in $G$. Note that $b\leq\beta$, since $P_{u_0,v}$ has at most $\beta$ edges. Let $c_i,d_i$ be the indices of the towers for which $x_i,y_i$ belong respectively. We first assume that all the vertices $\{x_i\},\{y_i\}$ are not in the last two layers of their respective towers. That is, for every $i$, $x_i\notin L_{c_i,F-2}\cup L_{c_i,F-1}$ and $y_i\notin L_{d_i,F-2}\cup L_{d_i,F-1}$. Then, by Lemma \ref{lemma:2TCon}, for every $i$ such that $(x_i,y_i)\in H$,
\[w(x_i,y_i)>(\alpha+1)(|d_i-c_i|-2)~.\]
For indices $i$ such that $(x_i,y_i)\notin H$, it must be that $(x_i,y_i)\in G$, and since the only edges of $G$ that connect two different towers are between adjacent towers, we have $|d_i-c_i|=1$. Therefore, in that case we have $w(x_i,y_i)=1>-\alpha-1=(\alpha+1)(|d_i-c_i|-2)$ as well.

Thus, we conclude that
\begin{eqnarray*}
  \alpha d(u_0,v)&\geq&w(P_{u_0,v})
  \geq\sum_{i=1}^b w(x_i,y_i)
  >(\alpha+1)\sum_{i=1}^b(|d_i-c_i|-2)\\
  &\geq&(\alpha+1)(d(u_0,v)-2b)
  \geq(\alpha+1)(d(u_0,v)-2\beta)~,
\end{eqnarray*}
where we used the fact that $\sum_{i=1}^b|d_i-c_i|\geq d(u_0,v)$, which is true since $\{(x_i,y_i)\}$ are all of the edges of $P_{u_0,v}$ that connect two different towers (other edges does not make any ``progress" towards $v$).

Therefore, we proved that for every vertex $v$ on the path $P$ we have $\alpha d(u_0,v)>(\alpha+1)(d(u_0,v)-2\beta)$ and thus $d(u_0,v)<2\beta(\alpha+1)\leq4\alpha\beta$. However, for large enough values of $n$, that satisfy $|P|=n^\frac{1}{2k}>\lceil4\alpha\beta\rceil$, there must be a vertex $v$ that does not satisfy this inequality. Hence, for such $n$ and $v$, our assumption that the vertices $\{x_i\},\{y_i\}$ are not contained in the last two layers of their respective towers, cannot hold.

In particular, let $v\in P$ be a vertex such that $d(u_0,v)=\lceil4\alpha\beta\rceil$. There must be some $i$ such that either $x_i\in L_{c_i,F-2}\cup L_{c_i,F-1}$ or $y_i\in L_{d_i,F-2}\cup L_{d_i,F-1}$. In both cases, since $c_i\neq d_i$, we have
\[w(x_i,y_i)\geq\sum_{l<F-2}(r_{l+1}-r_l)=r_{F-2}-1~.\]
The weight of this edge is a trivial lower bound for the weight of the path $P_{u_0,v}$, which also satisfies
\[w(P_{u_0,v})\leq\alpha\cdot d(u_0,v)=\alpha\lceil4\alpha\beta\rceil\leq5\alpha^2\beta~,\]
hence we finally get
\[\beta\geq\frac{r_{F-2}-1}{5\alpha^2}~.\]

\end{proof}

The following lemma lets us better understand the lower bound that was found in the previous lemma.

\begin{lemma} \label{lemma:rBound}
For every $k,f$ and $\alpha\geq2$:
\[r_{F-2}\geq\frac{1}{8}k^{1+\frac{1}{2\log\alpha}}~.\]
\end{lemma}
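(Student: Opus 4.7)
The plan is to prove by induction on $j$ the stronger polynomial bound
\[
r_j \;\geq\; (1+S_j)^{\gamma}, \qquad \text{where } \gamma := 1+\tfrac{1}{2\log\alpha} \text{ and } S_j := \sum_{l<j}\lambda_l.
\]
Assuming this, the lemma follows easily. By the minimality of $F$, $S_F \geq k+1$. The elementary bound $\lambda_l = 1 + S_{f^{-1}(l)} \leq 1+S_l$ yields $S_{l+1}+1 \leq 2(S_l+1)$, and iterating this twice gives $1+S_{F-2} \geq (k+2)/4 \geq k/4$. Since $\alpha \geq 2$ forces $\gamma \leq 3/2$ and hence $4^\gamma \leq 8$, we obtain $r_{F-2} \geq (k/4)^\gamma \geq k^\gamma/8$, which is exactly the claimed inequality.

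For the inductive step, I apply the recursion's lower bound $r_j \geq (1+1/\alpha)\, r_{j-1} + 2\, r_{f^{-1}(j-1)}$ (dropping the $1/\alpha$ from the coefficient $2+1/\alpha$) together with the inductive hypothesis for both terms. Setting $u := 1+S_{j-1}$ and $v := 1+S_{f^{-1}(j-1)}$, and using $1 + S_j = (1+S_{j-1}) + (1+S_{f^{-1}(j-1)}) = u + v$ (since $\lambda_{j-1} = 1 + S_{f^{-1}(j-1)} = v$), the inductive step reduces to proving $(1+1/\alpha)u^\gamma + 2v^\gamma \geq (u+v)^\gamma$ for all $u \geq v > 0$. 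Dividing by $u^\gamma$ and writing $z := v/u \in (0,1]$ (with $z \leq 1$ since $f^{-1}(j-1) \leq j-1$), this becomes the scalar inequality
\[
\phi(z) := (1+1/\alpha) + 2z^\gamma - (1+z)^\gamma \;\geq\; 0 \qquad \text{for } z \in [0,1].
\]

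The heart of the argument is verifying $\phi \geq 0$, and this is where the precise form of $\gamma$ is essential. The endpoints $\phi(0) = 1/\alpha > 0$ and $\phi(1) = 3+1/\alpha - 2^\gamma > 0$ (since $\gamma \leq 3/2$ gives $2^\gamma \leq 2\sqrt{2} < 3$) are immediate. The derivative $\phi'(z) = \gamma[2z^{\gamma-1}-(1+z)^{\gamma-1}]$ has a unique zero at $z^* = 1/(2^{1/(\gamma-1)}-1) = 1/(\alpha^2-1)$, using $2^{1/(\gamma-1)} = 2^{2\log\alpha} = \alpha^2$; this $z^*$ is the minimum of $\phi$ on $[0,1]$ because $\phi$ decreases on $[0,z^*]$ and increases on $[z^*,1]$. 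The crucial calculation at $z^*$ exploits the identity
\[
\alpha^{2\gamma} \;=\; \alpha^2 \cdot \alpha^{1/\log\alpha} \;=\; 2\alpha^2,
\]
which after algebraic simplification gives $\phi(z^*) = 1+1/\alpha - (1-1/\alpha^2)^{-1/(2\log\alpha)}$. A Bernoulli-type estimate on $(1-x)^{-t}$ then reduces nonnegativity to $2\alpha\log\alpha \geq 1+1/\alpha$, which holds comfortably for $\alpha \geq 2$. The main conceptual obstacle is locating the right exponent $\gamma$: both the boundary constraint from $z=1$ and the critical-point analysis align exactly at $\gamma = 1 + 1/(2\log\alpha)$ precisely because of the identity $\alpha^{1/\log\alpha} = 2$, so choosing any larger $\gamma$ would make the inductive step fail.
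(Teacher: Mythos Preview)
Your proof is correct and follows essentially the same route as the paper: both define $\Lambda_j=1+S_j$, show $\Lambda_{F-2}\ge (k+2)/4$ via $\Lambda_{j+1}\le 2\Lambda_j$, prove by induction that $r_j\ge \Lambda_j^{\gamma}$ with $\gamma=1+\tfrac{1}{2\log\alpha}$, and reduce the inductive step to the scalar inequality $a u^{\gamma}+b v^{\gamma}\ge (u+v)^{\gamma}$, verified by locating the unique minimum via $2^{1/(\gamma-1)}=\alpha^2$. The only cosmetic differences are that the paper keeps $b=2+\tfrac1\alpha$ and packages the calculus as a general lemma with the clean criterion $a^{-d}+b^{-d}\le 1$ (checked for $d=2\log\alpha$), whereas you drop the $\tfrac1\alpha$ to $b=2$ and carry out the minimization directly.
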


\begin{proof}
We write again the definitions of $\{\lambda_j\},F,\{r_j\}$:
\begin{enumerate}
    \item $\lambda_j=1+\sum_{l<f^{-1}(j)}\lambda_l$,
    \item $F=\min\{F'\;|\;\sum_{l<F'}\lambda_l\geq k\}$,
    \item $r_0=1$ and $r_{j+1}=(2+\frac{1}{\alpha})r_{j}+(2+\frac{1}{\alpha})r_{f^{-1}(j)}$.
\end{enumerate}

Define a new sequence $\{\Lambda_j\}$ as follows.
\[\Lambda_j=1+\sum_{l<j}\lambda_l~.\]

We can immediately observe by $\{\lambda_j\}$'s definition that $\Lambda_{f^{-1}(j)}=\lambda_j$, and therefore,
\[\Lambda_{j+1}=1+\sum_{l<j+1}\lambda_l=1+\sum_{l<j}\lambda_l+\lambda_j=\Lambda_j+\Lambda_{f^{-1}(j)}~.\]

By the definition of $F$, we get $\Lambda_{F}=1+\sum_{l<F}\lambda_l\geq k+1$. It will later be useful to notice that
\[\Lambda_{j+1}=\Lambda_j+\Lambda_{f^{-1}(j)}\leq2\Lambda_j\;\;\Rightarrow\;\;\Lambda_{F-2}\geq\frac{1}{2}\Lambda_{F-1}\geq\frac{1}{4}\Lambda_F\geq\frac{k+1}{4}~.\]

We now try to find some $d>0$ such that we can prove the inequality $r_j\geq\Lambda_j^{1+\frac{1}{d}}$ for every $j\geq0$. If we find such $d$, we will get:
\[r_{F-2}\geq\Lambda_{F-2}^{1+\frac{1}{d}}\geq\left(\frac{k+1}{4}\right)^{1+\frac{1}{d}}~,\]
which is similar to what we want to prove.

Suppose we prove this inequality by induction over $j$. For $j=0$, we have $r_0=1=1^{1+\frac{1}{d}}=\Lambda_0^{1+\frac{1}{d}}$, and for $j>0$:
\begin{eqnarray*}
  r_j&=&(1+\frac{1}{\alpha})r_{j-1}+(2+\frac{1}{\alpha})r_{f^{-1}(j-1)}\\
  &\geq&(1+\frac{1}{\alpha})\Lambda_{j-1}^{1+\frac{1}{d}}+(2+\frac{1}{\alpha})\Lambda_{f^{-1}(j-1)}^{1+\frac{1}{d}}\\
  &\stackrel{?}{\geq}&(\Lambda_{j-1}+\Lambda_{f^{-1}(j-1)})^{1+\frac{1}{d}}\\
  &=&\Lambda_j^{1+\frac{1}{d}}~,
\end{eqnarray*}

So we are only left to prove the inequality marked by ``?". In appendix \ref{rBoundAppendix} we prove that given $d>0$ such that $(1+\frac{1}{\alpha})^{-d}+(2+\frac{1}{\alpha})^{-d}\leq1$,
\[\forall_{x,y\geq0}\;(1+\frac{1}{\alpha})x^{1+\frac{1}{d}}+(2+\frac{1}{\alpha})y^{1+\frac{1}{d}}\geq(x+y)^{1+\frac{1}{d}}~.\]

Therefore, if we show that we can choose $d=2\log\alpha$, i.e. that $(1+\frac{1}{\alpha})^{-2\log\alpha}+(2+\frac{1}{\alpha})^{-2\log\alpha}\leq1$, then our proof by induction holds, and we have 
\[r_{F-2}\geq\Lambda_{F-2}^{1+\frac{1}{2\log\alpha}}\geq\left(\frac{k+1}{4}\right)^{1+\frac{1}{2\log\alpha}}\geq\frac{1}{8}k^{1+\frac{1}{2\log\alpha}}~,\]
where the last inequality holds for every $\alpha\geq2$.

For $\alpha\geq2$, we have $\alpha^2\geq2\alpha$, and then $2\alpha+1\leq\alpha^2+1<2\alpha^2<2\alpha^3$. Now, for $\alpha\geq2$,
\begin{eqnarray*}
  (1+\frac{1}{\alpha})^{-2\log\alpha}+(2+\frac{1}{\alpha})^{-2\log\alpha}&\leq&(1+\frac{1}{\alpha})^{-2}+2^{-2\log\alpha}\\
  &=&\frac{\alpha^2}{(\alpha+1)^2}+\frac{1}{\alpha^2}\\
  &=&\frac{\alpha^4+(\alpha+1)^2}{\alpha^2(\alpha+1)^2}\\
  &=&\frac{\alpha^4+\alpha^2+2\alpha+1}{\alpha^4+2\alpha^3+\alpha^2}\\
  &<&\frac{\alpha^4+\alpha^2+2\alpha^3}{\alpha^4+2\alpha^3+\alpha^2}\\
  &=&1
\end{eqnarray*}

\end{proof}

The following theorem concludes the previous two lemmas:

\begin{theorem} \label{thm:HkfLB}
For every choice of $k,f$ and $\alpha\geq2$, there is a graph $G$ such that if the hopset $H(k,f)$ is an $(\alpha,\beta)$-hopset for $G$, then with high probability,
\[\beta\geq\frac{1}{40\alpha^2}k^{1+\frac{1}{2\log\alpha}}-1~.\]
\end{theorem}

\begin{proof}
Fix some $k,f$, and let $n\in\{2^{2k\cdot a}\;|\;a\in\mathbb{N}\}$ be large enough such that the previous lemmas are true (particularly, choose $n$ such that Lemma \ref{lemma:HighProb} and Lemma \ref{lemma:HkfLowerB} are satisfied). When choosing $G=G(k,f,\alpha,n)$, we know by Lemma \ref{lemma:HkfLowerB} that $\beta\geq\frac{r_{F-2}-1}{5\alpha^2}$, and by Lemma \ref{lemma:rBound} that $r_{F-2}\geq\frac{1}{8}k^{1+\frac{1}{2\log\alpha}}$. Combining these two results, we get
\[\beta\geq\frac{1}{5\alpha^2}(r_{F-2}-1)\geq\frac{1}{5\alpha^2}(\frac{1}{8}k^{1+\frac{1}{2\log\alpha}}-1)\geq\frac{1}{40\alpha^2}k^{1+\frac{1}{2\log\alpha}}-1~.\]

\end{proof}

\begin{remark}
While we achieved a lower bound of $\beta=\Omega(\frac{1}{\alpha^2}k^{1+1/2\log\alpha})$ for the hopbound of $H(k,f)$, we believe that a lower bound of $\beta=\Omega(\frac{1}{\alpha}k^{1+1/2\log\alpha})$ may also be shown. If such lower bound is achieved, notice that it would be tight compared to the known upper bounds for various values of $\alpha$:
\begin{enumerate}
    \item For $\alpha=\Theta(k^\epsilon)$, for a constant $0<\epsilon\leq1$, and with a suitable choice of $f$, the hopset $H(k,f)$ has a hopbound of $O(k^{1-\epsilon})$ (see Section \ref{sec:HopsetSpecialCases}). The lower bound in this case would be
    \[\beta=\Omega(\frac{1}{\alpha}k^{1+1/2\log\alpha})=\Omega_\epsilon(k^{1-\epsilon})~.\]
    \item For $\alpha=O(1)$, and with a suitable choice of $f$, the hopset $H(k,f)$ has a hopbound of $O(k^{1+\frac{2}{\ln\alpha}})$ (see Section \ref{sec:HopsetSpecialCases}). The lower bound in this case would be
    \[\beta=\Omega(k^{1+1/2\log\alpha})~.\]
\end{enumerate}

For $\alpha=O(1)$, notice that our lower bound from Theorem \ref{thm:HkfLB} is also tight, since $O(\frac{1}{\alpha^2})=O(\frac{1}{\alpha})=O(1)$.
\end{remark}

\bibliographystyle{alpha}
\bibliography{hopset}

\newcommand{\etalchar}[1]{$^{#1}$}
\begin{thebibliography}{MPVX15}

\bibitem[ABP18]{ABP18}
Amir Abboud, Greg Bodwin, and Seth Pettie.
\newblock A hierarchy of lower bounds for sublinear additive spanners.
\newblock {\em SIAM Journal on Computing}, 47(6):2203--2236, 2018.

\bibitem[ADD{\etalchar{+}}93]{ADDJS93}
I.~Alth\"ofer, G.~Das, D.~Dobkin, D.~Joseph, and J.~Soares.
\newblock On sparse spanners of weighted graphs.
\newblock {\em Discrete Comput. Geom.}, 9:81--100, 1993.

\bibitem[Ber09]{B09}
Aaron Bernstein.
\newblock Fully dynamic (2+ $\varepsilon$) approximate all-pairs shortest paths
  with fast query and close to linear update time.
\newblock In {\em 2009 50th Annual IEEE Symposium on Foundations of Computer
  Science}, pages 693--702. IEEE, 2009.

\bibitem[BKMP10]{BKMP10}
Surender Baswana, Telikepalli Kavitha, Kurt Mehlhorn, and Seth Pettie.
\newblock Additive spanners and (alpha, beta)-spanners.
\newblock {\em {ACM} Transactions on Algorithms}, 7(1):5, 2010.

\bibitem[BLP20]{BP20}
Uri Ben-Levy and Merav Parter.
\newblock New ($\alpha$, $\beta$) spanners and hopsets.
\newblock In {\em Proceedings of the Fourteenth Annual ACM-SIAM Symposium on
  Discrete Algorithms}, pages 1695--1714. SIAM, 2020.

\bibitem[Coh00]{C00}
Edith Cohen.
\newblock Polylog-time and near-linear work approximation scheme for undirected
  shortest paths.
\newblock {\em J. {ACM}}, 47(1):132--166, 2000.

\bibitem[EGN22]{EGN19}
Michael Elkin, Yuval Gitlitz, and Ofer Neiman.
\newblock Almost shortest paths with near-additive error in weighted graphs.
\newblock In Artur Czumaj and Qin Xin, editors, {\em 18th Scandinavian
  Symposium and Workshops on Algorithm Theory, {SWAT} 2022, June 27-29, 2022,
  T{\'{o}}rshavn, Faroe Islands}, volume 227 of {\em LIPIcs}, pages
  23:1--23:22. Schloss Dagstuhl - Leibniz-Zentrum f{\"{u}}r Informatik, 2022.

\bibitem[Elk04]{E04}
M.~Elkin.
\newblock An unconditional lower bound on the time-approximation tradeoff of
  the minimum spanning tree problem.
\newblock In {\em Proc. of the 36th ACM Symp. on Theory of Comput. (STOC
  2004)}, pages 331--340, 2004.

\bibitem[EN19a]{EN16}
Michael Elkin and Ofer Neiman.
\newblock Hopsets with constant hopbound, and applications to approximate
  shortest paths.
\newblock {\em SIAM Journal on Computing}, 48(4):1436--1480, 2019.

\bibitem[EN19b]{EN19}
Michael Elkin and Ofer Neiman.
\newblock Linear-size hopsets with small hopbound, and constant-hopbound
  hopsets in rnc.
\newblock In {\em The 31st ACM Symposium on Parallelism in Algorithms and
  Architectures}, pages 333--341, 2019.

\bibitem[EP04]{EP04}
Michael Elkin and David Peleg.
\newblock (1+epsilon, beta)-spanner constructions for general graphs.
\newblock {\em {SIAM} J. Comput.}, 33(3):608--631, 2004.

\bibitem[FL16]{FL16}
Stephan Friedrichs and Christoph Lenzen.
\newblock Parallel metric tree embedding based on an algebraic view on
  moore-bellman-ford.
\newblock In {\em Proceedings of the 28th ACM Symposium on Parallelism in
  Algorithms and Architectures}, SPAA '16, pages 455--466, New York, NY, USA,
  2016. ACM.

\bibitem[HKN16]{HKN16}
Monika Henzinger, Sebastian Krinninger, and Danupon Nanongkai.
\newblock A deterministic almost-tight distributed algorithm for approximating
  single-source shortest paths.
\newblock In {\em Proceedings of the Forty-eighth Annual ACM Symposium on
  Theory of Computing}, STOC '16, pages 489--498, New York, NY, USA, 2016. ACM.

\bibitem[HP17]{HP17}
Shang-En Huang and Seth Pettie.
\newblock Thorup-zwick emulators are universally optimal hopsets.
\newblock {\em Information Processing Letters}, 142, 04 2017.

\bibitem[KS97]{KS97}
Philip~N. Klein and Sairam Subramanian.
\newblock A randomized parallel algorithm for single-source shortest paths.
\newblock {\em J. Algorithms}, 25(2):205--220, 1997.

\bibitem[LPS88]{LPS88}
Alexander Lubotzky, Ralph Phillips, and Peter Sarnak.
\newblock Ramanujan graphs.
\newblock {\em Combinatorica}, 8(3):261--277, 1988.

\bibitem[MPVX15]{MPVX15}
Gary~L. Miller, Richard Peng, Adrian Vladu, and Shen~Chen Xu.
\newblock Improved parallel algorithms for spanners and hopsets.
\newblock In {\em Proceedings of the 27th ACM Symposium on Parallelism in
  Algorithms and Architectures}, SPAA '15, pages 192--201, New York, NY, USA,
  2015. ACM.

\bibitem[NS22]{NS22}
Ofer Neiman and Idan Shabat.
\newblock A unified framework for hopsets.
\newblock In {\em 30th Annual European Symposium on Algorithms (ESA 2022)}.
  Schloss Dagstuhl-Leibniz-Zentrum f{\"u}r Informatik, 2022.

\bibitem[Pet08]{P08}
Seth Pettie.
\newblock Distributed algorithms for ultrasparse spanners and linear size
  skeletons.
\newblock In {\em Proceedings of the twenty-seventh ACM symposium on Principles
  of distributed computing}, pages 253--262, 2008.

\bibitem[Pet09]{Pet09}
Seth Pettie.
\newblock Low distortion spanners.
\newblock {\em ACM Transactions on Algorithms}, 6(1), 2009.

\bibitem[Pet10]{P10}
Seth Pettie.
\newblock Distributed algorithms for ultrasparse spanners and linear size
  skeletons.
\newblock {\em Distributed Computing}, 22(3):147--166, 2010.

\bibitem[PU89]{PU89}
David Peleg and Eli Upfal.
\newblock A trade-off between space and efficiency for routing tables.
\newblock {\em J. {ACM}}, 36(3):510--530, 1989.

\bibitem[TZ01]{TZ01}
M.~Thorup and U.~Zwick.
\newblock Approximate distance oracles.
\newblock In {\em Proc. of the 33rd ACM Symp. on Theory of Computing}, pages
  183--192, 2001.

\bibitem[TZ06]{TZ06}
M.~Thorup and U.~Zwick.
\newblock Spanners and emulators with sublinear distance errors.
\newblock In {\em Proc. of Symp. on Discr. Algorithms}, pages 802--809, 2006.

\end{thebibliography}

\newpage
\appendix

\section{Calculating the Parameters for Rounding Functions} \label{appendix:FProperties}

Given the function $f(i)=\lfloor\frac{i}{c}\rfloor\cdot c+c-1$, and a real parameter $t>0$, recall the definitions of the variables $\{\lambda_j\},F,\{r_i\}$:
\begin{enumerate}
    \item $\lambda_j=1+\sum_{l<f^{-1}(j)}\lambda_l$.
    \item $F=\min\{F'\;|\;\sum_{j<F'}\lambda_j\geq k\}$.
    \item $\forall_{i>0}\; r_i=(1+\frac{4}{t})r_{i-1}+(2+\frac{4}{t})r_{f^{-1}(i-1)}$ ($r_0$ is left as a parameter).
\end{enumerate}
Here, $f^{-1}(j)=\min\{i\;|\;f(i)\geq j\}$. In this appendix we prove that
\begin{enumerate}
    \item $f^{-1}(j)=\lfloor\frac{j}{c}\rfloor\cdot c$.
    \item $\lambda_{ac+b}=(c+1)^a$ for every integers $a\geq0$ and $b\in[0,c-1]$.
    \item $F\leq\lceil\log_{c+1}(k+1)\rceil\cdot c$.
    \item $r_{ac+b}=\left[\left(1+\frac{4}{t}\right)^b\left(\frac{t}{2}+2\right)-\left(\frac{t}{2}+1\right)\right]\left[\left(1+\frac{4}{t}\right)^c\left(\frac{t}{2}+2\right)-\left(\frac{t}{2}+1\right)\right]^ar_0$ for every integers $a\geq0$ and $b\in[0,c-1]$.
\end{enumerate}

Given some $j\geq0$, note that
\[f\left(\left\lfloor\frac{j}{c}\right\rfloor\cdot c\right)=\left\lfloor\frac{\left\lfloor\frac{j}{c}\right\rfloor\cdot c}{c}\right\rfloor\cdot c+c-1=\left\lfloor\frac{j}{c}\right\rfloor\cdot c+c-1>\left(\frac{j}{c}-1\right)\cdot c+c-1=j-1~,\]
and since $f(i)$ is an integer for every $i$, we deduce $f\left(\lfloor\frac{j}{c}\rfloor\cdot c\right)\geq j$. On the other hand,
\[f\left(\left\lfloor\frac{j}{c}\right\rfloor\cdot c-1\right)=\left\lfloor\frac{\left\lfloor\frac{j}{c}\right\rfloor\cdot c-1}{c}\right\rfloor\cdot c+c-1=\left(\frac{j}{c}-1\right)\cdot c+c-1=j-1~.\]
By the definition of $f^{-1}$, we conclude that $f^{-1}(j)=\left\lfloor\frac{j}{c}\right\rfloor\cdot c$.

Hence, in the definition of $\{\lambda_j\}$, we get $\lambda_j=1+\sum_{l<\lfloor\frac{j}{c}\rfloor\cdot c}\lambda_l$. From this equation it is easy to see that $\lambda_{ac+b}=\lambda_{ac}$ for every integer $a\geq0$ and $b\in[0,c-1]$. Therefore,
\[\lambda_{ac}=1+\sum_{l<\lfloor\frac{ac}{c}\rfloor\cdot c}\lambda_l=1+\sum_{l<ac}\lambda_l=1+\sum_{a'<a}\sum_{b'<c}\lambda_{a'c+b'}=1+\sum_{a'<a}c\lambda_{a'c}~.\]
As a result,
\[\lambda_{ac}=1+\sum_{a'<a}c\lambda_{a'c}=1+\sum_{a'<a-1}c\lambda_{a'c}+c\lambda_{(a-1)c}=\lambda_{(a-1)c}+c\lambda_{(a-1)c}=(c+1)\lambda_{(a-1)c}~.\]
We conclude that $\lambda_{ac+b}=\lambda_{ac}=(c+1)^a$ for all $a\geq0$ and $b\in[0,c-1]$.

Next, to bound $F$, notice that for every $a\geq0$ and $b\in[0,c-1]$,
\begin{eqnarray*}
\sum_{j<ac+b}\lambda_j&=&c\sum_{a'<a}\lambda_{a'c}+b\lambda_{ac}=c\sum_{a'<a}(c+1)^{a'}+b(c+1)^a=(c+1)^a-1+b(c+1)^a\\
&=&(b+1)(c+1)^a-1~.
\end{eqnarray*}
For $a=\lceil\log_{c+1}(k+1)\rceil$ and $b=0$, this expression is at least $k$, thus $F\leq\lceil\log_{c+1}(k+1)\rceil\cdot c$. 

Finally, we compute the sequence $\{r_i\}$. For every $a\geq0$ and $b\in[0,c-1]$, we have
\[r_{ac+b+1}=\left(1+\frac{4}{t}\right)r_{ac+b}+\left(2+\frac{4}{t}\right)r_{f^{-1}(ac+b)}=\left(1+\frac{4}{t}\right)r_{ac+b}+\left(2+\frac{4}{t}\right)r_{ac}~.\]
Fixing $a$, we prove by induction over $b\in[0,c-1]$ that $r_{ac+b}=\left[\left(1+\frac{4}{t}\right)^b\left(\frac{t}{2}+2\right)-\left(\frac{t}{2}+1\right)\right]r_{ac}$. For $b=0$, this is trivial. Assuming this holds for $b$, we have
\begin{eqnarray*}
r_{ac+b+1}&=&\left(1+\frac{4}{t}\right)r_{ac+b}+\left(2+\frac{4}{t}\right)r_{ac}\\
&=&\left(1+\frac{4}{t}\right)\left[\left(1+\frac{4}{t}\right)^b\left(\frac{t}{2}+2\right)-\left(\frac{t}{2}+1\right)\right]r_{ac}+\left(2+\frac{4}{t}\right)r_{ac}\\
&=&\left[\left(1+\frac{4}{t}\right)^{b+1}\left(\frac{t}{2}+2\right)-\left(\frac{t}{2}+1\right)-\frac{4}{t}\left(\frac{t}{2}+1\right)\right]r_{ac}+\left(2+\frac{4}{t}\right)r_{ac}\\
&=&\left[\left(1+\frac{4}{t}\right)^{b+1}\left(\frac{t}{2}+2\right)-\left(\frac{t}{2}+1\right)\right]r_{ac}-\left(2+\frac{4}{t}\right)+\left(2+\frac{4}{t}\right)r_{ac}\\
&=&\left[\left(1+\frac{4}{t}\right)^{b+1}\left(\frac{t}{2}+2\right)-\left(\frac{t}{2}+1\right)\right]r_{ac}~,
\end{eqnarray*}
which completes the inductive proof.

Notice that the the resulting equation $r_{ac+b+1}=\left[\left(1+\frac{4}{t}\right)^{b+1}\left(\frac{t}{2}+2\right)-\left(\frac{t}{2}+1\right)\right]r_{ac}$ is true for every $b\in[0,c-1]$. In particular, for $b=c-1$, we get
\[r_{(a+1)c}=r_{ac+c-1+1}=\left[\left(1+\frac{4}{t}\right)^c\left(\frac{t}{2}+2\right)-\left(\frac{t}{2}+1\right)\right]r_{ac}~.\]
That is, $r_{ac}=\left[\left(1+\frac{4}{t}\right)^c\left(\frac{t}{2}+2\right)-\left(\frac{t}{2}+1\right)\right]^ar_0$, for every $a\geq0$. We finally conclude that for every integers $a\geq0$ and $b\in[0,c-1]$
\begin{eqnarray*}
r_{ac+b}&=&\left[\left(1+\frac{4}{t}\right)^b\left(\frac{t}{2}+2\right)-\left(\frac{t}{2}+1\right)\right]r_{ac}\\
&=&\left[\left(1+\frac{4}{t}\right)^b\left(\frac{t}{2}+2\right)-\left(\frac{t}{2}+1\right)\right]\left[\left(1+\frac{4}{t}\right)^c\left(\frac{t}{2}+2\right)-\left(\frac{t}{2}+1\right)\right]^ar_0~.
\end{eqnarray*}

\newpage
\section{Completing the Proof of Lemma \ref{lemma:rBound}} \label{rBoundAppendix}
For completing the proof of the Lemma \ref{lemma:rBound}, we prove the following lemma:
\begin{lemma}
For any $a,b>1$, $d>0$, such that $a^{-d}+b^{-d}\leq1$:
\[\forall_{x,y\geq0}\;\;ax^{1+\frac{1}{d}}+by^{1+\frac{1}{d}}\geq(x+y)^{1+\frac{1}{d}}~.\]
\end{lemma}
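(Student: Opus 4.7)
The plan is to recognize this as a direct application of Hölder's inequality in its discrete two-term form. Set $p = 1 + \tfrac{1}{d}$ and let $q$ be its Hölder conjugate, i.e.\ $\tfrac{1}{p}+\tfrac{1}{q}=1$. A short computation gives $q = d+1$, and crucially $\tfrac{q}{p} = d$. This last identity is what will match the hypothesis $a^{-d}+b^{-d}\le 1$ to the factor that drops out of Hölder.

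First I would dispose of the boundary cases $x=0$ or $y=0$: in those cases the inequality reduces to $by^{p}\ge y^{p}$ or $ax^{p}\ge x^{p}$, both immediate from $a,b>1$. So assume $x,y>0$. Then I would write the trivial factorization
\[
x+y \;=\; \bigl(a^{1/p}x\bigr)\cdot a^{-1/p} \;+\; \bigl(b^{1/p}y\bigr)\cdot b^{-1/p},
\]
and apply Hölder to the two-term sum on the right with exponents $p$ and $q$:
\[
x+y \;\le\; \bigl(a x^{p} + b y^{p}\bigr)^{1/p}\cdot\bigl(a^{-q/p} + b^{-q/p}\bigr)^{1/q}.
\]
Substituting $q/p = d$ turns the second factor into $(a^{-d}+b^{-d})^{1/q}$, which by hypothesis is at most $1$. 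Raising both sides to the power $p = 1+\tfrac{1}{d}$ yields
\[
(x+y)^{1+\frac{1}{d}} \;\le\; a x^{1+\frac{1}{d}} + b y^{1+\frac{1}{d}},
\]
which is exactly the claim.

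There is no real obstacle here; the only thing to be careful about is identifying the correct conjugate exponent $q=d+1$ and checking the arithmetic $q/p=d$, since it is precisely this equality that makes the hypothesis $a^{-d}+b^{-d}\le 1$ the natural one. As a sanity check, equality in Hölder would force $a^{1/p}x / a^{-1/p}\cdot a^{-q}$ to be proportional to the analogous expression in $y$, so the inequality is tight (given $a^{-d}+b^{-d}=1$) when $(x,y)$ is aligned with the dual weights, confirming that the constant on the right cannot be improved and that the hypothesis on $a,b,d$ is the correct one.
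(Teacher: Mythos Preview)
Your proof is correct. It is also genuinely different from the paper's argument: the paper fixes $y$, sets $h(x)=ax^{1+1/d}+by^{1+1/d}-(x+y)^{1+1/d}$, differentiates to locate the unique minimum at $x=y/(a^d-1)$, and then evaluates $h$ there; the condition $a^{-d}+b^{-d}\le 1$ falls out as exactly the requirement that this minimum value be nonnegative. Your route via H\"older with exponents $p=1+\tfrac{1}{d}$ and $q=d+1$ is shorter and more conceptual --- it explains \emph{why} the hypothesis takes the form it does, since $q/p=d$ is forced by conjugacy --- whereas the paper's calculus computation is more elementary and self-contained, not relying on any named inequality. One small remark: the expression in your final ``sanity check'' sentence is garbled (the fragment $a^{1/p}x / a^{-1/p}\cdot a^{-q}$ does not parse), but that aside is inessential to the argument and can simply be dropped or rewritten as the standard H\"older equality condition $(a^{1/p}x)^p/(a^{-1/p})^q = (b^{1/p}y)^p/(b^{-1/p})^q$.
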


\begin{proof}
For a fixed $y\geq0$, define a function:
\[h(x)=ax^{1+\frac{1}{d}}+by^{1+\frac{1}{d}}-(x+y)^{1+\frac{1}{d}}~.\]
We want to show that for every $x\geq0$, $h(x)\geq0$.

Derive $h$:
\[h'(x)=a(1+\frac{1}{d})x^{\frac{1}{d}}-(1+\frac{1}{d})(x+y)^{\frac{1}{d}}=(1+\frac{1}{d})x^{\frac{1}{d}}(a-(1+\frac{y}{x})^{\frac{1}{d}})~.\]
In $[0,\infty)$, the derivative is positive when $a>(1+\frac{y}{x})^{\frac{1}{d}}$, i.e. $x>\frac{y}{a^d-1}$, negative when $0<x<\frac{y}{a^d-1}$, and equals zero when $x=\frac{y}{a^d-1}$. That means that $x=\frac{y}{a^d-1}$ is the global minimum of $h$ in $[0,\infty)$, i.e. $\forall_{x\geq0}\;h(x)\geq h(\frac{y}{a^d-1})$.

Compute $h(\frac{y}{a^d-1})$:
\begin{eqnarray*}
  h(\frac{y}{a^d-1})&=&\frac{a}{(a^d-1)^{1+\frac{1}{d}}}y^{1+\frac{1}{d}}+by^{1+\frac{1}{d}}-(\frac{1}{a^d-1}+1)^{1+\frac{1}{d}}y^{1+\frac{1}{d}}\\
  &=&\left[\frac{a}{(a^d-1)^{1+\frac{1}{d}}}+b-(\frac{a^d}{a^d-1})^{1+\frac{1}{d}}\right]y^{1+\frac{1}{d}}\\
  &=&\left[b+\frac{a-a^d\cdot a}{(a^d-1)^{1+\frac{1}{d}}}\right]y^{1+\frac{1}{d}}\\
  &=&\left[b-a(a^d-1)^{-\frac{1}{d}}\right]y^{1+\frac{1}{d}}
\end{eqnarray*}

So $h(\frac{y}{a^d-1})\geq0$ iff
\[a(a^d-1)^{-\frac{1}{d}}\leq b\;\;\iff\;\;\frac{a^d}{a^d-1}\leq b^d\;\;\iff\;\;a^d+b^d\leq a^db^d\;\;\iff\;\;a^{-d}+b^{-d}\leq1~.\]
\end{proof}

\newpage
\section{A Unified Construction of Emulators} \label{EmulatorsAppendix}

In this appendix we repeat the proof of Theorem \ref{thm:Hkf}, with the necessary modifications such that we can obtain Theorem \ref{thm:Ekf}.


\begin{proof}[proof of Theorem \ref{thm:Ekf}]

Given $k$ and the function $f$, consider the hopset $H=H(k,f)$ as an emulator. By Lemma \ref{lemma:HkfESize}, the expected size of $H(k,f)$ is $O(Fn+\max_i(f(i)-i+1)\cdot n^{1+1/k})$, as desired. By Lemma \ref{lemma:HJump}, for every $u,u'\in V$ such that $score(u)=i$ and $d_G(u,u')\leq\frac{r_i-r_{i-1}}{2}-r_{f^{-1}(i-1)}$, we have
\[d_H^{(3)}(u,u')\leq3d_G(u,u')+2(r_{i-1}+r_{f^{-1}(i-1)})~.\]
If also $\frac{2}{t}(r_{i-1}+r_{f^{-1}(i-1)})\leq d_G(u,u')$, then $d_H^{(3)}(u,u')\leq(t+3)d_G(u,u')$.

Now, let $u,v\in V$ be a pair of vertices, and let $u=u_0,u_1,u_2,...,u_d=v$ be the shortest path between them. Starting with $h=0$, find $l>h$ such that $\frac{2}{t}(r_{i-1}+r_{f^{-1}(i-1)})<d_G(u_h,u_l)\leq\frac{r_i-r_{i-1}}{2}-r_{f^{-1}(i-1)}$, where $i=score(u_h)$. If there is no such $l$, stop the process and denote $v'=u_h$. Otherwise, set $h\leftarrow l$, and continue in the same way.

This process creates a subsequence of $u_0,...,u_d$: $u=v_0,v_1,v_2,...,v_b=v'$, such that for every $q<b$ we have $d^{(3)}_H(v_q,v_{q+1})\leq(t+3)d(v_q,v_{q+1})$ (by Lemma \ref{lemma:HJump}). For $v'=v_b$ we claim that $d_G(v',v)\leq\frac{2}{t}(r_{i-1}+r_{f^{-1}(i-1)})$, where $i=score(v')$. Seeking contradiction, assume that $d_G(v',v)>\frac{2}{t}(r_{i-1}+r_{f^{-1}(i-1)})$, and let $u_s$ be the first vertex on the sub-path between $v'$ and $v$ such that $d_G(v',u_s)>\frac{2}{t}(r_{i-1}+r_{f^{-1}(i-1)})$. By the definition of $u_s$, we get 
\[d_G(v',u_{s-1})\leq\frac{2}{t}(r_{i-1}+r_{f^{-1}(i-1)})=\frac{r_i-r_{i-1}}{2}-r_{f^{-1}(i-1)}-1~,\] 
where the last equality is due to the definition of the sequence $\{r_i\}$. But recall that the graph $G$ is unweighted, and thus
$d_G(v',u_s)\leq d_G(v',u_{s-1})+1\leq\frac{r_i-r_{i-1}}{2}-r_{f^{-1}(i-1)}$. We conclude that
\[\frac{2}{t}(r_{i-1}+r_{f^{-1}(i-1)})<d_G(u_h,u_s)\leq\frac{r_i-r_{i-1}}{2}-r_{f^{-1}(i-1)}~,\]
in contradiction to the assumption that there is no such vertex $u_s$ that satisfies this condition.

Hence, we conclude that indeed $d_G(v',v)\leq\frac{2}{t}(r_{i-1}+r_{f^{-1}(i-1)})$, and by Lemma \ref{lemma:HJump}, $d^{(3)}_H(v',v)\leq3d_G(v',v)+4r_F$.

When summing over the entire path, we get
\begin{eqnarray*}
d_H(u,v)&\leq&\sum_{q=0}^{b-1}(t+3)d_G(v_q,v_{q+1})+3d_G(v',v)+4r_F\\
&=&(t+3)d_G(u,v')+3d_G(v',v)+4r_F\\
&\leq&(t+3)d_G(u,v)+4r_F~.
\end{eqnarray*}
This completes the proof that $H=H(k,f)$ is a $(t+3,4r_F)$-emulator.

\end{proof}

\end{document}